\documentclass[preprint,11pt]{elsarticle}
\usepackage{amsmath,amsthm,amsfonts,amssymb,bm,graphicx,algorithm,hyperref,cleveref,subcaption,booktabs}
\usepackage[left=20mm,right=20mm,top=20mm,bottom=20mm]{geometry}
\usepackage[noend]{algorithmic}

\newtheorem{theorem}{Theorem}[section]
\newtheorem{proposition}{Proposition}[section]

\newtheorem{remark}{Remark}[section]
\newtheorem{corollary}{Corollary}[section]
\usepackage{xcolor}

\numberwithin{equation}{section}

\crefname{equation}{}{}
\crefname{algorithm}{Algorithm}{}
\crefname{theorem}{Theorem}{}
\crefname{remark}{Remark}{}
\crefname{figure}{Figure}{}

\begin{document}

\title{Frozen Gaussian sampling algorithms for simulating Markovian open quantum systems in the semiclassical regime}

\author[1]{Limin Xu}
\ead{xulimin@westlake.edu.cn}

\author[2]{Zhen Huang}
\ead{hertz@math.berkeley.edu}

\author[3]{Zhennan Zhou\corref{cor1}}
\ead{zhouzhennan@westlake.edu.cn}

\cortext[cor1]{Corresponding author}

\address[1]{Institute for Theoretical Sciences, Westlake University, Hangzhou, 310030, China}
\address[2]{Department of Mathematics, University of California, Berkeley, California 94720 USA}
\address[3]{Institute for Theoretical Sciences, Westlake University, Hangzhou, 310030, China}

\begin{abstract}
Simulating Markovian open quantum systems in the semiclassical regime poses a grand challenge for computational physics, as the highly oscillatory nature of the dynamics imposes prohibitive resolution requirements on traditional grid-based methods. To overcome this barrier, this paper introduces an efficient Frozen Gaussian Sampling (FGS) algorithm based on the Wigner-Fokker-Planck phase-space formulation. The proposed algorithm exhibits two transformative advantages. First, for the computation of physical observables, its sampling error is independent of the semiclassical parameter $\varepsilon$, thus fundamentally breaking the prohibitive computational scaling faced by grid methods in the semiclassical limit. Second, its mesh-free nature entirely eliminates the boundary-induced instabilities that constrain long-time grid-based simulations. Leveraging these capabilities, the FGS algorithm serves as a powerful investigatory tool for exploring the long-time behavior of open quantum systems. Specifically, we provide compelling numerical evidence for the existence of steady states in strongly non-harmonic potentials—a regime where rigorous analytical results are currently lacking.
\end{abstract}

\begin{keyword} 
Open quantum systems \sep Wigner-Fokker-Planck equation \sep Semiclassical regime \sep Monte Carlo \sep Wavepacket dynamics 
\end{keyword}

\maketitle

\section{Introduction}
\label{sec:introduction}
In the foundational narrative of quantum mechanics, physical systems are often idealized as closed entities, evolving unitarily in isolation from the universe. Reality, however, is far more intricate; most realistic scenarios involve quantum systems that inevitably interact with an external environment, leading to complex, non-unitary phenomena such as dissipation and decoherence. Such systems, known as open quantum systems (OQS), have garnered significant interest in recent years, a surge driven by experimental advancements that enable the controlled engineering of dissipation channels \cite{diehl2008quantum}. These developments have not only deepened our fundamental understanding but also led to the observation of novel out-of-equilibrium behaviors not typically seen in their closed counterparts \cite{breuer2002theory}.

The cornerstone for describing the dynamics of Markovian open systems is the Gorini-Kossakowski-Sudarshan-Lindblad (GKSL) master equation, commonly referred to as the Lindblad equation \cite{gorini1976completely,lindblad1976generators}. Its robust physical and mathematical foundations have established it as an indispensable tool across a vast spectrum of disciplines, including chemical physics \cite{cao2017lindblad,wang2020combining}, condensed matter physics \cite{buvca2012note,lotem2020renormalized}, and quantum computing \cite{chen2023quantum,chen2025randomized}. Despite its ubiquity, a formidable challenge emerges when attempting to simulate these dynamics in the semiclassical regime ($\varepsilon \ll 1$). In this limit, the quantum state becomes highly oscillatory in both space and time on a scale of $O(\varepsilon)$, imposing prohibitive resolution requirements on traditional numerical methods. Approaches like finite differences or finite elements, for instance, would demand spatial discretization with a mesh size scaling as $O(1/\varepsilon^{2m})$ (where $m$ is the physical dimension), rendering such simulations computationally intractable as $\varepsilon\rightarrow 0$. Overcoming this computational barrier is thus a critical prerequisite for accurately modeling a wide array of important quantum phenomena.

A promising route to circumvent this computational barrier lies in shifting the perspective from Hilbert space to phase space. An equivalent formulation of the Lindblad dynamics is provided by the Wigner-Fokker-Planck (WFP) equation, obtained via the Wigner transformation of the density matrix:
\begin{equation}
    W(\bm{x},\bm{\xi},t)=\frac{1}{(\pi\varepsilon)^n}\int\rho(\bm{x}+\bm{y},\bm{x}-\bm{y}, t) e^{-\frac{2\mathrm{i}}{\varepsilon}\bm{y}\cdot\bm{\xi}}\mathrm{d}\bm{y}.
    \label{eq:Wigner_transform_intro}
\end{equation}
This phase space framework is particularly well-suited for semiclassical analysis, as it offers a direct and intuitive connection to classical mechanics, formally reducing to a classical Fokker-Planck equation in the $\varepsilon \to 0$ limit \cite{graefe2018lindblad}. More profoundly, the rationale for applying wavepacket-based methods to open systems is substantiated by recent breakthroughs in the fundamental understanding of their semiclassical dynamics. A striking feature of OQS is that their Ehrenfest time—the timescale for which quantum evolution can be approximated by classical trajectories—is significantly longer than that of their closed counterparts. This remarkable phenomenon has been recently proven rigorously \cite{hernandez2025classical,galkowski2025classical,li2024long} and verified numerically \cite{galkowski2025classical}. Crucially, the mathematical strategy underpinning these proofs involves decomposing the initial state into a sum of Gaussian states and analyzing their subsequent evolution. This theoretical insight strongly suggests that numerical methods based on Gaussian wavepackets, such as the Frozen Gaussian Approximation (FGA) and its sampling variants, which have proven highly successful for the Schr\"odinger equation \cite{lu2018frozen,huang2023efficient} and high-frequency wave propagation \cite{lu2012convergence}, represent a natural framework for simulating OQS.

Motivated by this potent theoretical direction, this paper presents the formal development and implementation of a Frozen Gaussian Sampling (FGS) algorithm, specifically tailored to solve the Wigner-Fokker-Planck equation in the semiclassical regime. Our contribution extends beyond a mere application of existing semiclassical ideas; we establish the complete theoretical foundation required to accurately propagate Gaussian wavepackets in an open quantum system. Central to our work is the rigorous derivation of the full system of ordinary differential equations that governs the evolution of each wavepacket's key parameters: the trajectory of its center in phase space, the dynamic reshaping of its covariance matrix, and the non-trivial evolution of its amplitude under dissipative influence. Furthermore, to ensure the validity and stability of the entire framework, we provide a crucial mathematical proof demonstrating that the covariance matrix remains positive definite throughout the time evolution. This pivotal result guarantees that the Gaussian ansatz is always well-defined and physically meaningful, forming the bedrock of a robust and reliable numerical algorithm.

The algorithm grounded in this rigorous framework possesses transformative advantages over traditional grid-based approaches. First, it fundamentally breaks the unfavorable scaling relationship between computational cost and the semiclassical parameter. For the computation of physical observables, its sampling error is demonstrated to be nearly independent of $\varepsilon$, thus overcoming the severe resolution requirements that render grid methods prohibitively expensive in the deep semiclassical limit. Second, as a mesh-free method that evolves an ensemble of trajectories in an unbounded phase space, it is inherently robust for long-time simulations. This completely avoids the boundary-induced instabilities and spurious artifacts that inevitably constrain and corrupt fixed-grid simulations as the Wigner function expands and explores a large domain. Empowered by this combination of efficiency and robustness, our FGS algorithm serves not just as a numerical solver, but as a powerful investigatory tool for exploring complex physical phenomena. In this work, we leverage its capabilities to provide compelling numerical evidence for the existence of a unique steady state in strongly non-harmonic potentials—a regime where a rigorous analytical proof remains an open problem.

The remainder of this article is organized as follows. In Section \ref{sec:problem_setup}, we establish the theoretical background by detailing the Wigner-Fokker-Planck equation and reviewing the wavepacket dynamics for closed systems. Section \ref{sec:wavepacket_dynamics} presents our core theoretical results, including the derivation of the wavepacket dynamics for the WFP equation and the crucial proof that the covariance matrix remains positive definite, a property necessary for the algorithm to be well defined. The complete algorithm is constructed in Section \ref{sec:algorithm}, followed by a formal analysis of its sampling error. In Section \ref{sec:experiments}, an extensive set of numerical experiments is presented to validate the algorithm's performance and to apply it to the physical phenomena mentioned above. Finally, Section \ref{sec:conclusions} concludes the paper with a summary of our findings and a discussion of future research directions.

\section{Problem setup}\label{sec:problem_setup}
\subsection{Lindblad operator and the Wigner Fokker-Planck equation}
We consider a quantum system in an $n$-dimensional configuration space with Hilbert space $\mathcal{H} = L^2(\mathbb{R}^n)$, where the position vector is denoted by $\bm{x}=(x_1,x_2,\cdots,x_n)^{T}$. The evolution of the density matrix $\hat{\rho}$ for such a system is described by the Gorini-Kossakowski-Sudarshan-Lindblad (GKSL) equation:
\begin{equation}
    \begin{aligned}
        \partial_t\hat\rho = -\frac{\mathrm i}{\varepsilon} \left[\hat H,\hat\rho\right] + \frac{1}{\varepsilon}\sum_{j\in J} \left(\hat L_j\hat\rho\hat L_j^\dagger - \frac 1 2\left\{\hat L_j^\dagger\hat L_j,\hat\rho\right\}\right).
    \end{aligned} 
    \label{eq: Lindblad equation in operator}
\end{equation}
Here, the parameter $\varepsilon$ is the rescaled Planck constant, The operator $\hat{H}$ is the Lindblad corrected Hamiltonian, which consists of the system Hamiltonian $\hat{H}_{\text{S}}$ and a Lamb shift term $\hat{H}_{\text{lamb}}$ describing environmental renormalization effects \cite{manzano2020short}:
\begin{equation}
    \hat{H} = \hat{H}_{\text{S}} + \hat{H}_{\text{lamb}}, \quad 
    \hat{H}_{\text{S}} = -\frac{\varepsilon^2}{2}\Delta + V(\bm{x}), \quad 
    \hat{H}_{\text{lamb}} = \sum_{k=1}^n \frac{\mu_k}{2}\{\hat{x}_k, \hat{p}_k\}, \quad \mu_k \in \mathbb{R}.
\end{equation}
Here, $V(\bm{x})$ is the external potential, $\mu_k$ are the renormalization strengths, and $\hat{x}_k$ and $\hat{p}_k = -\mathrm{i}\varepsilon \partial_{x_k}$ are the position and momentum operators along the $k$-th direction ($k=1,\cdots,n$), respectively. 

The operators $\{\hat{L}_j\}$ are the jump operators that model the system-bath coupling, and the coefficients $\gamma_j > 0$ are the corresponding coupling strengths. For a detailed physical background on these terms, we refer the reader to \cite{dubois2021semi,graefe2018lindblad}. A physically important case, which we adopt in this work, is when the jump operators are linear combinations of the position and momentum operators:
\begin{equation}
\hat L_{k\nu} = a_{k\nu} \hat{x}_k + b_{k\nu} \hat{p}_k,\quad 
k = 1,\cdots,n, \quad \nu= 1,\cdots, n_{\nu},
\end{equation}	
where $a_{k\nu},b_{k\nu}\in\mathbb{C}$. The index set $J$ for jump operators is thus $J = \{(k\nu)|k=1,\cdots,n,\nu=1,\cdots,n_\nu\}$.

By writing the operators in \cref{eq: Lindblad equation in operator} in the coordinate representation, we obtain the following partial differential equation for the density matrix kernel $\rho(\bm{x},\bm{y},t)$:
\begin{equation}
    \partial_t\rho(\bm{x},\bm{y},t) = -\frac{\mathrm{i}}{\varepsilon}\left(-\frac{\varepsilon^2}{2}\Delta_{\bm{x}}+V(\bm{x})+\frac{\varepsilon^2}{2}\Delta_{\bm{y}}-V(\bm{y})\right)\rho + D(\rho),
    \label{eq:Lindblad_coordinate_representation}
\end{equation}
where the term $D(\rho)$ characterizes the environmental effects, including drift, diffusion, dissipation, and renormalization (Lamb shift). It is composed of $D(\rho) = D_1(\rho)+D_2(\rho)+D_3(\rho)$, with
\begin{align}
    D_1(\rho) &= -\frac{1}{2\varepsilon}\sum_{k=1}^n \left( \alpha_k (x_k-y_k)^2\rho - \beta_k (\varepsilon \partial_{x_k}+\varepsilon\partial_{y_k})^2 \rho \right), \\
    D_2(\rho) &= \sum_{k=1}^{n} \left( -\operatorname{Im}(\gamma_k) +\mathrm{i}\left[ (\gamma_k x_k\partial_{y_k}-\overline{\gamma}_ky_k\partial_{x_k}) + \operatorname{Re}(\gamma_k)(x_k\partial_{x_k}-y_k\partial_{y_k}) \right] \right)\rho, \\
    D_3(\rho) &= -\sum_{k=1}^n \mu_k (x_k\partial_{x_k}+y_k\partial_{y_k}+1)\rho.
\end{align}

The coefficients $\alpha_k, \beta_k, \gamma_k$ are determined by the jump operators:
\begin{equation}
    \alpha_k = \sum_{\nu=1}^{n_\nu} |a_{k\nu}|^2, \quad 
    \beta_k = \sum_{\nu=1}^{n_\nu} |b_{k\nu}|^2, \quad 
    \gamma_k = \sum_{\nu=1}^{n_\nu} a_{k\nu} \bar{b}_{k\nu}.
    \label{eq:dissipative_coeffs}
\end{equation}

The key step in moving to a phase space formulation is to apply the Wigner transformation (defined in \cref{eq:Wigner_transform_intro}) to the Lindblad equation in its coordinate representation \eqref{eq:Lindblad_coordinate_representation}. The detailed derivation is provided in \ref{appendix:wfp_derivation}. This procedure yields the Wigner-Fokker-Planck (WFP) equation:
\begin{equation}
    \partial_t W(\bm{x}, \bm{\xi}, t) + \bm{\xi} \cdot \nabla_{\bm{x}} W + \Theta[V]W = \mathcal{D}(W),
    \label{eq:WFP_main}
\end{equation}
where $W(\bm{x}, \bm{\xi}, t)$ is the Wigner function. The potential term $\Theta[V]$ is a pseudo-differential operator given by
\begin{equation}
    \Theta[V] W(\bm{x}, \bm{\xi}, t) = \frac{1}{(\pi \varepsilon)^n} \iint \left[\frac{\mathrm{i}}{\varepsilon}(V(\bm{x}+\bm{y}) - V(\bm{x}-\bm{y}))\right] W(\bm{x}, \boldsymbol{\xi}', t) e^{\frac{2\mathrm{i}}{\varepsilon} \bm{y} \cdot (\boldsymbol{\xi}'-\bm{\xi})} \mathrm{d}\boldsymbol{\xi}' \mathrm{d}\bm{y},
    \label{eq:potential_operator}
\end{equation}
and the term $\mathcal{D}(W)$, which encapsulates all drift, diffusive and dissipative effects from the environment, is a second-order differential operator:
\begin{equation}
    \begin{aligned}
        \mathcal{D}(W) = & \frac{\varepsilon}{2} \sum_{k=1}^n \left(\alpha_k \partial_{\xi_k\xi_k}W + \beta_k \partial_{x_kx_k}W - 2\operatorname{Re}(\gamma_k) \partial_{x_k\xi_k}W\right) \\
        & - \sum_{k=1}^n \left[ (\operatorname{Im}(\gamma_k) + \mu_k) \partial_{x_k}(x_k W) + (-\operatorname{Im}(\gamma_k) + \mu_k) \partial_{\xi_k}(\xi_k W) \right].
    \end{aligned}
    \label{eq:dissipative_operator}
\end{equation}

While the Lindblad and WFP equations are mathematically equivalent, we choose to develop our numerical method based on the WFP equation \eqref{eq:WFP_main}. This choice is motivated by several key advantages of the phase space formulation. First, the mathematical properties of the WFP equation, such as its well-posedness \cite{arnold2004analysis,arnold2007wigner} and long-time behavior \cite{arnold2004analysis,arnold2012wigner}, are well-established in the literature. Second, the WFP framework provides an intuitive connection to classical mechanics; as $\varepsilon \to 0$, the operator $\Theta[V]W$ formally converges to $-\nabla_{\bm{x}}V \cdot \nabla_{\bm{\xi}}W$, and the WFP equation reduces to a classical Fokker-Planck equation\cite{arnold2004analysis, sparber2004long} , offering a clear physical picture in the semiclassical limit. Finally, and most pertinently to our work, the phase space representation is particularly amenable to the wavepacket-based semiclassical approximations that form the foundation of our proposed algorithm.

\subsection{Review of Wavepacket Dynamics for the Wigner Equation}
\label{subsec:review_closed_system}
Before developing our method for the dissipative WFP equation, we first review the established wavepacket dynamics for its non-dissipative counterpart, the Wigner equation. This equation is recovered from \cref{eq:WFP_main} by setting all environmental parameters to zero:
\begin{equation}
    \partial_t W(\bm{x}, \bm{\xi}, t) + \bm{\xi} \cdot \nabla_{\bm{x}}W + \Theta[V]W = 0.
    \label{eq:Wigner_equation}
\end{equation}
This review lays the groundwork for our extension to open systems. The central idea of these methods is to approximate the Wigner function as a superposition of time-evolving Gaussian wavepackets:
\begin{equation}
    W(\bm{x}, \bm{\xi}, t) \approx \sum_{j} A_j \exp\left(-\frac{1}{2\varepsilon} T(\bm{x}, \bm{\xi}, \bm{q}_j(t), \bm{p}_j(t); \bm{\Sigma}_j(t))\right).
    \label{eq:Gaussian_sum_ansatz_time_dependent}
\end{equation}
Here, $T$ is a general quadratic form defined as a function of the phase space variables $(\bm{x}, \bm{\xi})$, the wavepacket center $(\bm{q}, \bm{p})$, and the covariance matrix $\bm{\Sigma}$:
\begin{equation}
    T(\bm{x}, \bm{\xi}, \bm{q}, \bm{p}; \bm{\Sigma}) = \begin{pmatrix} \bm{x}-\bm{q} \\ \bm{\xi}-\bm{p} \end{pmatrix}^T \bm{\Sigma}^{-1} \begin{pmatrix} \bm{x}-\bm{q} \\ \bm{\xi}-\bm{p} \end{pmatrix}.
    \label{eq:quadratic_form_T_general}
\end{equation}
The expression \eqref{eq:Gaussian_sum_ansatz_time_dependent} represents a linear superposition of Gaussian wavepackets. For each wavepacket indexed by $j$, $A_j$ is its weight, $(\bm{q}_j(t), \bm{p}_j(t))$ is its time-dependent center, and $\bm{\Sigma}_j(t)$ is its time-dependent rescaled covariance matrix. For convenience, we also define the inverse covariance matrix $\bm{G}_j(t) = \bm{\Sigma}_j^{-1}(t)$.

The equations of motion for the parameters are derived via an asymptotic analysis. This procedure involves (1) approximating the potential $V(\bm{x})$ with a local quadratic function via a Taylor expansion around the wavepacket's position center, $\bm{q}(t)$:
\begin{equation}
    V(\bm{x}) \approx V(\bm{q}) + \nabla_{\bm{x}}V(\bm{q}) \cdot (\bm{x}-\bm{q}) + \frac{1}{2}(\bm{x}-\bm{q})^T \nabla_{\bm{x}}^2V(\bm{q}) (\bm{x}-\bm{q}),
    \label{eq:V_taylor_generic}
\end{equation}
 (2) substituting the Gaussian ansatz into the resulting simplified Wigner equation, and (3) matching terms at the leading orders of $\varepsilon$. This analysis (detailed in, e.g., \cite{lu2018frozen}) yields separate ODEs for each parameter set.

The $O(1)$ terms govern the amplitude, which for the closed system is constant:
\begin{equation}
    \frac{\mathrm{d}A_j}{\mathrm{d}t} = 0.
    \label{eq:O_1_dA_dt}
\end{equation}
The $O(\varepsilon^{-1})$ terms, after further differentiation, yield the dynamics for the wavepacket center and shape. The center $\bm{z}_j = (\bm{q}_j^T, \bm{p}_j^T)^T$ evolves according to the classical Hamiltonian dynamics:
\begin{equation}
    \frac{\mathrm{d}}{\mathrm{d}t}\begin{pmatrix} \bm{q}_j(t) \\ \bm{p}_j(t) \end{pmatrix} = \begin{pmatrix} \bm{p}_j(t) \\ -\nabla_{\bm{x}}V(\bm{q}_j(t)) \end{pmatrix}.
    \label{eq:center_dynamics}
\end{equation}
The inverse covariance matrix $\bm{G}_j(t)$ evolves according to the differential Lyapunov equation:
\begin{equation}
    \frac{\mathrm{d} \bm{G}_j}{\mathrm{d} t} = -(\bm{G}_j \bm{C}_j + \bm{C}_j^T \bm{G}_j), \quad \text{where} \quad \bm{C}_j(t) = \begin{pmatrix} \bm{0} & \bm{I} \\ -\nabla_{\bm{x}}^2 V(\bm{q}_j(t)) & \bm{0} \end{pmatrix}.
    \label{eq:G_dynamics_closed}
\end{equation}

The derivation of the center dynamics requires the matrix $\bm{G}_j(t)$ to be invertible, which is guaranteed if it remains positive definite (assuming $\bm{G}_j(0)$ is positive definite). This crucial property follows directly from the property of the differential Lyapunov equation, which was stated in the following proposition.

\begin{proposition}[\cite{dieci1994positive}, Theorem~2.1]
\label{prop:positive_definiteness_of_X}
    The solution $\bm{X}(t)$ to the differential Lyapunov equation
    \begin{equation}
        \frac{d\bm{X}}{dt} = \bm{A}(t)\bm{X} + \bm{X}\bm{A}(t)^T + \bm{D}(t), \quad \bm{X}(0)=\bm{X}_0,
    \end{equation}
    exists and is symmetric for all $t>0$ if $\bm{X}_0$ is symmetric. Furthermore, if $\bm{X}_0$ is positive definite and $\bm{D}(t)$ is positive semidefinite for all $t \ge 0$, then $\bm{X}(t)$ remains positive definite for all $t>0$.
\end{proposition}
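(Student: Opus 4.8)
The plan is to solve the differential Lyapunov equation explicitly by variation of constants and then read off all three assertions directly from the resulting formula. First, upon vectorizing $\bm{X}$ the equation becomes a linear inhomogeneous ODE $\dot{\bm{u}}=\mathcal{A}(t)\bm{u}+\bm{d}(t)$ with continuous coefficients; by the standard theory of linear systems a unique solution $\bm{X}(t)$ therefore exists on all of $[0,\infty)$, with no finite-time blow-up, precisely because the right-hand side is affine in $\bm{X}$. Symmetry is then automatic: transposing the equation and using that $\bm{D}(t)$ is symmetric shows that $\bm{X}(t)^T$ solves the same initial value problem (its initial value being $\bm{X}_0^T=\bm{X}_0$), so uniqueness forces $\bm{X}(t)=\bm{X}(t)^T$ for all $t$.

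Next, I would introduce the state-transition matrix $\bm{\Phi}(t,s)$ of $\bm{A}(t)$, i.e.\ the solution of $\partial_t\bm{\Phi}(t,s)=\bm{A}(t)\bm{\Phi}(t,s)$ with $\bm{\Phi}(s,s)=\bm{I}$, which is invertible for all $t,s$. Differentiating the candidate
\[
  \bm{X}(t) = \bm{\Phi}(t,0)\,\bm{X}_0\,\bm{\Phi}(t,0)^T + \int_0^t \bm{\Phi}(t,s)\,\bm{D}(s)\,\bm{\Phi}(t,s)^T\,\mathrm{d}s,
\]
using $\partial_t\bm{\Phi}(t,s)=\bm{A}(t)\bm{\Phi}(t,s)$ together with Leibniz's rule for the integral --- whose boundary term is exactly $\bm{\Phi}(t,t)\bm{D}(t)\bm{\Phi}(t,t)^T=\bm{D}(t)$ --- verifies that this expression satisfies the differential Lyapunov equation with the prescribed initial data. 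By the uniqueness established above, it is the solution. (Symmetry may also be re-read off this formula for free.)

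The positivity claim then drops out of the representation. For any nonzero vector $\bm{v}$, congruence by the invertible matrix $\bm{\Phi}(t,0)^T$ gives $\bm{v}^T\bm{\Phi}(t,0)\bm{X}_0\bm{\Phi}(t,0)^T\bm{v}=(\bm{\Phi}(t,0)^T\bm{v})^T\bm{X}_0(\bm{\Phi}(t,0)^T\bm{v})>0$ whenever $\bm{X}_0$ is positive definite, since $\bm{\Phi}(t,0)^T\bm{v}\neq\bm{0}$; and each integrand $\bm{\Phi}(t,s)\bm{D}(s)\bm{\Phi}(t,s)^T$ is positive semidefinite because $\bm{D}(s)$ is, hence so is the integral. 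A positive definite matrix plus a positive semidefinite matrix is positive definite, so $\bm{X}(t)$ is positive definite for all $t>0$.

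I do not expect a deep obstacle here: the statement is essentially a structural consequence of linearity. The only points requiring a little care are justifying global-in-time existence (so that $\bm{\Phi}(t,s)$ and the integral are defined for all $t$, which linearity guarantees), confirming invertibility of the transition matrix, and the term-by-term differentiation of the Duhamel integral; the continuity of $\bm{A}$ and $\bm{D}$ needed for these steps is exactly what is available in our application, where both matrices depend smoothly on the globally defined wavepacket trajectory $\bm{q}_j(t)$.
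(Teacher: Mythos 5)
Your proof is correct and complete. Note, however, that the paper does not prove this proposition at all: it is imported verbatim as Theorem~2.1 of the cited reference \cite{dieci1994positive}, and the authors use it as a black box inside the proof of Theorem~3.1. So there is no in-paper argument to compare against; what you have written is the standard variation-of-constants proof (global existence from linearity, symmetry from uniqueness applied to the transposed system, the explicit representation
$\bm{X}(t)=\bm{\Phi}(t,0)\bm{X}_0\bm{\Phi}(t,0)^T+\int_0^t\bm{\Phi}(t,s)\bm{D}(s)\bm{\Phi}(t,s)^T\,\mathrm{d}s$,
and positivity by congruence with the invertible transition matrix plus positive semidefiniteness of the Duhamel integral), which is essentially the argument in the cited source. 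One small point worth making explicit: the symmetry assertion in the first sentence of the proposition tacitly requires $\bm{D}(t)$ to be symmetric (your transposition step uses this), which is automatic in the second half where $\bm{D}$ is positive semidefinite and also holds in the paper's application, where $\bm{D}=\bm{B}-\Gamma_1$ is symmetric by construction.
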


\section{Wavepacket dynamics for the Wigner-Fokker-Planck equation in the semiclassical regime}
\label{sec:wavepacket_dynamics}
In this section, we extend the semiclassical wavepacket dynamics from the closed-system Wigner equation to the Wigner-Fokker-Planck equation. We present the resulting equations of motion for the wavepacket parameters in the presence of dissipation and diffusion; the detailed asymptotic analysis is deferred to \ref{appendix: wavepacket dynamics} due to its complexity. Crucially, we also provide a rigorous proof that the covariance matrix remains positive definite throughout the evolution, a property that is fundamental to the stability of the proposed algorithm.

\subsection{Wavepacket Dynamics}
We now extend the semiclassical procedure to the  WFP equation. The Wigner function is again approximated by the Gaussian wavepackets ansatz \eqref{eq:Gaussian_sum_ansatz_time_dependent}, and for each generic wavepacket, the potential $V(\bm{x})$ is locally approximated by the quadratic form \eqref{eq:V_taylor_generic}. The starting point for our derivation is therefore the WFP equation under this local potential, which governs the evolution of a generic wavepacket $W_G$:
\begin{equation}
\label{eq:WFP_local_potential}
\begin{aligned}
&\partial_t W_G + \bm{\xi}\cdot\nabla_{\bm{x}}W_G - \left(\nabla_{\boldsymbol{x}}V(\boldsymbol{q})+\nabla_{\boldsymbol{x}}^2V(\boldsymbol{q})(\boldsymbol{x}-\boldsymbol{q})\right)\cdot\nabla_{\bm{\xi}}W_G \\
&= \frac{\varepsilon}{2}\sum_{k=1}^n\left(\alpha_k\partial_{\xi_k\xi_k}W_G+\beta_k\partial_{x_kx_k}W_G- 2\operatorname{Re}(\gamma_k) \partial_{x_k}\partial_{\xi_k}W_G\right)\\
&\quad - \sum_{k=1}^n \left[ (\operatorname{Im}(\gamma_k) + \mu_k)\partial_{x_k}(x_kW_G) +(-\operatorname{Im}(\gamma_k)  +\mu_k)\partial_{\xi_k} (\xi_k W_G) \right].
\end{aligned}
\end{equation}

Unlike the closed-system case, the presence of the diffusive and dissipative terms in \cref{eq:WFP_local_potential} prevents its reduction to a simple Liouville form. The derivation of the parameter dynamics is therefore more involved, requiring a direct substitution of the Gaussian ansatz into \cref{eq:WFP_local_potential} and a detailed matching of coefficients. As the full derivation is lengthy, it is provided in \ref{appendix: wavepacket dynamics}. We present the final equations of motion here.

For a compact representation, we first define the following auxiliary matrices. Let $\Gamma = \mathrm{diag}(\gamma_k)$ and $\mathcal{M} = \mathrm{diag}(\mu_k)$, then:
\begin{align}
    \Gamma_1 &= \begin{pmatrix}
	    \bm{0} & \operatorname{Re}(\Gamma) \\
	    \operatorname{Re}(\Gamma) & \bm{0}
    \end{pmatrix}, &
    \Gamma_2 &= \begin{pmatrix}
	    -\operatorname{Im}(\Gamma) & \bm{0} \\
	    \bm{0} & -\operatorname{Im}(\Gamma)
    \end{pmatrix}, \label{eq:Gamma_matrices} \\
    \bm{B} &= \begin{pmatrix}
	    \mathrm{diag}(\beta_k) & \bm{0} \\
	    \bm{0} & \mathrm{diag}(\alpha_k)
	\end{pmatrix}, &
    \widetilde{\mathcal{M}} &= \begin{pmatrix}
	    -\mathcal{M} & \bm{0} \\
	    \bm{0} & \mathcal{M}
	\end{pmatrix}, \\
    \bm{C}(t) &= \begin{pmatrix}
	    \bm{0} & \bm{I}_n \\
	    -\nabla_{\bm{x}}^2 V(\bm{q}(t)) & \bm{0}
    \end{pmatrix}. \label{eq:C_matrix_def}
\end{align}

The asymptotic analysis yields the following complete set of ODEs for the generic wavepacket parameters:
\begin{align}
    \frac{\mathrm{d} \bm{G}}{\mathrm{d} t} &= (\Gamma_2+\widetilde{\mathcal{M}}-\bm{C}^T)\bm{G}+\bm{G}(\Gamma_2+\widetilde{\mathcal{M}}-\bm{C}) + \bm{G}(\Gamma_1-\bm{B})\bm{G}, \label{eq:G_dynamics_open} \\
    \frac{\mathrm{d}A}{\mathrm{d}t} &= \operatorname{Tr}\left(\Gamma_2-\frac{1}{2}(\bm{B}-\Gamma_1)\bm{G}\right)A, \label{eq:A_dynamics_open} \\
    \frac{\mathrm{d}\bm{q}}{\mathrm{d}t} &= \bm{p} + (\operatorname{Im}\left(\Gamma\right) + \mathcal{M})\bm{q}, \label{eq:q_dynamics_open} \\
    \frac{\mathrm{d}\bm{p}}{\mathrm{d}t} &= -\nabla_{\bm{x}}V|_{\bm{x}=\bm{q}} + (\operatorname{Im}\left(\Gamma\right) - \mathcal{M})\bm{p}. \label{eq:p_dynamics_open}
\end{align}
These equations represent the central theoretical result of this section. Several key differences from the closed-system dynamics are immediately apparent. First, the amplitude $A(t)$ is no longer constant but evolves dynamically according to \cref{eq:A_dynamics_open}. This signifies a dynamic redistribution of weight among the Gaussian basis packets, a hallmark of the non-unitary evolution in an open system. Second, the equations for the center $(\bm{q}, \bm{p})$ are modified by additional terms arising from the system-bath coupling. These terms render the trajectory non-Hamiltonian, causing it to deviate from the purely conservative flow seen in \cref{eq:center_dynamics}. The detailed structure of these dissipative trajectories warrants further investigation and will be a subject of future work.

Finally, for the Gaussian ansatz to remain valid, the matrix $\bm{G}(t)$ must be positive definite for all $t \ge 0$. We will rigorously prove this crucial property in the following subsection.

\subsection{Positive Definiteness and Stability}
In this subsection, we rigorously prove that the covariance matrix $\bm{\Sigma}(t)$ remains positive definite throughout the evolution. This property is fundamental to ensuring that the Gaussian ansatz is well-defined and that the numerical algorithm is stable.

\begin{theorem}
\label{thm:G_positive_definite}
Consider the system of ODEs \eqref{eq:G_dynamics_open}-\eqref{eq:p_dynamics_open}. If the initial matrix $\bm{G}(0)$ is symmetric and positive definite, then its solution $\bm{G}(t)$ remains symmetric and positive definite for all $t > 0$. And the corresponding covariance matrix $\bm{\Sigma}(t) = \bm{G}^{-1}(t)$ satisfies the following differential Lyapunov equation:
\begin{equation}
    \frac{\mathrm{d}\bm{\Sigma}}{\mathrm{d}t} = - \bm{\Sigma}(\Gamma_2+\mathcal{M}-\bm{C}^T) - (\Gamma_2+\mathcal{M}-\bm{C})\bm{\Sigma} - (\Gamma_1-\bm{B}).
    \label{eq:Sigma_dynamics_open}
\end{equation}
\end{theorem}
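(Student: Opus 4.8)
The plan is to deduce the positive-definiteness claim from \Cref{prop:positive_definiteness_of_X} by trading the Riccati-type equation \eqref{eq:G_dynamics_open} for the \emph{linear} differential Lyapunov equation \eqref{eq:Sigma_dynamics_open} satisfied by $\bm{\Sigma}=\bm{G}^{-1}$. Since the center dynamics \eqref{eq:q_dynamics_open}--\eqref{eq:p_dynamics_open} decouple from $\bm{G}$, I treat $\bm{C}(t)$ from \eqref{eq:C_matrix_def} as a prescribed continuous matrix-valued function on the time interval under consideration, and I record that $\Gamma_1,\Gamma_2,\widetilde{\mathcal{M}},\bm{B}$ are constant symmetric matrices. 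Writing $\bm{A}(t)=\Gamma_2+\widetilde{\mathcal{M}}-\bm{C}(t)^T$ and $\bm{R}=\Gamma_1-\bm{B}=\bm{R}^T$, equation \eqref{eq:G_dynamics_open} becomes the quadratic ODE $\dot{\bm{G}}=\bm{A}\bm{G}+\bm{G}\bm{A}^T+\bm{G}\bm{R}\bm{G}$, whose right-hand side is locally Lipschitz in $\bm{G}$; it therefore has a unique maximal solution $\bm{G}(t)$ on some interval $[0,T^*)$.

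The first step is symmetry. Transposing $\dot{\bm{G}}=\bm{A}\bm{G}+\bm{G}\bm{A}^T+\bm{G}\bm{R}\bm{G}$ and using $\bm{R}=\bm{R}^T$ shows that $\bm{G}^T$ solves the very same ODE; since $\bm{G}(0)=\bm{G}(0)^T$, uniqueness forces $\bm{G}(t)=\bm{G}(t)^T$ throughout $[0,T^*)$.

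The core of the argument is to construct $\bm{\Sigma}$ first and then invert it, rather than the reverse. I let $\bm{\Sigma}_\star(t)$ be the solution of the linear inhomogeneous ODE \eqref{eq:Sigma_dynamics_open} with $\bm{\Sigma}_\star(0)=\bm{G}(0)^{-1}$; being linear with continuous coefficients, it exists on the whole interval on which $\bm{C}(t)$ is defined. Put in the form of \Cref{prop:positive_definiteness_of_X}, its inhomogeneous term is $\bm{B}-\Gamma_1$ and its initial datum $\bm{G}(0)^{-1}$ is positive definite. The one model-dependent point is to check $\bm{B}-\Gamma_1\succeq 0$: since every block of $\bm{B}$ and of $\Gamma_1$ is diagonal, a permutation of the phase-space coordinates turns $\bm{B}-\Gamma_1$ into the direct sum of the $n$ symmetric $2\times 2$ blocks with diagonal entries $\beta_k,\alpha_k$ and off-diagonal entry $-\operatorname{Re}\gamma_k$, and Cauchy--Schwarz applied to the definitions \eqref{eq:dissipative_coeffs} yields $(\operatorname{Re}\gamma_k)^2\le|\gamma_k|^2\le\alpha_k\beta_k$, so each such block---and hence $\bm{B}-\Gamma_1$---is positive semidefinite. \Cref{prop:positive_definiteness_of_X} then gives that $\bm{\Sigma}_\star(t)$ is symmetric and positive definite for all $t>0$, in particular invertible. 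I then set $\widetilde{\bm{G}}(t):=\bm{\Sigma}_\star(t)^{-1}$, differentiate $\widetilde{\bm{G}}\bm{\Sigma}_\star=\bm{I}$ to obtain $\dot{\widetilde{\bm{G}}}=-\widetilde{\bm{G}}\dot{\bm{\Sigma}}_\star\widetilde{\bm{G}}$, substitute \eqref{eq:Sigma_dynamics_open}, and simplify using $\widetilde{\bm{G}}\bm{\Sigma}_\star=\bm{\Sigma}_\star\widetilde{\bm{G}}=\bm{I}$ to recover \eqref{eq:G_dynamics_open} exactly, with $\widetilde{\bm{G}}(0)=\bm{G}(0)$. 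By the uniqueness already used, $\bm{G}(t)=\widetilde{\bm{G}}(t)=\bm{\Sigma}_\star(t)^{-1}$ on $[0,T^*)$; this simultaneously shows that $\bm{G}$ neither becomes singular nor blows up in finite time (so $T^*$ extends as far as the trajectory data allow), that $\bm{G}(t)$ is symmetric positive definite for all $t>0$, and that $\bm{\Sigma}(t)=\bm{G}(t)^{-1}=\bm{\Sigma}_\star(t)$ solves \eqref{eq:Sigma_dynamics_open}.

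I expect the main obstacle to be the circularity around invertibility: writing $\bm{\Sigma}=\bm{G}^{-1}$ presupposes that $\bm{G}$ stays nonsingular, which is part of what must be shown, and a generic matrix Riccati flow can blow up in finite time, so nonsingularity cannot simply be assumed. The remedy is to make the linear Lyapunov flow \eqref{eq:Sigma_dynamics_open} the primary object---solving it globally, certifying positivity by \Cref{prop:positive_definiteness_of_X}, and only afterwards inverting---and this order of operations has to be respected. The only computation carrying any content is the semidefiniteness $\bm{B}-\Gamma_1\succeq 0$, which is the phase-space shadow of the complete-positivity constraint on the jump operators; everything else is bookkeeping.
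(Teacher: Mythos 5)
Your proposal follows essentially the same route as the paper: take the linear differential Lyapunov equation \eqref{eq:Sigma_dynamics_open} as the primary object, invoke \Cref{prop:positive_definiteness_of_X} to get positive definiteness of its solution, then invert and identify the inverse with $\bm{G}(t)$ by uniqueness of the Riccati flow. The one substantive difference is in your favor: \Cref{prop:positive_definiteness_of_X} requires the inhomogeneous term $\bm{D}(t)=\bm{B}-\Gamma_1$ to be positive semidefinite, and the paper's proof invokes the proposition without checking this. You supply the verification — reducing $\bm{B}-\Gamma_1$ by a coordinate permutation to a direct sum of $2\times 2$ blocks with diagonal $\beta_k,\alpha_k$ and off-diagonal $-\operatorname{Re}\gamma_k$, and then using Cauchy--Schwarz on the definitions \eqref{eq:dissipative_coeffs} to get $(\operatorname{Re}\gamma_k)^2\le|\gamma_k|^2\le\alpha_k\beta_k$ — which is exactly the missing hypothesis check and, as you note, the phase-space trace of complete positivity. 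Your handling of the order of operations (solve the linear flow globally first, certify positivity, then invert) and of symmetry via uniqueness is also slightly more careful than the paper's, and you correctly use $\widetilde{\mathcal{M}}$ where the paper's statement of \eqref{eq:Sigma_dynamics_open} writes $\mathcal{M}$ (a dimensional mismatch that appears to be a typo). No gaps.
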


\begin{proof}
A direct proof of positive definiteness for $\bm{G}(t)$ from its governing equation \eqref{eq:G_dynamics_open} is challenging due to the equation's nonlinear (Riccati) nature. Therefore, we adopt an indirect strategy by first considering the dynamics of the covariance matrix $\bm{\Sigma}(t)$.

Let us define an auxiliary matrix $\bm{\Sigma}(t)$ as the solution to the following differential Lyapunov equation, with the initial condition $\bm{\Sigma}(0) = \bm{G}^{-1}(0)$:
\begin{equation}
    \frac{\mathrm{d}\bm{\Sigma}}{\mathrm{d}t} = - \bm{\Sigma}(\Gamma_2+\mathcal{M}-\bm{C}^T) - (\Gamma_2+\mathcal{M}-\bm{C})\bm{\Sigma} - (\Gamma_1-\bm{B}).
    \label{eq:Sigma_dynamics_open_proof}
\end{equation}
Since the initial matrix $\bm{G}(0)$ is positive definite, its inverse $\bm{\Sigma}(0)$ is also positive definite. According to Proposition \ref{prop:positive_definiteness_of_X}, the solution $\bm{\Sigma}(t)$ to the Lyapunov equation above remains positive definite for all $t > 0$. This implies that its inverse, which we denote as $\widetilde{\bm{G}}(t) = \bm{\Sigma}^{-1}(t)$, is also well-defined and positive definite.

We now show that this matrix $\widetilde{\bm{G}}(t)$ satisfies the same initial value problem as our original matrix $\bm{G}(t)$. By differentiating $\widetilde{\bm{G}}(t) = \bm{\Sigma}^{-1}(t)$ and substituting \cref{eq:Sigma_dynamics_open_proof}, we find:
\begin{equation}
\begin{aligned}
    \frac{\mathrm{d}\widetilde{\bm{G}}}{\mathrm{d}t} &= -\bm{\Sigma}^{-1} \left(\frac{\mathrm{d}\bm{\Sigma}}{\mathrm{d}t}\right) \bm{\Sigma}^{-1} \\
    &= \bm{\Sigma}^{-1} \left( \bm{\Sigma}(\Gamma_2+\mathcal{M}-\bm{C}^T) + (\Gamma_2+\mathcal{M}-\bm{C})\bm{\Sigma} + (\Gamma_1-\bm{B}) \right) \bm{\Sigma}^{-1} \\
    &= (\Gamma_2+\mathcal{M}-\bm{C}^T)\widetilde{\bm{G}} + \widetilde{\bm{G}}(\Gamma_2+\mathcal{M}-\bm{C}) + \widetilde{\bm{G}}(\Gamma_1-\bm{B})\widetilde{\bm{G}}.
\end{aligned}
\end{equation}
This is identical to the equation of motion for $\bm{G}(t)$, \cref{eq:G_dynamics_open}. Since $\widetilde{\bm{G}}(0) = \bm{\Sigma}^{-1}(0) = \bm{G}(0)$, by the uniqueness of solutions to ODEs, we must have $\widetilde{\bm{G}}(t) = \bm{G}(t)$ for all $t \ge 0$.

Therefore, since $\bm{G}(t)$ is identical to the positive definite matrix $\widetilde{\bm{G}}(t)$, it must also be positive definite. This completes the proof.
\end{proof}

\begin{remark}
While the derived equations of motion \eqref{eq:G_dynamics_open}-\eqref{eq:p_dynamics_open} are equivalent to results for Gaussian wavepacket dynamics in open systems previously reported in the literature \cite{graefe2018lindblad,hernandez2025classical}, our derivation is based on rigorous asymptotic analysis. The advantage of the formulation presented here lies in its clear and compact matrix representation. Specifically, our approach distinctly separates the conservative Hamiltonian evolution, governed by the matrix $\bm{C}(t)$, from the non-Hamiltonian effects, governed by the matrices $\bm{B}$ and $\Gamma_{1,2}$. This separation not only enhances the conceptual clarity of the dynamics but also makes the system of equations particularly well-suited for robust numerical integration.
\end{remark}

\section{The Gaussian Sampling Algorithm}
\label{sec:algorithm}
In this section, we present the full construction of our Frozen Gaussian sampling (FGS) algorithm. The algorithm is based on a two-step procedure: (1) an initial decomposition of the Wigner function into a weighted superposition of narrower Gaussian wavepackets, and (2) a Monte Carlo sampling approach where the evolution of each packet is tracked according to the dynamics derived in Section \ref{sec:wavepacket_dynamics}.

\subsection{Initial Value Decomposition}
The FGS algorithm is, in principle, applicable to general initial conditions. The first step, however, requires decomposing the initial state into a superposition of $\varepsilon$-scaled Gaussian wavepackets. While developing such a decomposition for arbitrary functions is a subject for future research, in this work, we focus on the important and illustrative case of a general Gaussian initial value:
\begin{equation}
    W(\boldsymbol{x},\boldsymbol{\xi},0)=\frac{1}{(2\pi)^n\sqrt{\det(\boldsymbol{\Sigma}_0)}}\exp\left(-\frac{1}{2}\begin{pmatrix}
        \boldsymbol{x}-\boldsymbol{x}_0\\
        \boldsymbol{\xi}-\boldsymbol{\xi}_0
    \end{pmatrix}^T\boldsymbol{\Sigma}_0^{-1}\begin{pmatrix}
        \boldsymbol{x}-\boldsymbol{x}_0\\
        \boldsymbol{\xi}-\boldsymbol{\xi}_0
    \end{pmatrix}\right).
    \label{eq: Gaussian initial value}
\end{equation}

For this initial state, the required decomposition can be achieved through the following exact identity, which can be verified by direct calculation:
\begin{equation}
     W(\boldsymbol{x},\boldsymbol{\xi},0)= \int W_{\varepsilon}(\boldsymbol{x},\boldsymbol{\xi};\boldsymbol{q},\boldsymbol{p})\pi(\boldsymbol{q},\boldsymbol{p})\mathrm{d}\boldsymbol{q}\mathrm{d}\boldsymbol{p},
\end{equation}
where the $W_{\varepsilon}$ is an $\varepsilon$-scaled Gaussian wavepacket centered at $(\bm{q},\bm{p})$, and $\pi(\boldsymbol{q},\boldsymbol{p})$ is the sampling distribution for the wavepacket centers:
\begin{align}
     &W_{\varepsilon}(\boldsymbol{x},\boldsymbol{\xi};\boldsymbol{q},\boldsymbol{p}) = \frac{1}{(2\pi\varepsilon)^n\sqrt{\det(\boldsymbol{\Sigma}_0)}}\exp\left(-\frac{1}{2\varepsilon}\begin{pmatrix}
         \boldsymbol{x}-\boldsymbol{q}\\
         \boldsymbol{\xi}-\boldsymbol{p}
     \end{pmatrix}^T\boldsymbol{\Sigma}_0^{-1}\begin{pmatrix}
         \boldsymbol{x}-\boldsymbol{q}\\
         \boldsymbol{\xi}-\boldsymbol{p}
     \end{pmatrix}\right), \label{eq:epsilon_scaled_kernel} \\
     &\pi(\boldsymbol{q},\boldsymbol{p}) = \frac{1}{(2\pi(1-\varepsilon))^n\sqrt{\det(\boldsymbol{\Sigma}_0)}}\exp\left(-\frac{1}{2(1-\varepsilon)}\begin{pmatrix}
         \boldsymbol{x}_0-\boldsymbol{q}\\
         \boldsymbol{\xi}_0-\boldsymbol{p}
     \end{pmatrix}^T\boldsymbol{\Sigma}_0^{-1}\begin{pmatrix}
         \boldsymbol{x}_0-\boldsymbol{q}\\
         \boldsymbol{\xi}_0-\boldsymbol{p}
     \end{pmatrix}\right). \label{eq:initial_Gaussian_distribution}
\end{align}
The function $\pi(\bm{q},\bm{p})$ is a Gaussian probability density function (PDF) in phase space, centered at $(\bm{x}_0, \bm{\xi}_0)$ with covariance $(1-\varepsilon)\bm{\Sigma}_0$. This probabilistic interpretation is fundamental to our sampling algorithm, as it provides the distribution from which the initial wavepacket centers are drawn.

The linearity of the WFP equation allows us to obtain the solution at time $t$ by evolving each component of the integral superposition independently:
\begin{equation}
    W(\bm{x},\bm{\xi},t) = \int W(\bm{x},\bm{\xi};\bm{q},\bm{p},t) \, \pi(\bm{q},\bm{p}) \, \mathrm{d}\bm{q}\mathrm{d}\bm{p},
\end{equation}
where $W(\bm{x},\bm{\xi};\bm{q},\bm{p},t)$ is the exact solution to the WFP equation with the initial condition given by the  $W_{\varepsilon}(\bm{x},\bm{\xi};\bm{q},\bm{p})$.

Our semiclassical approximation, denoted $W_{\mathrm{FGS}}$, consists of replacing the exact evolution of the kernel with a single evolving Gaussian wavepacket:
\begin{equation}
    W(\bm{x},\bm{\xi};\bm{q},\bm{p},t) \approx W_{\mathrm{FGS}}(\bm{x},\bm{\xi};\bm{q},\bm{p},t) = A(t) \exp\left(-\frac{1}{2\varepsilon}T(\bm{x}, \bm{\xi}, \bm{q}(t), \bm{p}(t); \bm{\Sigma}(t))\right).
    \label{eq:wigner_function_for_single_sample}
\end{equation}
The parameters of this Gaussian are propagated according to the ODEs \eqref{eq:G_dynamics_open}-\eqref{eq:p_dynamics_open} with initial conditions determined by the starting kernel $W_{\varepsilon}$:
\begin{equation}
    \bm{q}(0)=\bm{q}, \quad \bm{p}(0)=\bm{p}, \quad \bm{\Sigma}(0)=\bm{\Sigma}_0, \quad A(0) = \frac{1}{(2\pi\varepsilon)^n\sqrt{\det(\bm{\Sigma}_0)}}.
    \label{eq:single_packet_initial_conditions}
\end{equation}
The full FGS approximation for the Wigner function at time $t$ is then constructed by integrating these evolved kernels against the initial sampling distribution:
\begin{equation}
    W(\bm{x},\bm{\xi},t) \approx W_{\mathrm{FGS}}(\bm{x},\bm{\xi},t) = \int W_{\mathrm{FGS}}(\bm{x},\bm{\xi};\bm{q},\bm{p},t) \, \pi(\bm{q},\bm{p}) \, \mathrm{d}\bm{q}\mathrm{d}\boldsymbol{p}.
    \label{eq:final_approximation_integral}
\end{equation}

\subsection{Construction of the algorithm}
The integral representation of the solution, \cref{eq:final_approximation_integral}, is the cornerstone of our sampling algorithm. By interpreting $\pi(\bm{q},\bm{p})$ as a PDF, the integral can be written as an expectation over the random vector $\bm{z}=(\bm{q}^T, \bm{p}^T)^T \sim \pi(\bm{z})$:
\begin{equation}
    W_{\mathrm{FGS}}(\bm{x},\bm{\xi},t) = \mathbb{E}_{\bm{z}\sim \pi}[W_{\mathrm{FGS}}(\bm{x},\bm{\xi};\bm{z},t)],
    \label{eq:expectation_form}
\end{equation}
where $W_{\mathrm{FGS}}(\bm{x},\bm{\xi};\bm{z},t)$ denotes the evolved wavepacket with initial center $\bm{z}$.

This expectation form is naturally suited for a Monte Carlo approximation. By the law of large numbers, the integral can be approximated by the sample mean of $M$ independent and identically distributed (i.i.d.) samples $\{\bm{z}_j\}_{j=1}^M$ drawn from the distribution $\pi(\bm{z})$:
\begin{equation}
    W_{\mathrm{FGS}}(\bm{x},\bm{\xi},t) \approx \frac{1}{M}\sum_{j=1}^M W_{\mathrm{FGS}}(\bm{x},\bm{\xi};\bm{z}_j,t).
    \label{eq:monte_carlo_approximation}
\end{equation}
The Monte Carlo approximation \eqref{eq:monte_carlo_approximation} naturally leads to a practical sampling algorithm, which consists of three main stages.

First, in the initialization step, we generate $M$ independent and identically distributed (i.i.d.) samples for the wavepacket centers, $\{\bm{z}_j\}_{j=1}^M$, by drawing from the Gaussian PDF $\pi(\bm{z})$ defined in \cref{eq:initial_Gaussian_distribution}.

Second, in the evolution step, each sample $\bm{z}_j$ is used to set the initial conditions for a corresponding wavepacket. Specifically, the sample defines the initial center $(\bm{q}_j(0), \bm{p}_j(0)) = \bm{z}_j$, while the initial amplitude $A_j(0)$ and covariance matrix $\bm{\Sigma}_j(0)$ are identical for all packets, given by \cref{eq:single_packet_initial_conditions}. The full set of parameters for each of the $M$ wavepackets is then propagated from $t=0$ to the final time $t=T$ by numerically integrating the system of ODEs \eqref{eq:G_dynamics_open}-\eqref{eq:p_dynamics_open}.

Finally, in the reconstruction step, the full Wigner function at the final time, $W_{\mathrm{FGS}}(\bm{x},\bm{\xi},T)$, is constructed by taking the sample mean of all $M$ evolved wavepackets, as prescribed by \cref{eq:monte_carlo_approximation}.

The complete procedure is formally summarized in Algorithm \ref{alg:sampling}.

\begin{algorithm}
\caption{The Frozen Gaussian Sampling (FGS) Algorithm}
\label{alg:sampling}
\begin{algorithmic}[1] 
\STATE \textbf{Input:} Initial state parameters $\bm{x}_0, \bm{\xi}_0, \bm{\Sigma}_0$; number of samples $M$; final time $T$.
\STATE \textbf{Output:} The approximated Wigner function $W_{\mathrm{FGS}}(\bm{x},\bm{\xi},T)$.

\STATE \textbf{Step 1: Initialization}
\STATE Generate $M$ i.i.d. samples $\{\bm{z}_j = (\bm{q}_j^T, \bm{p}_j^T)^T\}_{j=1}^M$ from the Gaussian PDF $\pi(\bm{z})$ in \cref{eq:initial_Gaussian_distribution}.

\STATE \textbf{Step 2: Evolution}
\FOR{$j = 1$ to $M$}
    \STATE Set initial conditions for the $j$-th wavepacket: 
    $$ \bm{q}_j(0)=\bm{q}_j, \quad \bm{p}_j(0)=\bm{p}_j, \quad \bm{\Sigma}_j(0)=\bm{\Sigma}_0, \quad A_j(0) = \frac{1}{(2\pi\varepsilon)^n\sqrt{\det(\bm{\Sigma}_0)}}. $$
    \STATE Integrate the ODEs \eqref{eq:G_dynamics_open}-\eqref{eq:p_dynamics_open} for the wavepacket parameters from $t=0$ to $t=T$ using a numerical solver (e.g., fourth-order Runge-Kutta) to obtain the final parameters $\{\bm{q}_j(T), \bm{p}_j(T), A_j(T), \bm{\Sigma}_j(T)\}$.
\ENDFOR

\STATE \textbf{Step 3: Reconstruction}
\STATE Reconstruct the final Wigner function using the sample mean: $$W_{\mathrm{FGS}}(\bm{x},\bm{\xi},T) = \frac{1}{M}\sum_{j=1}^M A_j(T) \exp\left(-\frac{1}{2\varepsilon}T(\bm{x}, \bm{\xi}, \bm{q}_j(T), \bm{p}_j(T); \bm{\Sigma}_j(T))\right).$$
\STATE \textbf{return} $W_{\mathrm{FGS}}(\bm{x},\bm{\xi},T)$.
\end{algorithmic}
\end{algorithm}

\subsection{Error Analysis for Physical Observables}

In this subsection, we analyze the statistical sampling error inherent in the FGS algorithm when computing physical observables. A central objective of this analysis is to quantify the error's sensitivity to the semiclassical parameter $\varepsilon$.

Such sensitivity poses a well-known computational challenge, as the cost of many semiclassical methods scales unfavorably with $\varepsilon$, with the notable exception of the FGS applied to the Schr\"odinger equation with Gaussian initial data \cite{xie2024frozen, huang2023efficient, chai2024frozen}. We observe a similar dependency in the general application of our method; specifically, when the FGS is used to directly compute the Wigner function, the sampling error exhibits scaling with the semiclassical parameter.

However, a distinct advantage emerges when the FGS algorithm is employed to compute physical observables. In this context, the sampling error becomes independent of $\varepsilon$. We now provide a rigorous proof of this property, which constitutes a primary merit of our approach.

Consider the statistical sampling error for the expectation value of a general physical observable, $a(\bm{x},\bm{\xi})$. 
Let $\langle a(t) \rangle_{\mathrm{FGS}}$ denote the FGS approximation of the expectation value, computed using the full integral solution from \cref{eq:final_approximation_integral}:
\begin{equation}
     \langle a(t) \rangle_{\mathrm{FGS}} := \iint a(\bm{x},\bm{\xi})W_{\mathrm{FGS}}(\bm{x},\bm{\xi},t)\,\mathrm{d}\bm{x}\mathrm{d}\bm{\xi}.
\end{equation}
By substituting the integral form of $W_{\mathrm{FGS}}$ and swapping the order of integration, this becomes an expectation over the sampling distribution $\pi(\bm{z})$:
\begin{equation}
    \langle a(t) \rangle_{\mathrm{FGS}} = \int a(\bm{z},t) \pi(\bm{z}) \,\mathrm{d}\bm{z} = \mathbb{E}_{\bm{z}\sim\pi}[a(\bm{z},t)],
    \label{eq:fgs_integral_expectation}
\end{equation}
where $a(\bm{z},t)$ is the observable of a single evolved wavepacket, $W_{\mathrm{FGS}}(\bm{x},\bm{\xi};\bm{z},t)$:
\begin{equation}
     a(\bm{z},t) = \iint a(\bm{x},\bm{\xi}) W_{\mathrm{FGS}}(\bm{x},\bm{\xi};\bm{z},t)\,\mathrm{d}\bm{x}\mathrm{d}\bm{\xi}.
\end{equation}

The FGS algorithm approximates this integral value $\langle a(t) \rangle_{\mathrm{FGS}}$ with the $M$-sample mean, $\langle a \rangle_{\mathrm{FGS}}^M = \frac{1}{M}\sum_{j=1}^M a(\bm{z}_j,t)$. The statistical sampling error $\mathcal{E}_{a}$ is the root mean squared error (RMSE) of this Monte Carlo approximation:
\begin{equation}
    \mathcal{E}_{a}(\varepsilon,M,t) := \left( \mathbb{E}_{\left\{\bm{z}_j\right\}_{j=1}^M\sim\pi}\left[ \left( \langle a(t) \rangle_{\mathrm{FGS}} - \frac{1}{M}\sum_{j=1}^M a(\bm{z}_j,t) \right)^2 \right] \right)^{1/2} = \sqrt{\frac{\mathrm{Var}_{\bm{z}\sim\pi}[a(\bm{z},t)]}{M}},
    \label{eq:rmse_error_definition}
\end{equation}
where $\mathrm{Var}[X] = \mathbb{E}[X^2] - (\mathbb{E}[X])^2$ is the variance.

\begin{theorem}[Error Estimate for Lipschitz and Quadratic Observables]
\label{thm:error_estimate}
We assume the potential $V(x)$ satisfies the subquadratic condition, i.e., its second-order derivative is globally bounded:
\begin{equation}
    \sup_{q \in \mathbb{R}} |V''(q)| \le C_V < \infty.
    \label{eq:subquadratic_condition}
\end{equation}
For a one-dimensional system with a Gaussian initial condition \eqref{eq: Gaussian initial value}, the root-mean-square sampling error of the FGS algorithm for:
\begin{itemize}
    \item[(i)] \textbf{Lipschitz continuous observables} $a(x,\xi)$ (with Lipschitz constant $L_a$), and
    \item[(ii)] \textbf{Quadratic observables} $a(x,\xi)$ (e.g., kinetic energy, harmonic potential),
\end{itemize}
is bounded by:
\begin{equation}
    \mathcal{E}_{a}(\varepsilon,M,t) \le C(t) \sqrt{\frac{1-\varepsilon}{M}},
    \label{eq:error_estimate}
\end{equation}
where the constant $C(t)$ depends on time but is independent of $M$ and $\varepsilon$.
\end{theorem}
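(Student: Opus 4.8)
The plan is to bound the variance $\mathrm{Var}_{\bm z\sim\pi}[a(\bm z,t)]$ by $C(t)^2(1-\varepsilon)$ in each of the two cases, after which the claimed bound follows directly from \cref{eq:rmse_error_definition}. The key structural fact I would exploit is that the Gaussian ansatz makes the single-wavepacket observable $a(\bm z,t)$ an explicit function of the evolved center $\bm z(t)=(\bm q(t),\bm p(t))$ and the evolved parameters $A(t),\bm\Sigma(t)$; crucially, by \cref{eq:G_dynamics_open}--\cref{eq:A_dynamics_open} the parameters $\bm\Sigma(t)$ and $A(t)$ are \emph{independent of the initial center} $\bm z$ (their ODEs only see $\bm q(t)$ through $\nabla^2_{\bm x}V(\bm q(t))$, but under the subquadratic assumption \eqref{eq:subquadratic_condition} and after the local-quadratic approximation the relevant Hessian is evaluated along the trajectory — I need to be careful here; see the obstacle below). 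Assuming for the moment that $\bm\Sigma(t),A(t)$ are common to all samples, $a(\bm z,t)$ depends on $\bm z$ only through the evolved center $\bm z(t)$.

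First I would record that the center dynamics \eqref{eq:q_dynamics_open}--\eqref{eq:p_dynamics_open} are \emph{affine} in $(\bm q,\bm p)$ — they are a linear (time-dependent) system $\dot{\bm z}(t)=\mathcal{A}(t)\bm z(t)+\bm b(t)$, where $\mathcal A(t)$ collects the $\Gamma,\mathcal M$ and $-\nabla^2 V$ contributions. Hence the flow map $\bm z\mapsto\bm z(t)$ is affine: $\bm z(t)=\Phi(t)\bm z+\bm c(t)$ for a fundamental matrix $\Phi(t)$ depending only on $t$ (through the trajectory of one reference center, since the Hessian is evaluated along $\bm q(t)$ — again the obstacle). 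Consequently, if $\bm z\sim\pi$ is Gaussian with covariance $(1-\varepsilon)\bm\Sigma_0$, then $\bm z(t)$ is Gaussian with covariance $(1-\varepsilon)\,\Phi(t)\bm\Sigma_0\Phi(t)^T$, i.e.\ the spread of the evolved centers is exactly $\sqrt{1-\varepsilon}$ times an $\varepsilon$-independent quantity. For case (i), Lipschitz $a$ on phase space gives $|a(\bm x,\bm\xi)W_\varepsilon(\cdots)|$ integrated against a Gaussian, so $a(\bm z,t)$ is itself Lipschitz in $\bm z(t)$ with constant $L_a$ (the convolution with the fixed-shape Gaussian $W_{\mathrm{FGS}}(\cdot;\bm z,t)$, which has unit mass, preserves the Lipschitz constant), hence Lipschitz in $\bm z$ with constant $L_a\|\Phi(t)\|$; the Gaussian Poincaré inequality then gives $\mathrm{Var}_{\bm z\sim\pi}[a(\bm z,t)]\le L_a^2\|\Phi(t)\|^2\,(1-\varepsilon)\,\lambda_{\max}(\bm\Sigma_0)$, which is of the required form. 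For case (ii), $a$ quadratic forces $a(\bm z,t)$ to be a quadratic polynomial in $\bm z(t)$ (the Gaussian integral of a quadratic against $W_{\mathrm{FGS}}(\cdot;\bm z,t)$ is quadratic in the center plus an $\varepsilon$-dependent trace term that is constant in $\bm z$ and hence drops out of the variance); the variance of a fixed quadratic form in a Gaussian vector of covariance $(1-\varepsilon)\Sigma_t$ scales like $(1-\varepsilon)^2$ from the pure-quadratic part plus $(1-\varepsilon)$ from the linear part, so it is $\le C(t)^2(1-\varepsilon)$ for $\varepsilon\in(0,1)$.

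The main obstacle is the assertion that $\bm\Sigma(t)$, $A(t)$, and the fundamental matrix $\Phi(t)$ are genuinely $\bm z$-independent. In the exact nonlinear system the Hessian $\nabla^2_{\bm x}V(\bm q_j(t))$ in \eqref{eq:C_matrix_def}--\eqref{eq:p_dynamics_open} is evaluated along each sample's own trajectory, which does depend on $\bm z_j$; so strictly the $\bm\Sigma_j(t)$ and $A_j(t)$ differ across samples, and the flow is not exactly affine. The resolution I would pursue: use the subquadratic bound \eqref{eq:subquadratic_condition} to control how much $\nabla^2 V(\bm q_j(t))$ can vary as $\bm q_j$ ranges over the $O(\sqrt{1-\varepsilon})$-wide (actually $O(1)$, independent of $\varepsilon$) support of the center distribution — Grönwall then shows the sample-to-sample deviations in $\Phi_j(t)$, $\bm\Sigma_j(t)$, $A_j(t)$ are bounded by an $\varepsilon$-independent constant $C(t)$ (uniform in $\varepsilon$ since these quantities never depend on $\varepsilon$ at all), and feed this into a triangle-inequality splitting of the variance: $a(\bm z,t)=a_{\mathrm{lin}}(\bm z,t)+r(\bm z,t)$ where $a_{\mathrm{lin}}$ is the "frozen-Hessian" approximation treated exactly as above and the remainder satisfies $|r(\bm z,t)-\mathbb E r|\le C(t)$ pointwise, contributing an $O(C(t)^2)$ — not $O((1-\varepsilon)C(t)^2)$ — term. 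This would weaken the bound to $\mathcal E_a\le C(t)/\sqrt M$, losing the clean $\sqrt{1-\varepsilon}$ factor; to recover the stated estimate I would instead argue that the Hessian-variation term is itself proportional to the center spread (since $\nabla^2 V$ is Lipschitz along trajectories only if $V\in C^3$, or, absent that, directly bound $\|\Phi_j(t)-\Phi_k(t)\|\le C(t)\|\bm z_j-\bm z_k\|$ via Grönwall on the variational equation, which holds under just \eqref{eq:subquadratic_condition} with $\|\nabla^2 V\|$ the Lipschitz-type modulus) so that the remainder also inherits a factor $(1-\varepsilon)$ in its variance. Carefully quantifying this propagation-of-perturbations estimate — showing every $\bm z$-dependent discrepancy scales with $\|\bm z-\mathbb E\bm z\|$ and hence contributes $(1-\varepsilon)$ to the variance — is the crux of the argument; the Poincaré-inequality and Gaussian-moment computations are routine once it is in place.
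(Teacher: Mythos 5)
Your core mechanism is the same as the paper's: control the sampling variance of $a(\bm{z},t)$ by showing the time-$t$ centers inherit the $O(1-\varepsilon)$ spread of the initial distribution $\pi$, with trajectory sensitivity controlled by Gr\"onwall under the subquadratic condition, and with the quadratic case reduced to fourth moments of a Gaussian. However, your route has a genuine flaw at the first step: the center dynamics \eqref{eq:q_dynamics_open}--\eqref{eq:p_dynamics_open} are \emph{not} affine in $\bm{z}$ for non-quadratic $V$, because the force $-\nabla_{\bm{x}}V(\bm{q}_j(t))$ is evaluated along each sample's own fully nonlinear trajectory; there is no single fundamental matrix $\Phi(t)$, and the law of $\bm{z}(t)$ is not Gaussian, so the clean "covariance $=(1-\varepsilon)\Phi\bm{\Sigma}_0\Phi^T$" picture and the Gaussian Poincar\'e step do not apply as stated. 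You flag this, but your proposed repair --- bounding $\|\Phi_j(t)-\Phi_k(t)\|$ across samples --- genuinely requires a modulus of continuity for $\nabla_{\bm{x}}^2V$ (i.e.\ $V\in C^3$ or a Lipschitz Hessian), which is neither assumed nor needed. The paper's resolution is simpler and you should adopt it: one never compares fundamental matrices across samples. It suffices that each sample's \emph{own} variational matrix $\partial\bm{z}(t)/\partial\bm{z}(0)$ is uniformly bounded by $C(t)$ (Gr\"onwall, using only $\sup|V''|\le C_V$), whence the mean value theorem gives the pointwise bound $\|\bm{z}(t)-\bar{\bm{z}}(t)\|\le C(t)\|\bm{z}(0)-\bar{\bm{z}}(0)\|$ relative to the mean trajectory; combined with $\mathrm{Var}[X]\le\mathbb{E}[(X-c)^2]$ this yields $\mathrm{Var}\le C(t)^2(1-\varepsilon)\operatorname{tr}(\bm{\Sigma}_0)$ with no affine structure, and $\mathbb{E}\|\delta\bm{z}(0)\|^4\lesssim(1-\varepsilon)^2$ handles the purely quadratic part.

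On one point you are more careful than the paper: you observe that $a(\bm{z},t)$ also depends on $\bm{z}$ through $\bm{\Sigma}_j(t)$ and $A_j(t)$, whose ODEs see the Hessian along the sample's own trajectory, whereas the paper silently identifies $a(\bm{z},t)$ with $a(\bm{z}(t))$. This discrepancy is harmless but deserves the sentence you give it: for Lipschitz $a$ it is at most $L_a\sqrt{\varepsilon\operatorname{tr}\bm{\Sigma}_j(t)}$ per sample, and for quadratic $a$ it is the trace correction $\varepsilon\operatorname{Tr}(\mathbf{Q}\bm{\Sigma}_j(t))$ (with $\bm{\Sigma}_j(t)$ uniformly bounded by Gr\"onwall on \eqref{eq:Sigma_dynamics_open}); its contribution to the variance is $O(\varepsilon)$, which is absorbed into $C(t)(1-\varepsilon)$ on any range $\varepsilon\in(0,\varepsilon_0]$ with $\varepsilon_0<1$. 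If you replace the affine-flow/fundamental-matrix scaffolding with the uniform-Jacobian argument above, your proof closes and in fact patches a gap the paper leaves implicit.
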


\begin{proof}
\textbf{Case 1: Linear Observables.}
For brevity, we present the proof for $a=x$. The squared RMSE \eqref{eq:rmse_error_definition} is the variance of the evolved center position, scaled by $M$:
\begin{equation}
    (\mathcal{E}_{x})^2 = \frac{1}{M}\mathrm{Var}[q(t)] = \frac{1}{M}\mathbb{E}_{\bm{z}\sim\pi}\left[ (q(t) - \langle q(t) \rangle_{\mathrm{FGS}})^2 \right].
    \label{eq:var_def}
\end{equation}
The core of the proof is to bound this variance. Let $\bar{q}(t)$ be the "mean trajectory," i.e., the trajectory initiated at the mean of the initial distribution, $\bar{z}(0) = \mathbb{E}[\bm{z}(0)] = (x_0, \xi_0)$.

A fundamental property of variance is that it is the minimum second moment. That is, for any constant $c$, $\mathrm{Var}[X] = \mathbb{E}[(X - \mathbb{E}[X])^2] \le \mathbb{E}[(X-c)^2]$. We can therefore bound the true variance by the second moment around the (easier to analyze) mean trajectory $\bar{q}(t)$:
\begin{equation}
    \mathrm{Var}[q(t)] \le \mathbb{E}_{\bm{z}\sim\pi}\left[ (q(t) - \bar{q}(t))^2 \right].
    \label{eq:var_bound_by_mean_traj}
\end{equation}
Our task now is to bound the term on the right-hand side. Let $\delta q(t) = q(t) - \bar{q}(t)$ be the deviation from the mean trajectory. This deviation is a function of the initial deviations $\delta q(0) = q(0) - x_0$ and $\delta p(0) = p(0) - \xi_0$.

To do this, we analyze the sensitivity of the trajectory to its initial conditions. Let $\bm{\chi}_{q_0}(t) = (\partial_{q(0)}q(t), \partial_{q(0)}p(t))^T$. Differentiating the equations of motion \eqref{eq:q_dynamics_open} and \eqref{eq:p_dynamics_open} with respect to $q(0)$ and applying the chain rule yields:
\begin{align}
    \partial_{q(0)}\dot{q}(t) &= \partial_{q(0)}p(t) + (\operatorname{Im}(\gamma) + \mu)\partial_{q(0)}q(t), \\
    \partial_{q(0)}\dot{p}(t) &= -\left(\nabla_x^2 V|_{x=q(t)}\right)\partial_{q(0)}q(t) + (\operatorname{Im}(\gamma) - \mu)\partial_{q(0)}p(t).
\end{align}
This is a linear ODE system $\dot{\bm{\chi}}_{q_0}(t) = A(t)\bm{\chi}_{q_0}(t)$, where the matrix $A(t)$ is
\begin{equation}
    A(t) = \begin{pmatrix}
    \operatorname{Im}(\gamma) + \mu & 1 \\
    -V''(q(t)) & \operatorname{Im}(\gamma) - \mu
    \end{pmatrix}.
\end{equation}
Crucially, because we assume the subquadratic condition \eqref{eq:subquadratic_condition}, the term $|V''(q(t))|$ is globally bounded by $C_V$. This ensures that the matrix $A(t)$ is uniformly bounded for all $t$ and all trajectories $q(t)$. By Gr\"onwall's inequality, the solution $\bm{\chi}_{q_0}(t)$ is bounded, meaning there exists a constant $C_1(t)$ such that $|\partial_{q(0)}q(t)| \le C_1(t)$.

A parallel argument for derivatives with respect to $p(0)$ shows that $|\partial_{p(0)}q(t)|$ is also bounded by some constant $C_2(t)$.

By the Mean Value Theorem, the deviation $\delta q(t)$ is:
\begin{equation*}
    \delta q(t) = \left. \frac{\partial q}{\partial q(0)} \right|_{\tilde{z}(0)} \delta q(0) + \left. \frac{\partial q}{\partial p(0)} \right|_{\tilde{z}(0)} \delta p(0),
\end{equation*}
where $\tilde{z}(0)$ is a point between $z(0)$ and $\bar{z}(0)$. We can now use the bounds on the derivatives and the inequality $(a+b)^2 \le 2a^2 + 2b^2$ to bound the square of the deviation:
\begin{align*}
    (\delta q(t))^2 &\le 2 \left( \left|\frac{\partial q}{\partial q(0)}\right|^2 |\delta q(0)|^2 + \left|\frac{\partial q}{\partial p(0)}\right|^2 |\delta p(0)|^2 \right) \\
    &\le 2 \left( C_1(t)^2 (\delta q(0))^2 + C_2(t)^2 (\delta p(0))^2 \right).
\end{align*}
Taking the expectation of both sides (which is the term we wanted to bound in \cref{eq:var_bound_by_mean_traj}):
\begin{align*}
    \mathbb{E}[(\delta q(t))^2] 
    &\le 2 C_1(t)^2 \mathbb{E}[(\delta q(0))^2] + 2 C_2(t)^2 \mathbb{E}[(\delta p(0))^2] \\
    &= 2 C_1(t)^2 \mathrm{Var}[q(0)] + 2 C_2(t)^2 \mathrm{Var}[p(0)].
\end{align*}
(The last step holds because $\mathbb{E}[\delta q(0)] = \mathbb{E}[q(0) - x_0] = 0$).

The initial state $\bm{z}(0)$ is sampled from the PDF $\pi(\bm{z})$ \eqref{eq:initial_Gaussian_distribution}, which has a covariance matrix $\bm{\Sigma}_\pi = (1-\varepsilon)\bm{\Sigma}_0$. Therefore, both initial variances are directly proportional to $(1-\varepsilon)$:
\begin{equation}
    \mathrm{Var}[q(0)] \propto (1-\varepsilon) \quad \text{and} \quad \mathrm{Var}[p(0)] \propto (1-\varepsilon).
\end{equation}
Since $\mathrm{Var}[q(t)] \le \mathbb{E}[(\delta q(t))^2]$, and $\mathbb{E}[(\delta q(t))^2]$ is bounded by a sum of terms all proportional to $(1-\varepsilon)$, we conclude that:
\begin{equation}
    \mathrm{Var}[q(t)] \le C'(t) (1-\varepsilon).
    \label{eq:linear_result_summary}
\end{equation}
Substituting this rigorous bound back into \cref{eq:var_def} yields the final estimate for linear observables: $\mathcal{E}_{x} \le C(t) \sqrt{(1-\varepsilon)/M}$.

\textbf{Case 2: Lipschitz Observables.}
For a general observable $a(z)$ that is Lipschitz continuous in phase space with constant $L_a$, the result follows immediately from the linear bound established above. Since $|a(z(t)) - a(\bar{z}(t))| \le L_a \|z(t) - \bar{z}(t)\|$, squaring and taking the expectation yields $\mathbb{E}[(a - \bar{a})^2] \le L_a^2 \mathbb{E}[\|\delta z(t)\|^2]$. Using the bound from \eqref{eq:linear_result_summary}, the variance is proportional to $(1-\varepsilon)$, preserving the convergence rate.

\textbf{Case 3: Quadratic Observables.}
Consider a general quadratic observable $a(z) = z^T \mathbf{Q} z + \mathbf{b}^T z + c$. We perform a Taylor expansion of $a(z(t))$ around the mean trajectory $\bar{z}(t)$. Since $a(z)$ is quadratic, the expansion terminates at the second order and is exact:
\begin{equation}
    a(z(t)) - a(\bar{z}(t)) = \nabla a(\bar{z}(t))^T \delta z(t) + \delta z(t)^T \mathbf{Q} \delta z(t),
\end{equation}
where $\delta z(t) = z(t) - \bar{z}(t)$. Using the inequality $(u+v)^2 \le 2u^2 + 2v^2$, the second moment is bounded by:
\begin{equation}
    \mathbb{E}[(a(z) - a(\bar{z}))^2] \le 2 \underbrace{\mathbb{E}[|\nabla a^T \delta z|^2]}_{I_1} + 2 \underbrace{\mathbb{E}[|\delta z^T \mathbf{Q} \delta z|^2]}_{I_2}.
\end{equation}

For the \textbf{Linear part ($I_1$)}, let $K(t) = \|\nabla a(\bar{z}(t))\|$. Using the result from Case 1 \eqref{eq:linear_result_summary}, we have:
\begin{equation}
    I_1 \le K(t)^2 \mathbb{E}[\|\delta z(t)\|^2] \le K(t)^2 C'(t) (1-\varepsilon).
\end{equation}

For the \textbf{Quadratic part ($I_2$)}, we need to bound the fourth moment. Recall from Case 1 that the deviation satisfies the pointwise bound $\|\delta z(t)\| \le C(t) \|\delta z(0)\|$ (derived via the bounded Jacobian matrix). Therefore:
\begin{equation}
    |\delta z(t)^T \mathbf{Q} \delta z(t)| \le \|\mathbf{Q}\| \|\delta z(t)\|^2 \le \|\mathbf{Q}\| C(t)^2 \|\delta z(0)\|^2.
\end{equation}
Squaring this and taking the expectation:
\begin{equation}
    I_2 = \mathbb{E}[|\delta z(t)^T \mathbf{Q} \delta z(t)|^2] \le \|\mathbf{Q}\|^2 C(t)^4 \mathbb{E}[\|\delta z(0)\|^4].
\end{equation}
The initial deviation $\delta z(0)$ follows a Gaussian distribution with covariance $\bm{\Sigma}_\pi = (1-\varepsilon)\bm{\Sigma}_0$. For a Gaussian vector $X \sim \mathcal{N}(0, \Sigma)$, the fourth moment scales as the square of the variance, specifically $\mathbb{E}[\|X\|^4] \le C_d (\operatorname{tr}(\Sigma))^2$. Thus:
\begin{equation}
    \mathbb{E}[\|\delta z(0)\|^4] = C_0 (1-\varepsilon)^2.
\end{equation}
Substituting this back, we obtain the bound for the quadratic term:
\begin{equation}
    I_2 \le \|\mathbf{Q}\|^2 C(t)^4 C_0 (1-\varepsilon)^2 = C_{quad}(t) (1-\varepsilon)^2.
\end{equation}

Combining the bounds for $I_1$ and $I_2$, the total variance is bounded by:
\begin{equation}
    \mathrm{Var}[a(z(t))] \le 2 K(t)^2 C'(t) (1-\varepsilon) + 2 C_{quad}(t) (1-\varepsilon)^2.
\end{equation}
Since $0 < \varepsilon < 1$, the linear term $(1-\varepsilon)$ dominates the quadratic term $(1-\varepsilon)^2$. Thus, the error scales as:
\begin{equation}
    \mathcal{E}_{a} = \sqrt{\frac{\mathrm{Var}}{M}} \le C(t) \sqrt{\frac{1-\varepsilon}{M}}.
\end{equation}

This completes the proof.
\end{proof}

\begin{corollary}[Error Estimate for the Wigner Equation]
\label{cor:wigner_error}
In the absence of dissipation (i.e., for the closed-system Wigner equation \eqref{eq:Wigner_equation}) with a Gaussian initial condition, the same error estimate \eqref{eq:error_estimate} holds for the computation of linear observables.
\end{corollary}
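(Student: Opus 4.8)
The plan is to observe that Corollary~\ref{cor:wigner_error} is nothing more than Case~1 of the proof of Theorem~\ref{thm:error_estimate} specialized to the dissipation-free setting. The closed-system Wigner equation \eqref{eq:Wigner_equation} is recovered from the WFP equation \eqref{eq:WFP_main} by setting all environmental parameters to zero, i.e.\ $\gamma_k=0$ and $\mu_k=0$ for all $k$; under this choice the auxiliary matrices satisfy $\Gamma_1=\Gamma_2=\widetilde{\mathcal{M}}=\bm{B}=\bm{0}$, so the parameter ODEs \eqref{eq:G_dynamics_open}--\eqref{eq:p_dynamics_open} collapse to the closed-system dynamics reviewed in Section~\ref{subsec:review_closed_system}: the amplitude is constant \eqref{eq:O_1_dA_dt}, the inverse covariance obeys the Lyapunov equation \eqref{eq:G_dynamics_closed}, and the center $(\bm q,\bm p)$ follows the Hamiltonian flow \eqref{eq:center_dynamics}. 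The initial value decomposition \eqref{eq:epsilon_scaled_kernel}--\eqref{eq:final_approximation_integral} and the sampling distribution $\pi(\bm z)$ with covariance $(1-\varepsilon)\bm{\Sigma}_0$ are unchanged, since they depend only on the Gaussian initial datum and on the linearity of the evolution equation, not on the dissipative terms.

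First I would record that, for a linear observable (say $a=x$), the observable of a single evolved wavepacket equals its center coordinate, $a(\bm z,t)=q(t)$. Indeed, by \eqref{eq:O_1_dA_dt} the amplitude is frozen at $A(0)=(2\pi\varepsilon)^{-n}(\det\bm{\Sigma}_0)^{-1/2}$, while $\det\bm{G}(t)$ is constant under \eqref{eq:G_dynamics_closed} because $\bm C(t)$ has vanishing diagonal blocks, so $\tfrac{\mathrm d}{\mathrm dt}\log\det\bm G=-2\operatorname{Tr}\bm C=0$; hence every wavepacket retains unit phase-space mass and $\iint x\,W_{\mathrm{FGS}}(\bm x,\bm\xi;\bm z,t)\,\mathrm d\bm x\,\mathrm d\bm\xi=q(t)$. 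Consequently the squared sampling error is $(\mathcal E_x)^2=\tfrac1M\operatorname{Var}_{\bm z\sim\pi}[q(t)]$, exactly the quantity analyzed in \eqref{eq:var_def}.

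Next I would run the sensitivity argument of Case~1 essentially verbatim. Bounding the variance by the second moment about the mean trajectory $\bar q(t)$, I differentiate \eqref{eq:center_dynamics} with respect to the initial data; the Jacobian $\bm\chi_{q_0}(t)=(\partial_{q(0)}q(t),\partial_{q(0)}p(t))^T$ solves $\dot{\bm\chi}_{q_0}=A(t)\bm\chi_{q_0}$ with
\begin{equation*}
    A(t)=\begin{pmatrix} 0 & 1 \\ -V''(q(t)) & 0 \end{pmatrix},
\end{equation*}
which is the $\gamma=\mu=0$ instance of the matrix appearing in the theorem. The subquadratic condition \eqref{eq:subquadratic_condition} makes $A(t)$ uniformly bounded along every trajectory, so Gr\"onwall's inequality gives $|\partial_{q(0)}q(t)|\le C_1(t)$ and, by a parallel argument, $|\partial_{p(0)}q(t)|\le C_2(t)$. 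The Mean Value Theorem then yields the pointwise bound $|\delta q(t)|^2\le 2C_1(t)^2|\delta q(0)|^2+2C_2(t)^2|\delta p(0)|^2$; taking expectations and using that the initial deviations have variances proportional to $(1-\varepsilon)$ (because $\bm{\Sigma}_\pi=(1-\varepsilon)\bm{\Sigma}_0$) produces $\operatorname{Var}[q(t)]\le C'(t)(1-\varepsilon)$, and hence $\mathcal E_x\le C(t)\sqrt{(1-\varepsilon)/M}$, the same estimate \eqref{eq:error_estimate}.

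Since every ingredient is already in place, there is no real obstacle: the proof is a corollary in the literal sense. The only point that deserves a line of care is the identification $a(\bm z,t)=q(t)$, i.e.\ verifying that the Hamiltonian flow preserves the phase-space normalization of each frozen Gaussian, which follows from the trace-free block structure of $\bm C(t)$. (The FGA modeling error between the exact Wigner evolution and the ansatz \eqref{eq:Gaussian_sum_ansatz_time_dependent} is irrelevant here, as the corollary concerns only the Monte Carlo sampling error $\mathcal E_a$.)
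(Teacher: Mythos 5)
Your proposal is correct and follows exactly the route the paper intends: the corollary is the specialization of Case~1 of Theorem~\ref{thm:error_estimate} to $\gamma_k=\mu_k=0$, under which the parameter ODEs reduce to the closed-system dynamics \eqref{eq:center_dynamics}--\eqref{eq:G_dynamics_closed} and the Gr\"onwall/sensitivity argument goes through unchanged. Your extra check that $\det\bm{G}(t)$ is conserved (so each frozen Gaussian keeps unit mass and $a(\bm z,t)=q(t)$ exactly) is a detail the paper leaves implicit, and it is a worthwhile addition.
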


To appreciate the significance of this result, it is instructive to compare it with the time-splitting spectral method, which is a leading approach for computing observables in closed quantum systems. It has been shown that this method can compute observables accurately with a time step independent of $\varepsilon$. However, its fundamental limitation as a grid-based method is the requirement to resolve the $O(\varepsilon)$ spatial oscillations of the quantum state, meaning the computational cost associated with spatial resolution remains high \cite{bao2002time,golse2021convergence}. In contrast, our FGS algorithm, by its very nature as a sampling method, is inherently mesh-free. As highlighted in the introduction, our approach replaces the challenge of resolving a single, complex function on a grid with the task of evolving $M$ independent wavepackets. The result in Corollary \ref{cor:wigner_error} therefore implies that our method, which only pays a statistical cost related to $M$, is robust with respect to $\varepsilon$ in both time and space. This highlights a key advantage of our sampling approach for efficiently computing physical observables in the semiclassical regime.

\section{Experimental results}
\label{sec:experiments}
In this section, we present a series of numerical experiments to demonstrate the performance of our proposed Frozen Gaussian sampling (FGS) algorithm. The section is structured as follows. First, we validate the accuracy and convergence properties of the FGS algorithm against benchmark solutions. Second, we demonstrate its superior stability for long-time simulations compared to traditional grid-based methods. Finally, we apply the validated algorithm to conduct numerical explorations of the existence of steady states for various non-harmonic potentials.

Unless otherwise specified, all experiments use a Gaussian wave packet as the initial condition:
\begin{equation} 
\label{eq:initial_condition}
W(x,\xi,0)=\frac{1}{2\pi\sqrt{\det(\Sigma_0)}}\exp{\left(-\frac{1}{2}\begin{pmatrix}
	x-x_0\\
	\xi-\xi_0
\end{pmatrix}^T\Sigma_0^{-1}\begin{pmatrix}
	x-x_0\\
	\xi-\xi_0
\end{pmatrix}\right)},
\end{equation}
where $\Sigma_0$ is the initial covariance matrix. This initial condition represents a wave packet centered at $(x_0, \xi_0)$ in phase space. The system of ordinary differential equations for the wavepacket parameters is integrated using the fourth-order Runge-Kutta method with a time step of $\Delta t=0.01$.


\subsection{Validation of Accuracy and Convergence}
\label{subsec:validation}
In this subsection, we validate the accuracy and convergence of the FGS algorithm. For potentials where a fully analytical solution is unavailable, we use highly-resolved numerical or semi-analytical solutions as benchmarks. We will demonstrate the expected convergence rates of the FGS algorithm for both the Wigner function and key physical observables, with respect to the sample size $M$ and the semiclassical parameter $\varepsilon$. Furthermore, we will show its superior performance in long-time simulations where the TSSP method can be limited by its fixed computational domain.

For the convergence tests, the reference TSSP solution is computed on a phase space domain of $[-4,4]\times[-4,4]$ with a fine grid resolution of $\mathrm{d}x=\mathrm{d}\xi=0.002$. The simulation runs until a final time of $T=0.5$ with a TSSP time step of $\mathrm{d}t=0.001$. Unless stated otherwise, all numerical experiments are performed for a one-dimensional system ($n=1$). Consequently, we drop the component index $k$ and set the physical parameters in \cref{eq:dissipative_coeffs} to $\alpha=0.4, \beta=0.1,\gamma=1\mathrm{i}$ and $\mu=-1$. The initial covariance matrix is $\Sigma_0 = \mathrm{diag}(5, 5)$.

To quantify the accuracy, we evaluate two types of errors. The error of the Wigner function itself is measured by the relative $L^2$ norm:
\begin{equation}
\label{eq:l2_error_wigner}
\mathrm{error}_{L^2}(W_{\mathrm{FGS}}) := \frac{\|W_{\mathrm{FGS}}-W_{\mathrm{ref}}\|_{L^2}}{\|W_{\mathrm{ref}}\|_{L^2}} = \frac{\left(\int |W_{\mathrm{FGS}}-W_{\mathrm{ref}}|^2\mathrm{d}x\mathrm{d}\xi\right)^{1/2}}{\left(\int |W_{\mathrm{ref}}|^2\mathrm{d}x\mathrm{d}\xi\right)^{1/2}}.
\end{equation}
For a physical observable $a(x,\xi)$, we compute the absolute error of its expectation value:
\begin{equation}
\label{eq:abs_error_obs}
\mathrm{error}(\langle a \rangle) := |\langle a \rangle_{\mathrm{FGS}} - \langle a \rangle_{\mathrm{ref}}|.
\end{equation}
The reference solution $W_{\mathrm{ref}}$ and its corresponding observables are obtained as follows:
\begin{itemize}
    \item For the harmonic potential, the Wigner function is known to remain a Gaussian for all time. We obtain a high-precision semi-analytical reference solution by numerically solving the system of ODEs that governs the evolution of the Gaussian's parameters (i.e., the mean and covariance matrix).
    \item For other potentials, the reference solution is the benchmark result obtained from the TSSP \cite{yi2025time} method using the aforementioned numerical parameters.
\end{itemize}
\begin{remark}
    Since $W_{\mathrm{FGS}}$ and $\langle a \rangle_{\mathrm{FGS}}$ are Monte Carlo estimates, the errors defined in \cref{eq:l2_error_wigner} and \cref{eq:abs_error_obs} are random variables. We therefore compute their expected values, which are estimated in our tables by the sample mean over $N_{\text{repeat}}=500$ independent runs.
\end{remark}

\subsubsection{Example 1: Harmonic Potential.}
We begin with a harmonic potential, $V(x)=\frac{1}{2}x^2+x$, for which a semi-analytical solution is available. The initial wave packet is centered at $(x_0, \xi_0) = (-0.1, 0.2)$. This example serves to numerically validate our theoretical error analysis and to demonstrate the most significant advantage of the FGS algorithm: its robustness to the semiclassical parameter $\varepsilon$ when computing physical observables.

The numerical errors for the observables $\langle x \rangle$ and $\langle \xi \rangle$ are presented in Tables \ref{table:example 1-x} and \ref{table:example 1-xi}, respectively. The data clearly demonstrates the primary highlight of our algorithm: for a fixed sample size $M$, the errors are remarkably independent of the semiclassical parameter $\varepsilon$. For instance, in Table \ref{table:example 1-x}, as $\varepsilon$ decreases from $1/16$ to $1/128$, the error for $M=800$ barely changes (from $0.011990$ to $0.012335$). This result is a direct numerical confirmation of our theoretical finding in Theorem \ref{thm:error_estimate}, which showed the error scales with $\sqrt{1-\varepsilon}$.

Second, for a fixed $\varepsilon$, the errors exhibit the expected $\mathcal{O}(M^{-1/2})$ convergence rate, which is consistent with the statistical nature of a Monte Carlo method. These two trends are visualized in the log-log plots in Figure \ref{fig:example 1-log plot}. The tight clustering of the error curves for different $\varepsilon$ values visually confirms the algorithm's robustness to the semiclassical parameter, while the near-parallel slopes of the lines confirm the $\mathcal{O}(M^{-1/2})$ scaling.

Finally, Figure \ref{fig:example 1} provides a qualitative comparison, showing an excellent visual agreement between the Wigner function computed by the FGS algorithm (with $M=3200$) and the reference solution at $T=1.0$. This demonstrates that our method accurately captures the key phase space dynamics, including the rotation and shearing of the wavepacket.

\begin{table}[htbp]
\centering
\caption{Absolute errors of the mean position $\langle x\rangle$ for Example 1 at $T=1.0$. The table shows the errors as a function of the sample size $M$ and the semiclassical parameter $\varepsilon$. The reference solution is the semi-analytical one described in Section \ref{subsec:validation}.}
\label{table:example 1-x}
\begin{tabular}{c|cccc}
\hline
Error & $\varepsilon = 1/16$ & $\varepsilon = 1/32$ & $\varepsilon = 1/64$ & $\varepsilon = 1/128$ \\
\hline
$M = 100$ & 0.032456 & 0.032992 & 0.033257 & 0.033389 \\
$M = 200$ & 0.023355 & 0.023741 & 0.023931 & 0.024026 \\
$M = 400$ & 0.016967 & 0.017248 & 0.017386 & 0.017455 \\
$M = 800$ & 0.011990 & 0.012189 & 0.012287 & 0.012335 \\
$M = 1600$ & 0.0084869 & 0.0086272 & 0.0086965 & 0.0087309 \\
\hline
\end{tabular}
\end{table}

\begin{table}[htbp]
\centering
\caption{Absolute errors of the mean momentum $\langle \xi\rangle$ for Example 1 at $T=1.0$. The table shows the errors as a function of the sample size $M$ and the semiclassical parameter $\varepsilon$. The reference solution is the semi-analytical one described in Section \ref{subsec:validation}.}
\label{table:example 1-xi}
\begin{tabular}{c|cccc}
\hline
Error & $\varepsilon = 1/16$ & $\varepsilon = 1/32$ & $\varepsilon = 1/64$ & $\varepsilon = 1/128$ \\
\hline
$M = 100$ & 0.033264 & 0.033814 & 0.034086 & 0.034221 \\
$M = 200$ & 0.024683 & 0.025091 & 0.025293 & 0.025393 \\
$M = 400$ & 0.016555 & 0.016829 & 0.016964 & 0.017031 \\
$M = 800$ & 0.011587 & 0.011779 & 0.011874 & 0.011921 \\
$M = 1600$ & 0.008348 & 0.008486 & 0.0085541 & 0.008588 \\
\hline
\end{tabular}
\end{table}

\begin{figure}[htbp]
    \begin{subfigure}{0.48\textwidth}
        \centering
        \includegraphics[width=\textwidth]{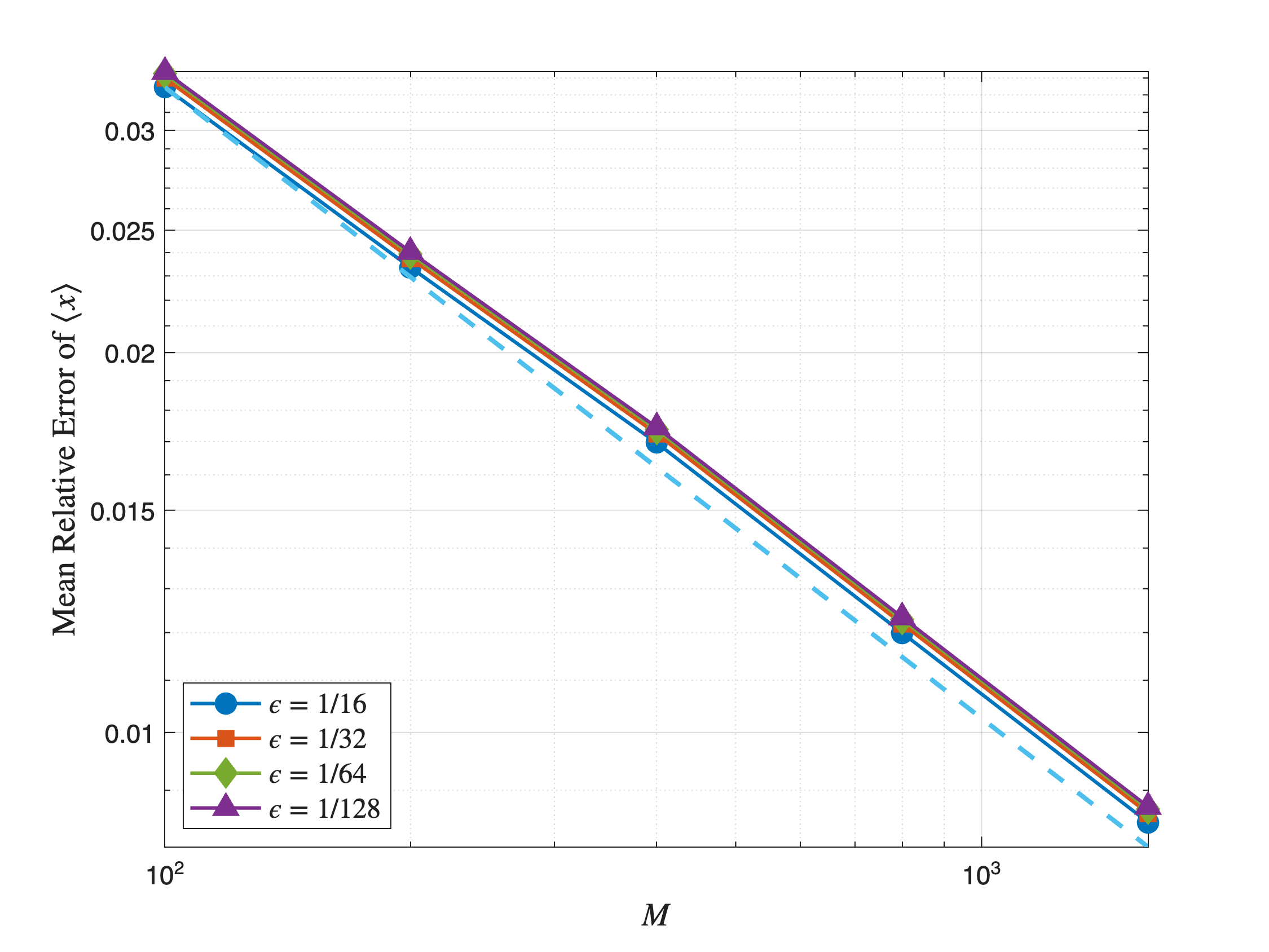}
        \label{fig:x of example 1}
    \end{subfigure}
    \hfill
    \begin{subfigure}{0.48\textwidth}
        \centering
        \includegraphics[width=\textwidth]{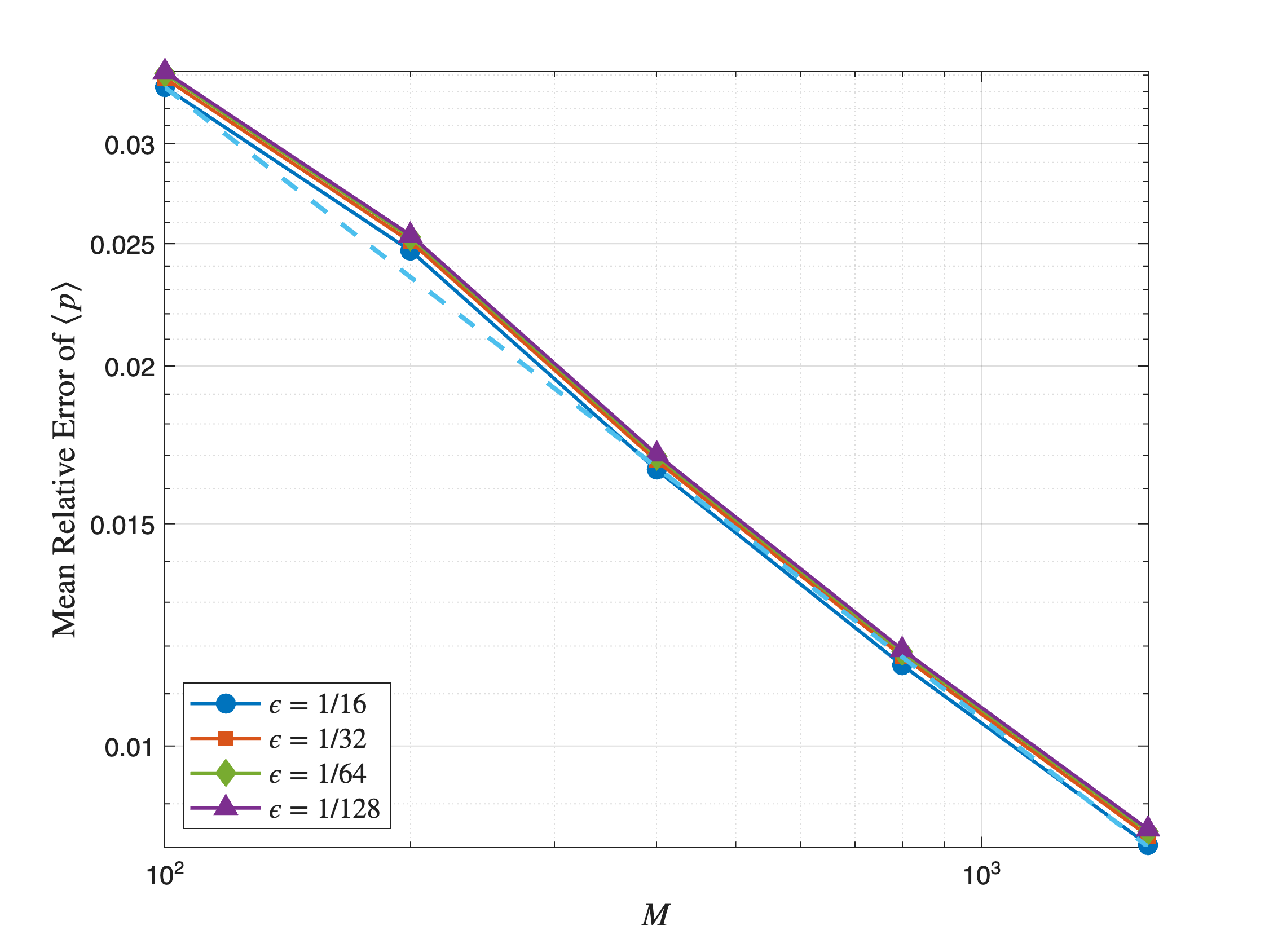}
        \label{fig:xi of example 1}
    \end{subfigure}
\caption{Log-log plot of the numerical errors versus the sample size $M$ for various values of $\varepsilon$ for Example 1 at $T=1.0$. (Left)Absolute error of the mean position $\langle x \rangle$. (Right) Absolute error of the mean momentum $\langle \xi \rangle$. The dashed lines with a slope of -1/2 are included to guide the eye, visually confirming the $\mathcal{O}(M^{-1/2})$ convergence rate.}
    \label{fig:example 1-log plot}
\end{figure}

\begin{figure}[htbp]
    \centering
    \includegraphics[width=\textwidth]{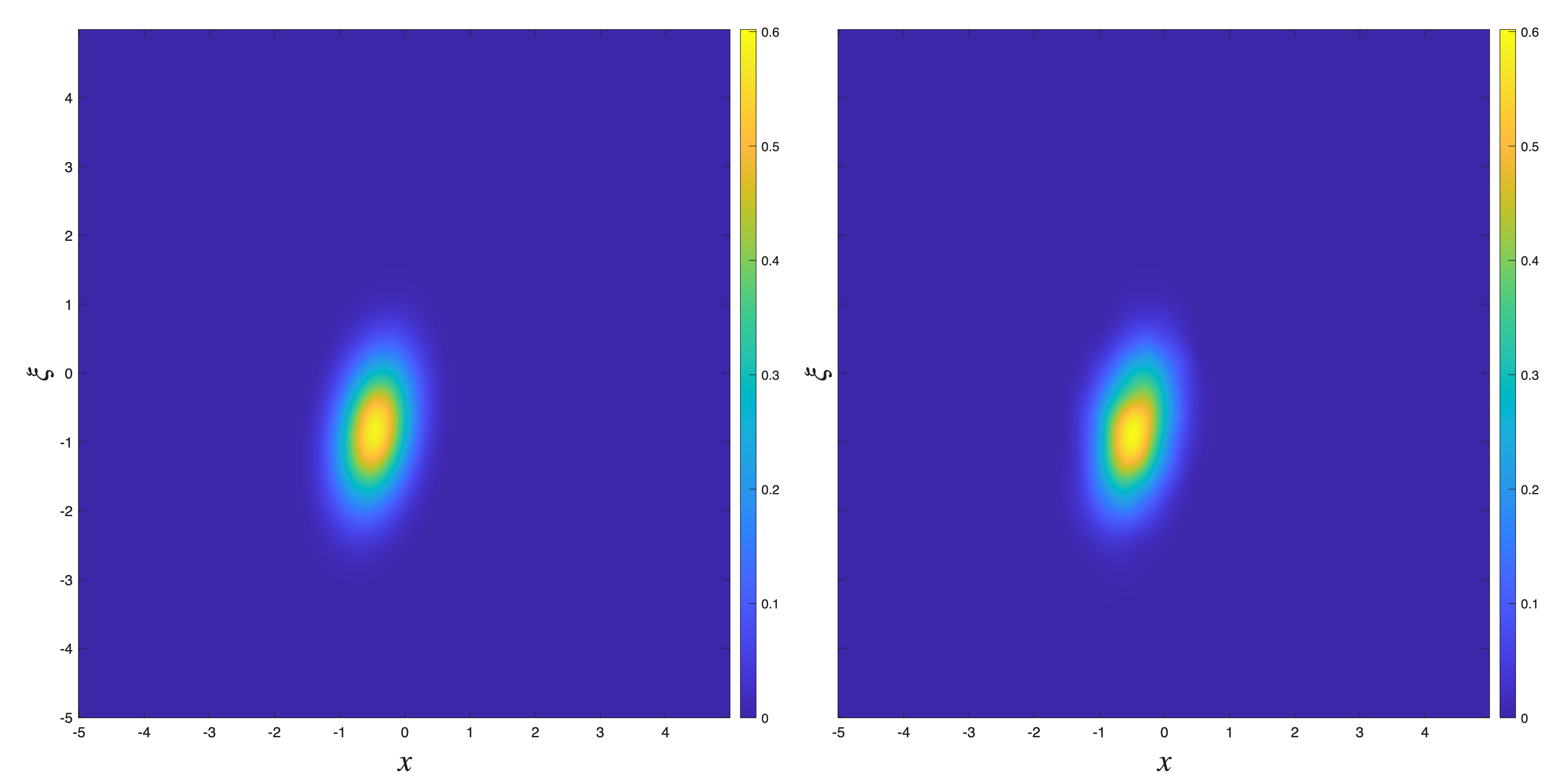}
\caption{Qualitative comparison of the Wigner function for Example 1 at $T=1.0$ with $\varepsilon=1/16$. (Left) The semi-analytical reference solution. (Right) The solution computed by the FGS algorithm with a sample size of $M=3200$. The excellent visual agreement demonstrates the algorithm's ability to capture the key phase space dynamics.}
    \label{fig:example 1}
\end{figure}

\subsubsection{Example 2: Double-well Potential}
Next, we test the algorithm on a non-harmonic, double-well potential, $V(x)=(x^2-1)^2$, with the initial wave packet centered at $(x_0, \xi_0) = (-0.1, 0.2)$. This example serves to demonstrate that the key advantages observed in the harmonic case—particularly the robustness to $\varepsilon$—are not an artifact of the simple potential but hold true for complex systems.

The numerical errors for the observables are documented in Tables \ref{table:example 2-x}-\ref{table:example 2-xi}. The data confirm that our primary finding from Example 1 persists: the algorithm's error for physical observables remains remarkably robust with respect to the semiclassical parameter $\varepsilon$, even in this non-harmonic case. As seen in the tables, for a fixed $M$, the error shows very weak dependence on $\varepsilon$.

Second, for a fixed $\varepsilon$, the errors also exhibit the expected $\mathcal{O}(M^{-1/2})$ convergence rate, consistent with the statistical nature of the method. These findings confirm that the key advantages of the FGS algorithm are general and not limited to quadratic potentials.

These two trends are visualized in the log-log plots in Figure \ref{fig:example 2-log plot}. The tight clustering of the error curves for different $\varepsilon$ values again visually confirms the algorithm's robustness to the semiclassical parameter, while the near-parallel slopes of the lines confirm the $\mathcal{O}(M^{-1/2})$ scaling. Qualitatively, Figure \ref{fig:example 2} compares the Wigner function computed by our FGS algorithm with the benchmark TSSP solution at $T=1.0$. The strong visual agreement demonstrates that our method successfully captures the more intricate phase space dynamics, such as the splitting and spiraling of the wavepacket, which are characteristic of this bistable potential.

\begin{table}[htbp]
\centering
\caption{Absolute errors of the mean position $\langle x\rangle$ for Example 2 at $T=1.0$. The table shows the errors as a function of the sample size $M$ and $\varepsilon$. The reference solution is the benchmark result from the TSSP method.}
\label{table:example 2-x}
\begin{tabular}{c|ccccc}
\hline
Error & $\varepsilon = 1/16$ & $\varepsilon = 1/32$ & $\varepsilon = 1/64$ & $\varepsilon = 1/128$ & $\varepsilon = 1/256$ \\
\hline
$M = 100$ & 0.044031 & 0.04456 & 0.04482 & 0.044949 & 0.045013 \\
$M = 200$ & 0.031018 & 0.031327 & 0.031483 & 0.031562 & 0.031602 \\
$M = 400$ & 0.022901 & 0.02315 & 0.023276 & 0.023338 & 0.023369 \\
$M = 800$ & 0.016363 &  0.016526 & 0.016613 & 0.016657 & 0.016679 \\
$M = 1600$ & 0.011563 & 0.011624 & 0.01176 & 0.011798 & 0.011817\\
\hline
\end{tabular}
\end{table}

\begin{table}[htbp]
\centering
\caption{Absolute errors of the mean momentum $\langle \xi\rangle$ for Example 2 at $T=1.0$. The table shows the errors as a function of the sample size $M$ and $\varepsilon$. The reference solution is the benchmark result from the TSSP method.}
\label{table:example 2-xi}
\begin{tabular}{c|ccccc}
\hline
Error & $\varepsilon = 1/16$ & $\varepsilon = 1/32$ & $\varepsilon = 1/64$ & $\varepsilon = 1/128$ & $\varepsilon = 1/256$ \\
\hline
$M = 100$ & 0.056284 & 0.056631 & 0.056801 & 0.056886 & 0.056928 \\
$M = 200$ & 0.038615 & 0.03883 & 0.038973 & 0.039054 & 0.039097 \\
$M = 400$ & 0.027825 & 0.027765 & 0.027785 & 0.027803 & 0.027814 \\
$M = 800$ & 0.020318 & 0.020341 & 0.020418 & 0.020471 & 0.020499 \\
$M = 1600$ & 0.014961 & 0.014822 & 0.014725 & 0.014716 & 0.014713 \\
\hline
\end{tabular}
\end{table}

\begin{figure}[htbp]
    \begin{subfigure}{0.48\textwidth}
        \centering
        \includegraphics[width=\textwidth]{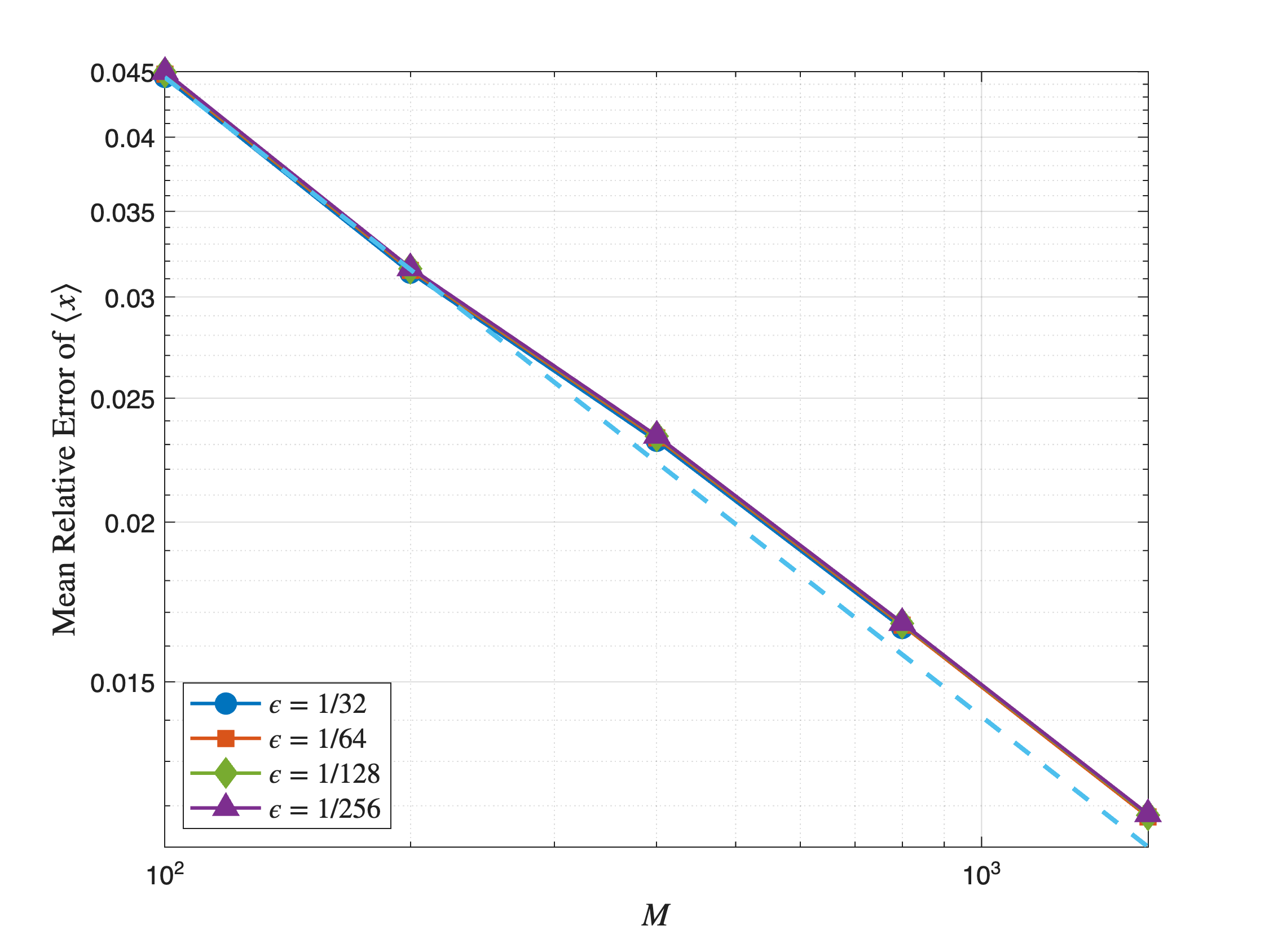}
        \label{fig:x of example 2}
    \end{subfigure}
    \hfill
    \begin{subfigure}{0.48\textwidth}
        \centering
        \includegraphics[width=\textwidth]{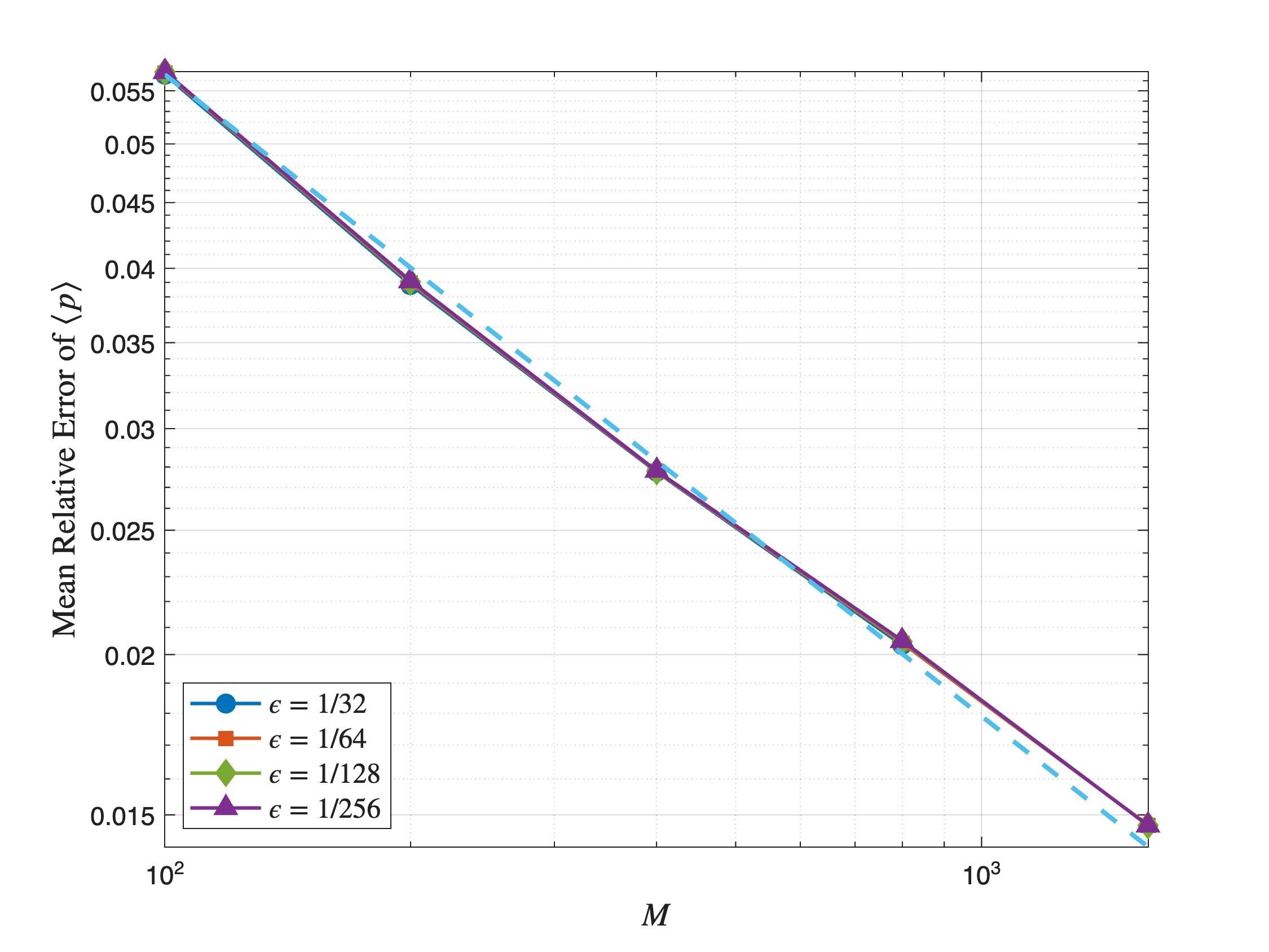}
        \label{fig:xi of example 2}
    \end{subfigure}
\caption{Log-log plot of the numerical errors versus the sample size $M$ for various values of $\varepsilon$ for Example 2 at $T=1.0$. (Left) Absolute error of the mean position $\langle x \rangle$. (Right) Absolute error of the mean momentum $\langle \xi \rangle$. The dashed lines with a slope of -1/2 visually confirm the consistent $\mathcal{O}(M^{-1/2})$ convergence rate.}
    \label{fig:example 2-log plot}
\end{figure}

\begin{figure}[htbp]
    \centering
    \includegraphics[width=\textwidth]{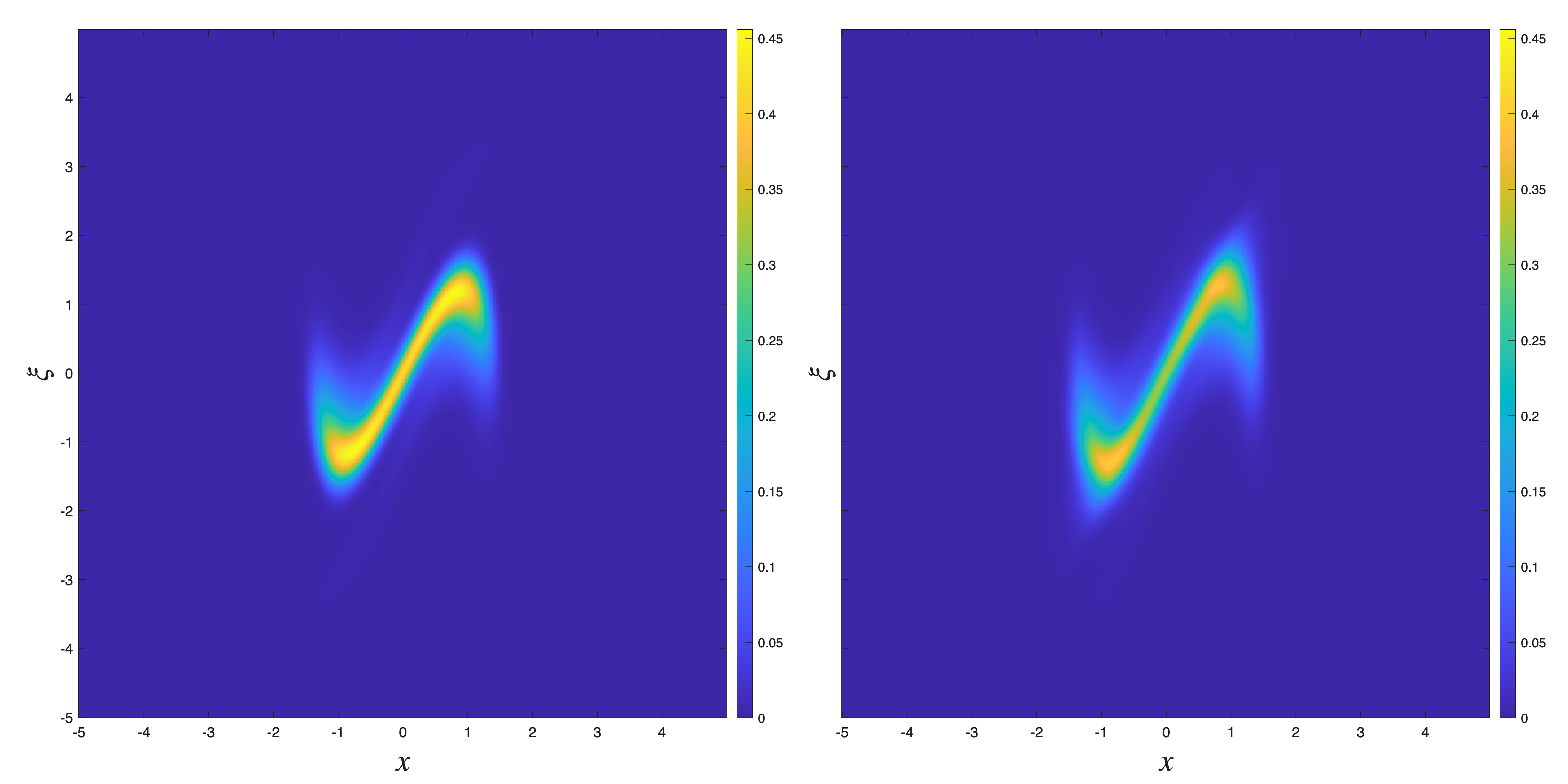}
\caption{Qualitative comparison of the Wigner function for Example 2 at $T=1.0$ with $\varepsilon=1/16$. (Left) The benchmark reference solution from TSSP. (Right) The solution computed by the FGS algorithm with a sample size of $M=3200$. The strong visual agreement highlights the algorithm's capability to handle complex dynamics.}
    \label{fig:example 2}
\end{figure}

\subsubsection{Example 3: Long-Time Dynamics and Stability}
Having established the accuracy and convergence of the FGS algorithm, we now demonstrate its superior stability for long-time simulations. This is a crucial advantage in scenarios where grid-based methods like TSSP are constrained by finite domain sizes. We consider the double-well potential $V(x)=(x^2-1)^2$ with an initial wave packet centered at $(x_0, \xi_0) = (-0.2, 0.2)$ and $\varepsilon=1/16$.

The results of this comparison are presented in Figure \ref{fig: example 3}. The top row shows that the TSSP method, on a computational domain of $[-5,5]^2$, accurately captures the dynamics up to $T=4$. However, by $T=8$ (middle left panel), the wave packet has spread to the boundaries of this domain. Consequently, the TSSP solution exhibits significant spurious artifacts due to boundary effects, rendering the result unreliable.

The middle right panel demonstrates that merely refining the grid resolution (in this case, to $\mathrm{d}x = \mathrm{d}\xi = 0.001$) does not resolve this issue, confirming that the problem stems from the limited domain size, not insufficient grid resolution. Indeed, the TSSP method can recover a valid solution at $T=8$ only if the computational domain is significantly enlarged to $[-8,8]^2$ (bottom left panel). This series of tests clearly illustrates the inherent limitation of fixed-grid methods for long-time simulations where the solution may explore a large region of phase space.

In stark contrast, the bottom right panel shows the result from our FGS algorithm. Since the FGS method evolves particles in an unbounded phase space, it is inherently immune to such boundary-induced instabilities. The solution remains stable and accurately captures the complex dynamics at $T=8$ without requiring any domain adjustments. This example highlights a key practical advantage of the FGS algorithm: its robustness and reliability for simulating long-time quantum dynamics.

\begin{figure}[htbp]
    \centering
    \begin{subfigure}{0.48\textwidth}
        \centering
        \includegraphics[width=\textwidth]{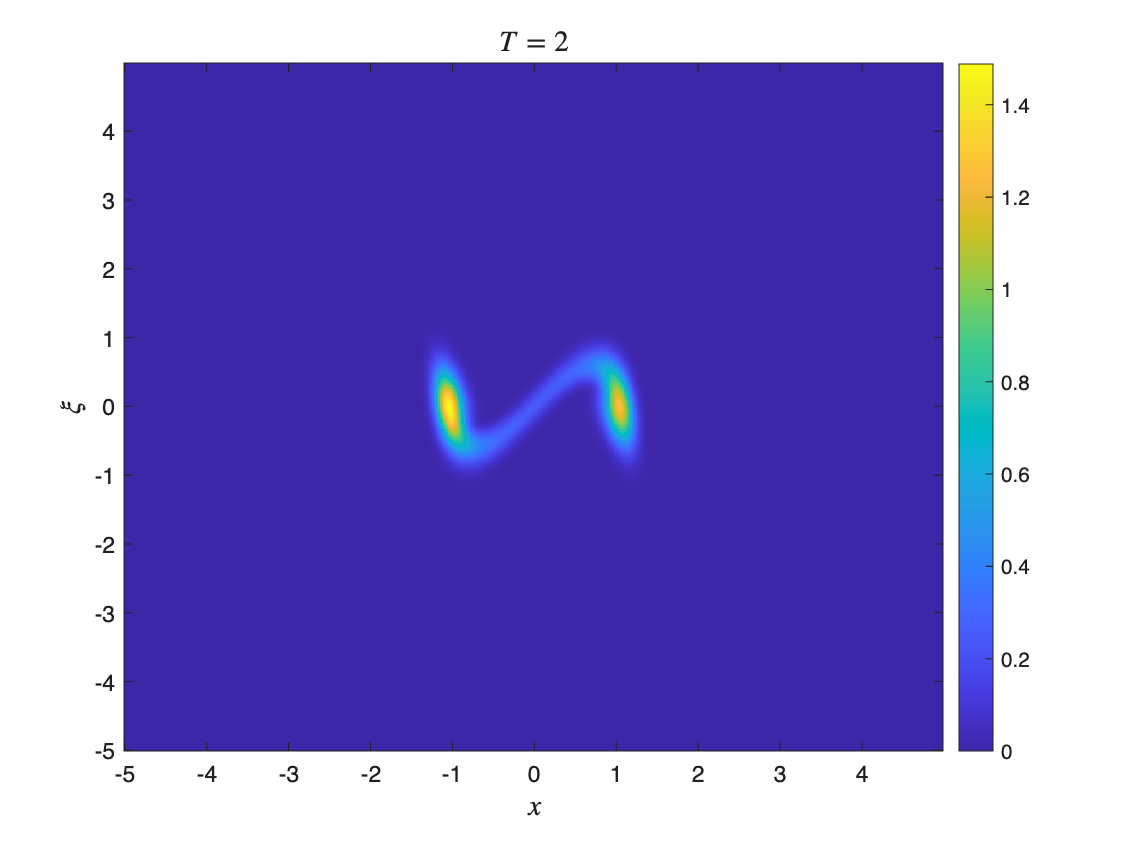}
        \label{fig: time 2 of example 3}
    \end{subfigure}
    \hfill
    \begin{subfigure}{0.48\textwidth}
        \centering
        \includegraphics[width=\textwidth]{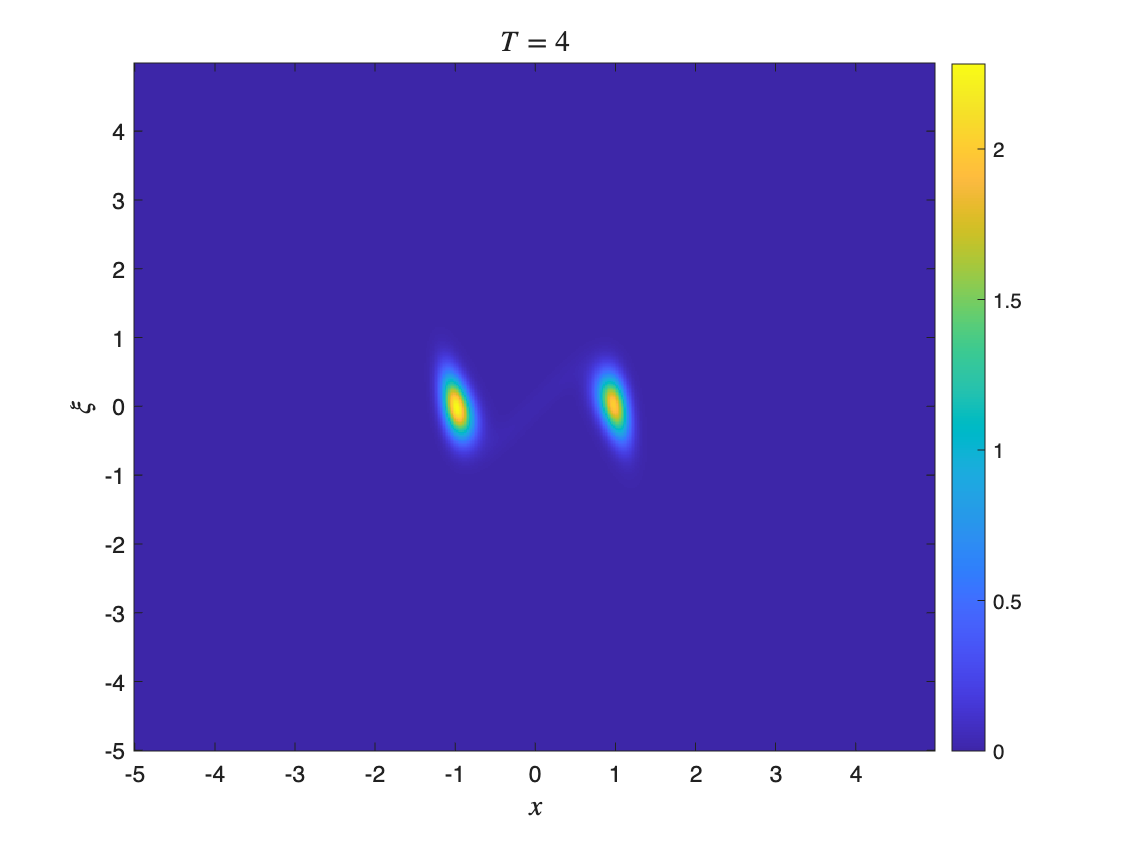}
        \label{fig: time 4 of example 3}
    \end{subfigure}
    \\
    \begin{subfigure}{0.48\textwidth}
        \centering
        \includegraphics[width=\textwidth]{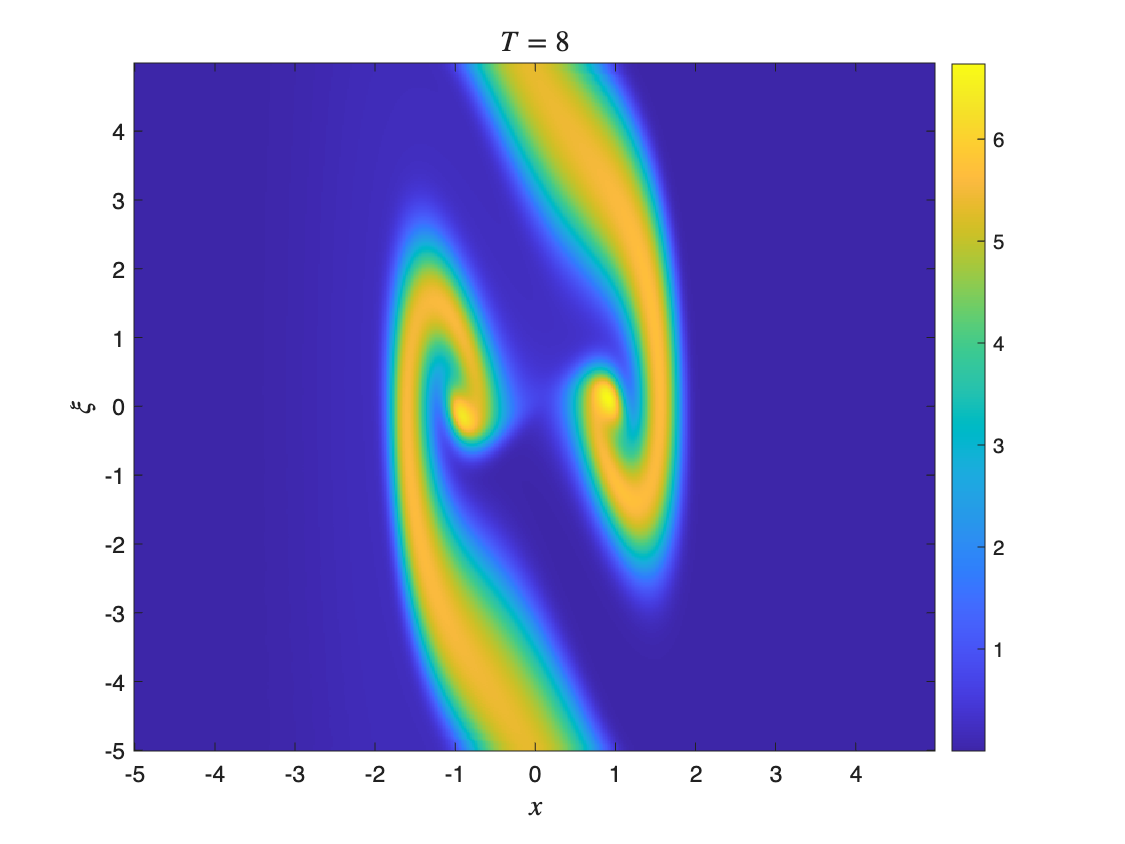}
        \label{fig: time 8 of example 3}
    \end{subfigure}
    \hfill
    \begin{subfigure}{0.48\textwidth}
        \centering
        \includegraphics[width=\textwidth]{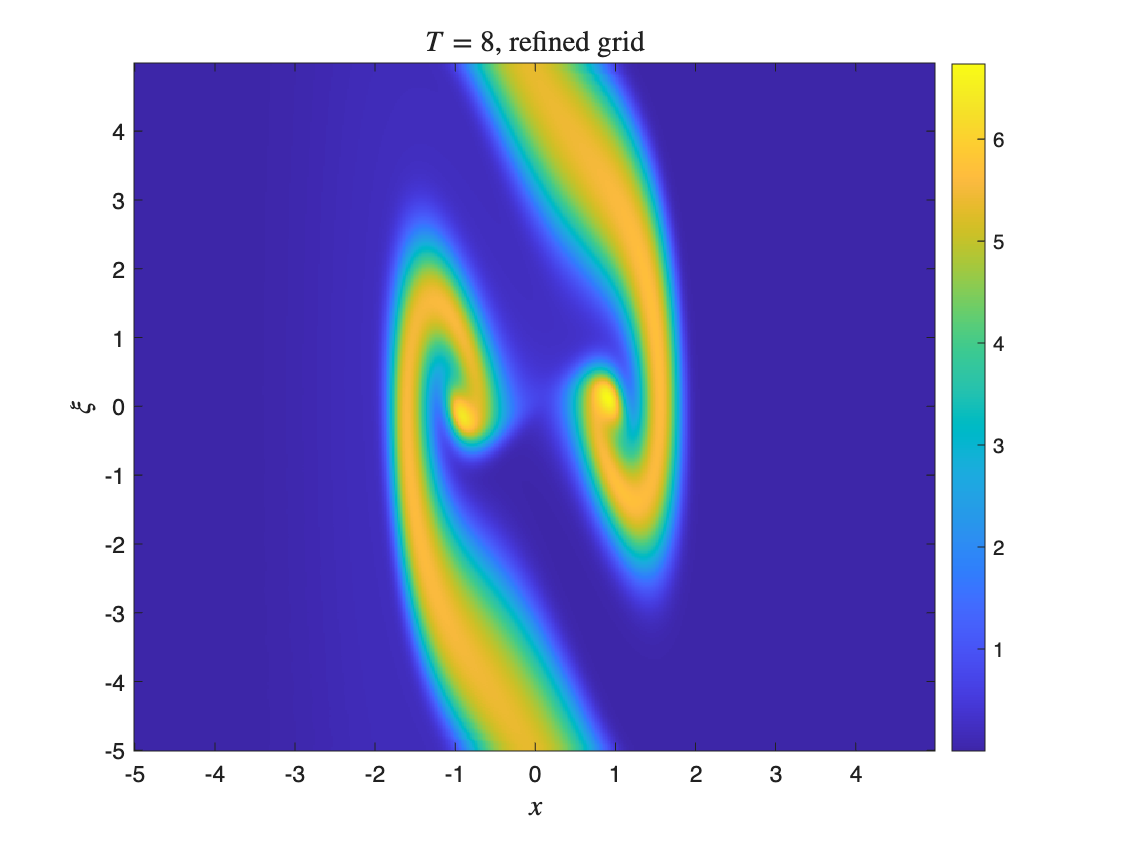} 
        \label{fig: time 8 for refined grid of example 3}
    \end{subfigure}
    \\
    \begin{subfigure}{0.48\textwidth}
        \centering
        \includegraphics[width=\textwidth]{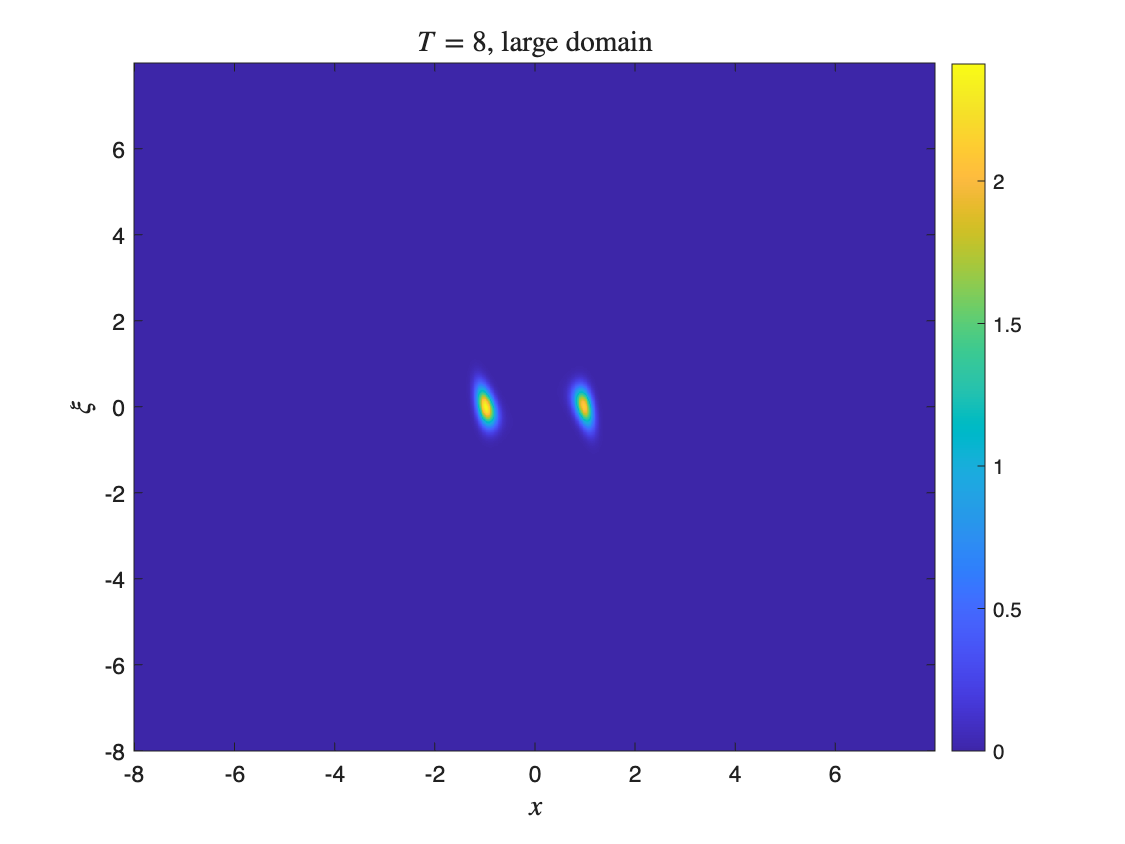}
        \label{fig: time 8 for large domain of example 3}
    \end{subfigure}
    \hfill
    \begin{subfigure}{0.48\textwidth}
        \centering
        \includegraphics[width=\textwidth]{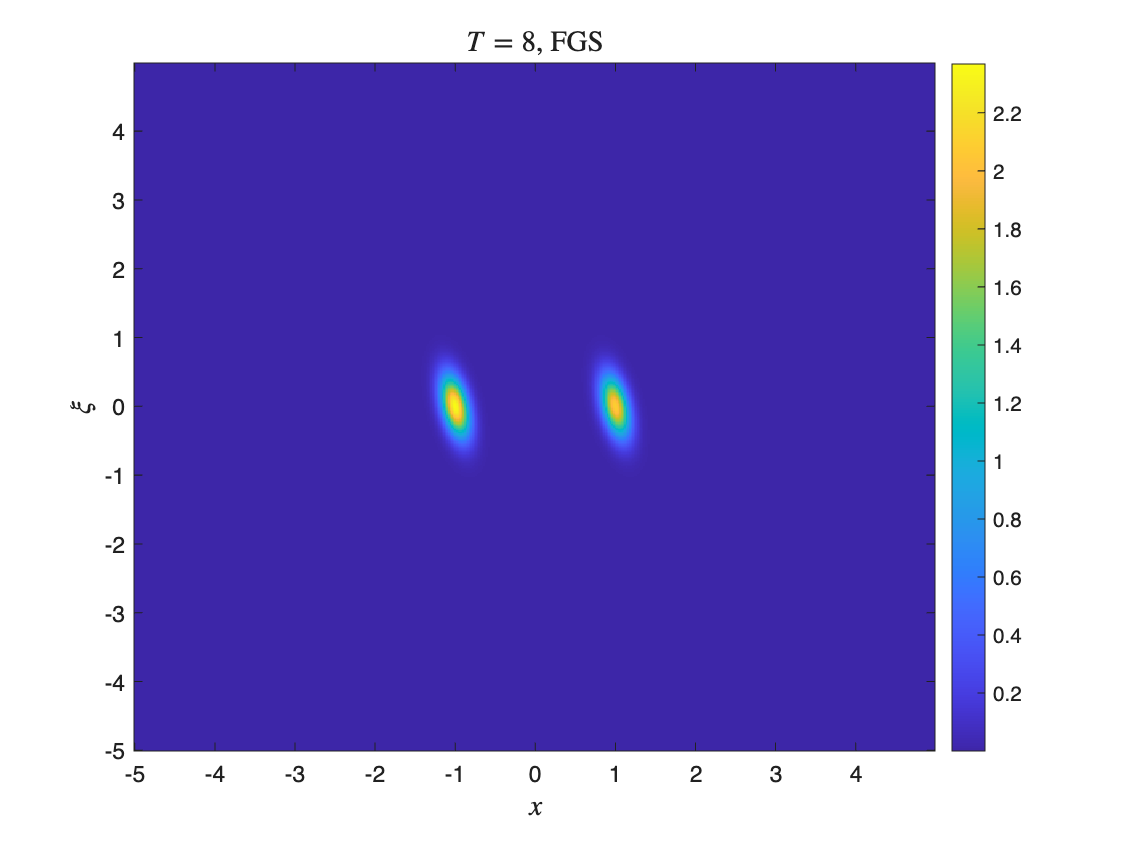}
        \label{fig: time 8 for sampling of example 3}
    \end{subfigure}
\caption{Demonstration of the FGS algorithm's stability for long-time simulation in Example 3 ($\varepsilon=1/16$). All simulations except the bottom right panel are performed with the TSSP method.
(Top row) TSSP solutions at $T=2$ (left) and $T=4$ (right) on a domain of $[-5,5]^2$.
(Middle row) At $T=8$, the TSSP solution on the original domain is corrupted by boundary artifacts (left). Refining the grid to $\mathrm{d}x = \mathrm{d}\xi = 0.001$ and $\mathrm{d}t = 0.0005$ does not fix the issue (right).
(Bottom row) The TSSP solution is recovered at $T=8$ by using a larger domain of $[-8,8]^2$ (left). In contrast, the FGS algorithm ($M=3200$) provides a stable and accurate result without domain limitations (right). The figure highlights the superior stability of the mesh-free FGS algorithm for long-time evolution.}
    \label{fig: example 3}
\end{figure}

\subsection{Numerical Explorations of Physical Phenomena}
Having validated the FGS algorithm and demonstrated its stability, we now apply it as an investigatory tool to explore a fundamental physical question pertinent to the Wigner-Fokker-Planck equation: the existence and nature of steady states for non-harmonic potentials.

\subsubsection{Example 4: Relaxation toward a Steady State}
In this example, we leverage the long-time stability of the FGS algorithm to numerically investigate the existence of a steady state for the Wigner-Fokker-Planck equation, particularly for non-harmonic potentials where theoretical results are scarce. To this end, we consider the following three distinct potentials, also shown in Figure \ref{fig:potential of example 5}:
\begin{align}
V_1(x) &= x^2+x+\sin(x), \label{pot: potential_1 in example 5} \\
V_2(x) &= (x^2-1)^2, \label{pot: potential_2 in example 5} \\
V_3(x) &= 0.08x^2(x^2-2)^2. \label{pot: potential_3 in example 5}
\end{align}
The first potential \eqref{pot: potential_1 in example 5} is a near-harmonic potential representing a small perturbation of a quadratic trap. The second \eqref{pot: potential_2 in example 5} is a bistable, double-well potential commonly used in studies of non-adiabatic quantum dynamics. The third potential \eqref{pot: potential_3 in example 5} is a symmetric \textit{triple-well} potential, governed by a sextic polynomial ($\sim x^6$). This model, often relevant in the study of structural phase transitions ($\phi^6$ model), presents a landscape with three stable minima separated by barriers.

The long-time evolution of the Wigner function under these three potentials is presented in Figures \ref{fig: steady state of potential 1 in example 5}, \ref{fig: steady state of potential 2 in example 5}, and \ref{fig: steady state of potential 3 in example 5}, respectively. In all three cases, despite the diverse potential landscapes, the Wigner function evolves and clearly converges to a time-independent profile. For instance, the distributions at $T=15$ and $T=20$ are visually indistinguishable, indicating that a steady state has been reached.

These numerical results carry significant implications. For the near-harmonic potential \eqref{pot: potential_1 in example 5}, our findings are consistent with the analytical work in \cite{arnold2012wigner}, which proves the existence of a steady state for potentials that are small perturbations of a harmonic one. Crucially, for the strongly non-harmonic potentials \eqref{pot: potential_2 in example 5} and \eqref{pot: potential_3 in example 5}, our simulations provide strong numerical evidence for the existence of a unique steady state. To the best of our knowledge, this is a regime where a rigorous analytical proof is still an open problem. Our work thus demonstrates the utility of the FGS algorithm as a powerful tool for exploring fundamental questions in open quantum systems that are currently beyond the reach of analytical theory.

\begin{figure}[htbp]
    \centering
    \begin{subfigure}{0.48\textwidth}
        \centering
        \includegraphics[width=\textwidth]{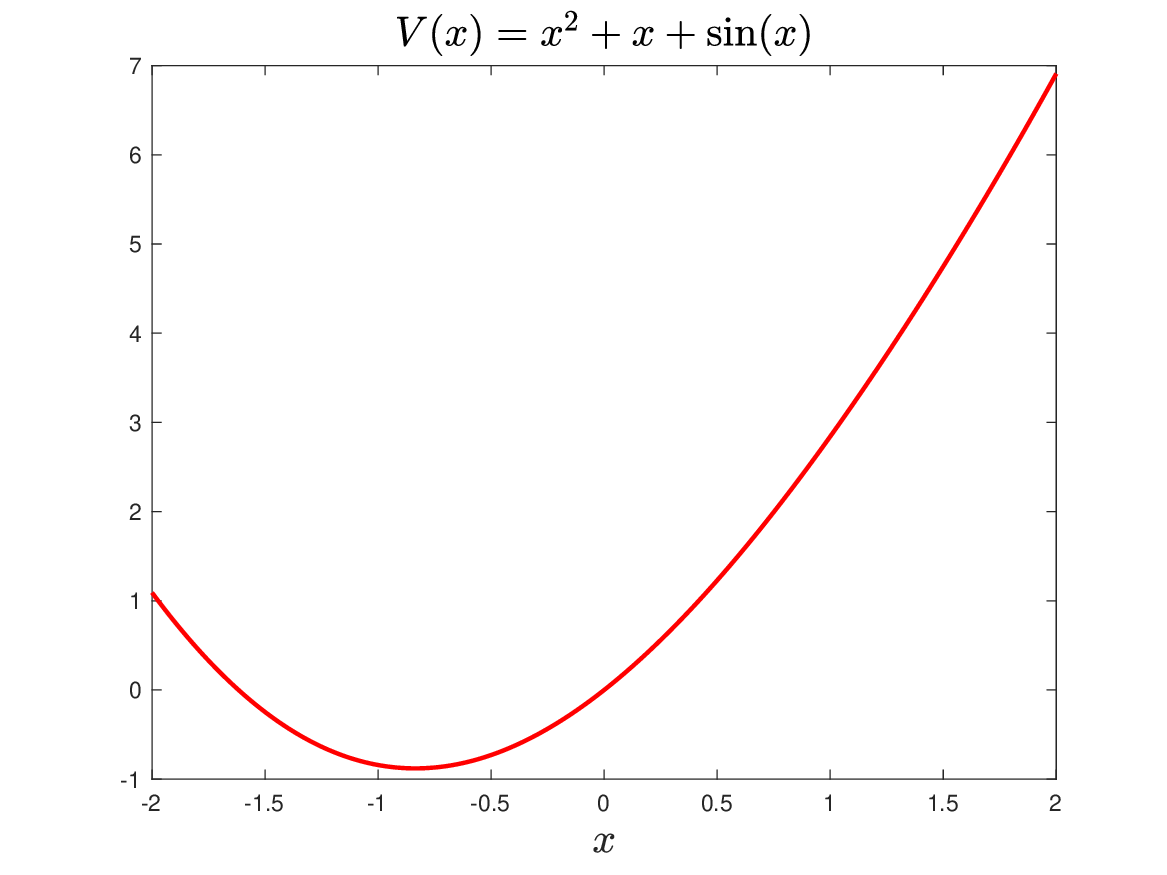}
        \caption{}
    \end{subfigure}
    \hfill
    \begin{subfigure}{0.48\textwidth}
        \centering
        \includegraphics[width=\textwidth]{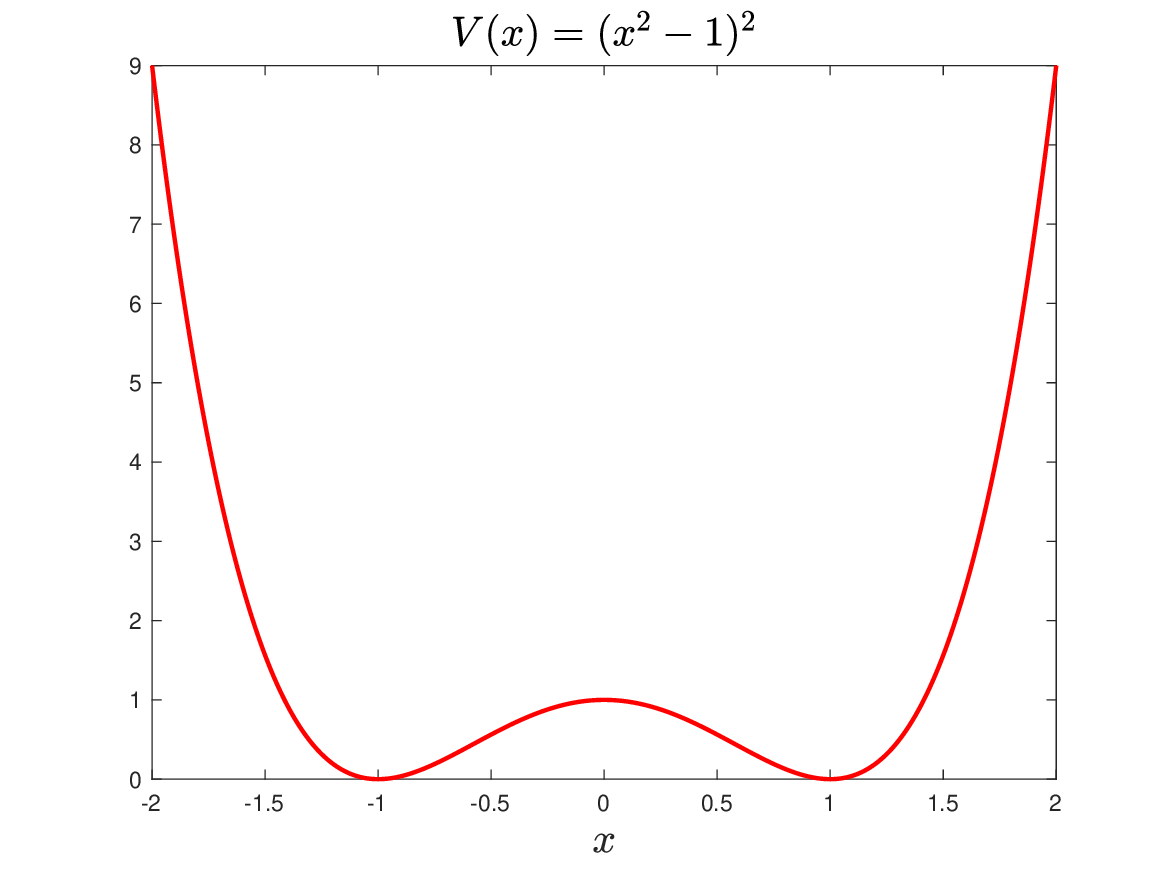}
        \caption{}
    \end{subfigure}
    \begin{subfigure}{0.48\textwidth}
        \centering
        \includegraphics[width=\textwidth]{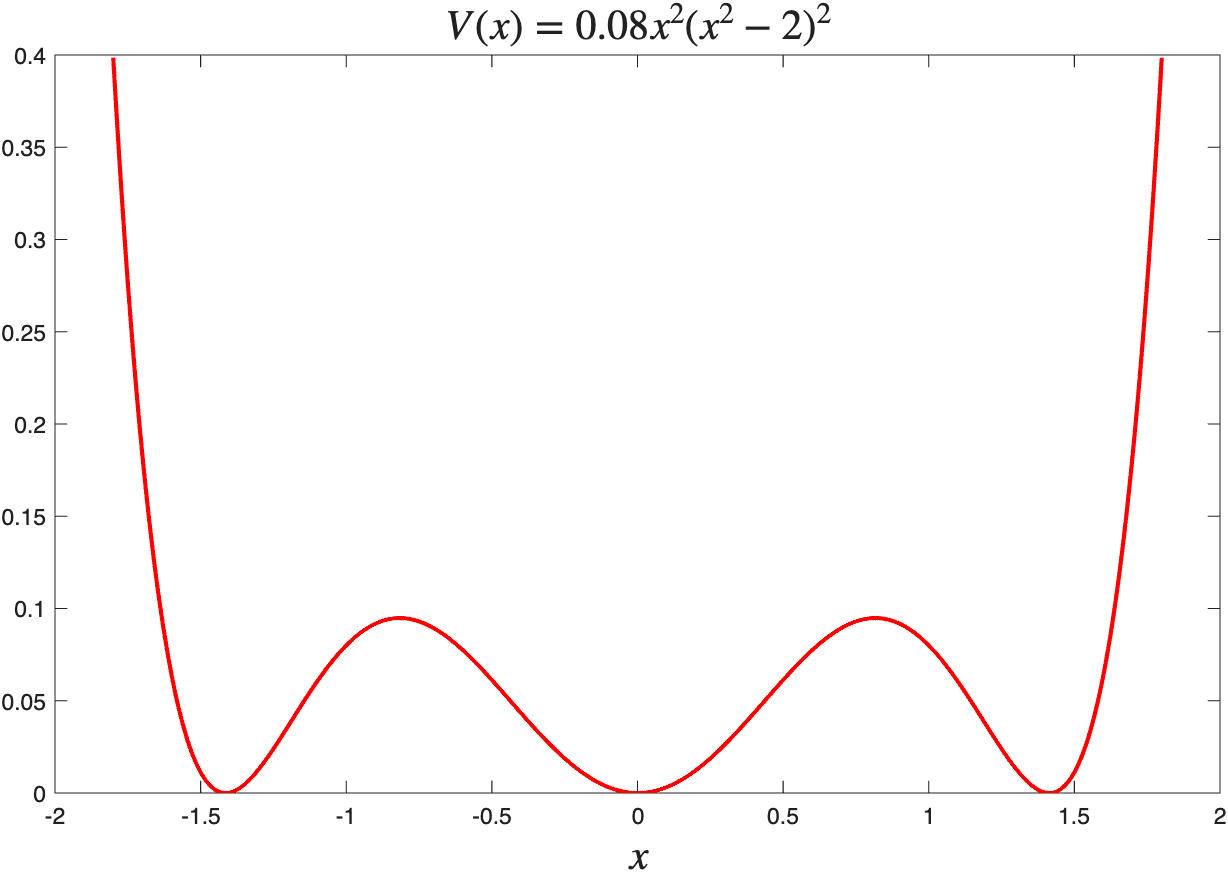}
        \caption{}
    \end{subfigure}
\caption{The three non-harmonic potentials investigated in Example 5. (a) A near-harmonic potential, Eq. \eqref{pot: potential_1 in example 5}. (b) A double-well bistable potential, Eq. \eqref{pot: potential_2 in example 5}. (c) A symmetric triple-well potential, Eq. \eqref{pot: potential_3 in example 5}, featuring three distinct local minima.}
\label{fig:potential of example 5}
\end{figure}

\begin{figure}[htbp]
    \centering
    \begin{subfigure}{0.48\textwidth}
        \centering
        \includegraphics[width=\textwidth]{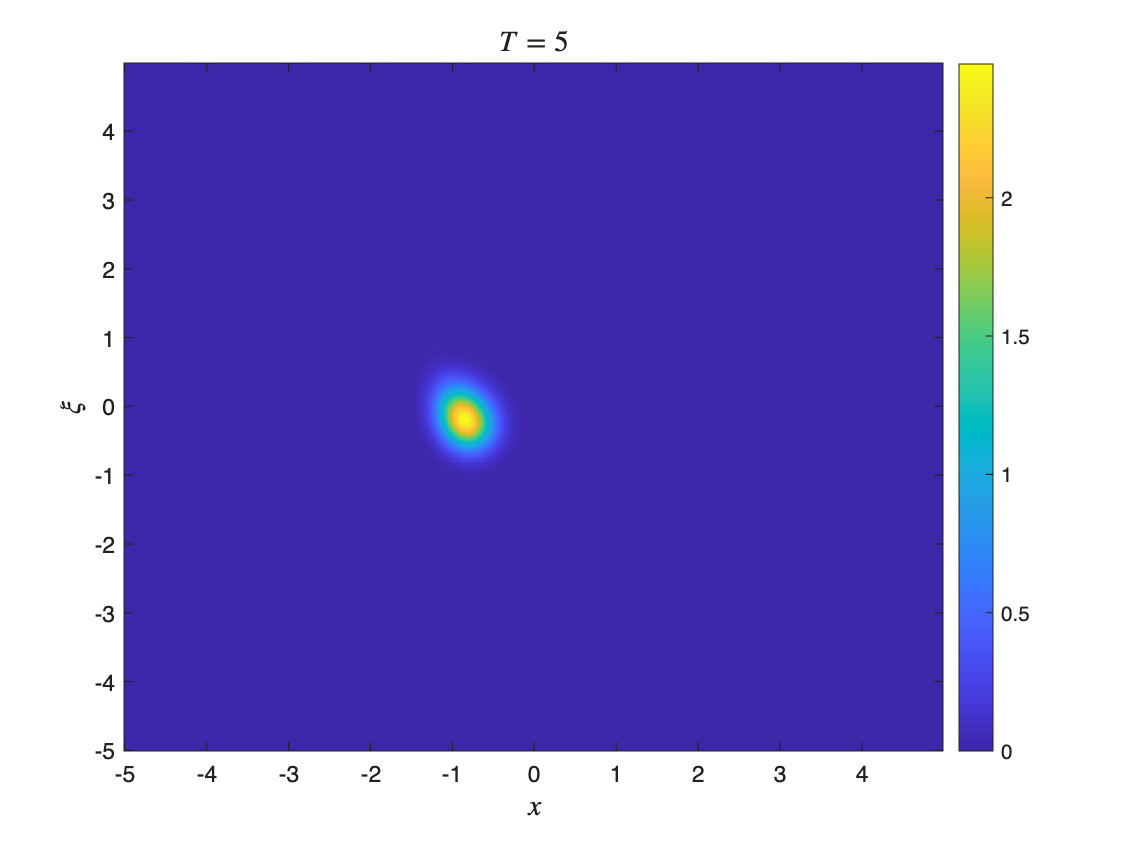}
        \label{fig: time 5 of potential 1 in example 5 }
    \end{subfigure}
    \hfill
    \begin{subfigure}{0.48\textwidth}
        \centering
        \includegraphics[width=\textwidth]{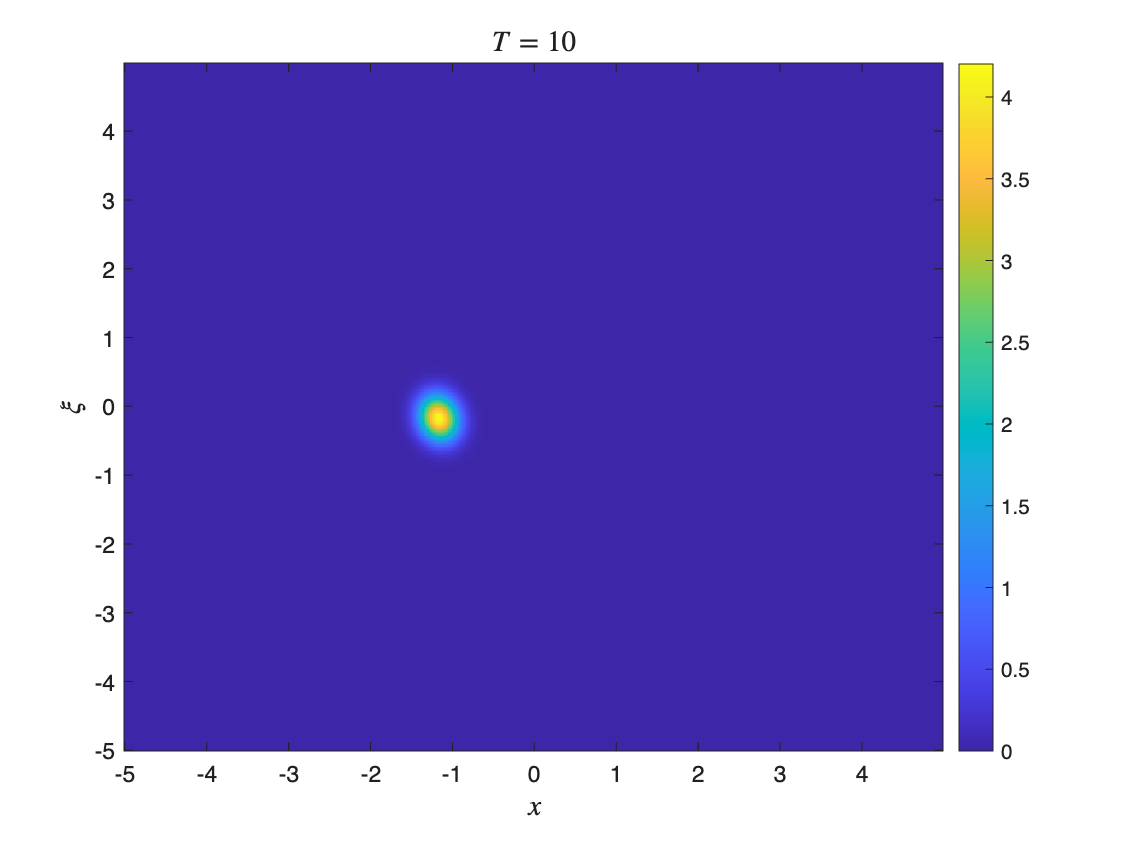}
        \label{fig: time 10 of potential 1 in example 5}
    \end{subfigure}
    \\
    \begin{subfigure}{0.48\textwidth}
        \centering
        \includegraphics[width=\textwidth]{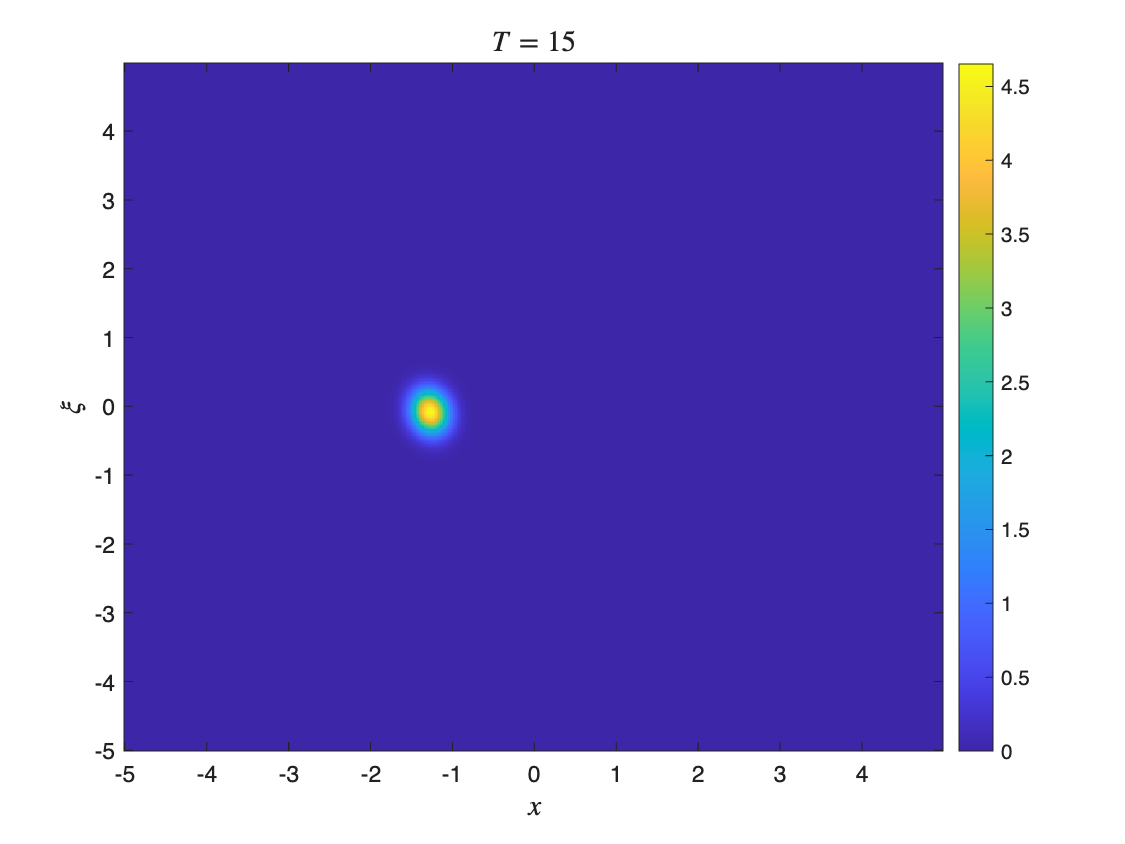}
        \label{fig: time 15 of potential 1 in example 5}
    \end{subfigure}
    \hfill
    \begin{subfigure}{0.48\textwidth}
        \centering
        \includegraphics[width=\textwidth]{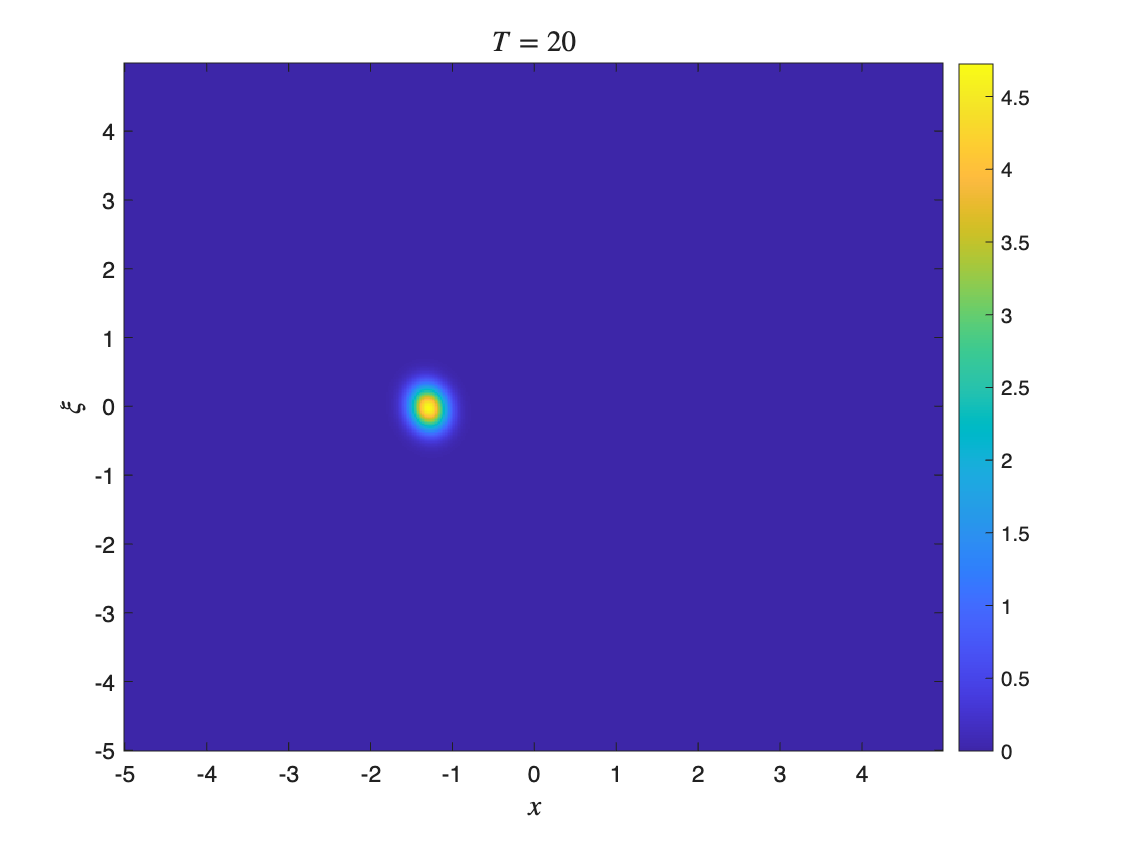}
        \label{fig: time 20 of potential 1 in example 5}
    \end{subfigure}
\caption{Long-time evolution of the Wigner function towards a steady state for Example 4, under the near-harmonic potential \eqref{pot: potential_1 in example 5} with $\varepsilon=1/32$. From left to right and top to bottom, the plots correspond to times $T=5, 10, 15,$ and $20$. The distribution is visually converged by $T=15$.}
    \label{fig: steady state of potential 1 in example 5}
\end{figure}

\begin{figure}[htbp]
    \centering
    \begin{subfigure}{0.48\textwidth}
        \centering
        \includegraphics[width=\textwidth]{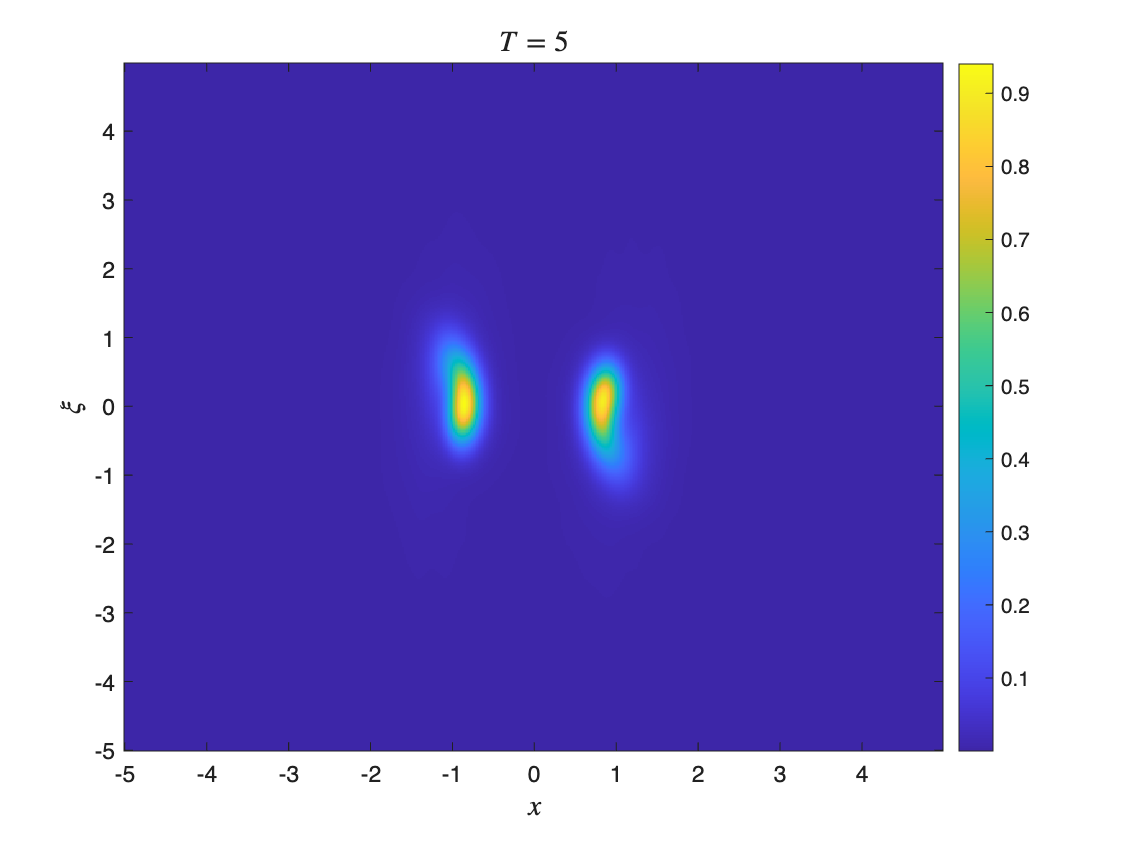}
        \label{fig: time 5 of potential 2 in example 5 }
    \end{subfigure}
    \hfill
    \begin{subfigure}{0.48\textwidth}
        \centering
        \includegraphics[width=\textwidth]{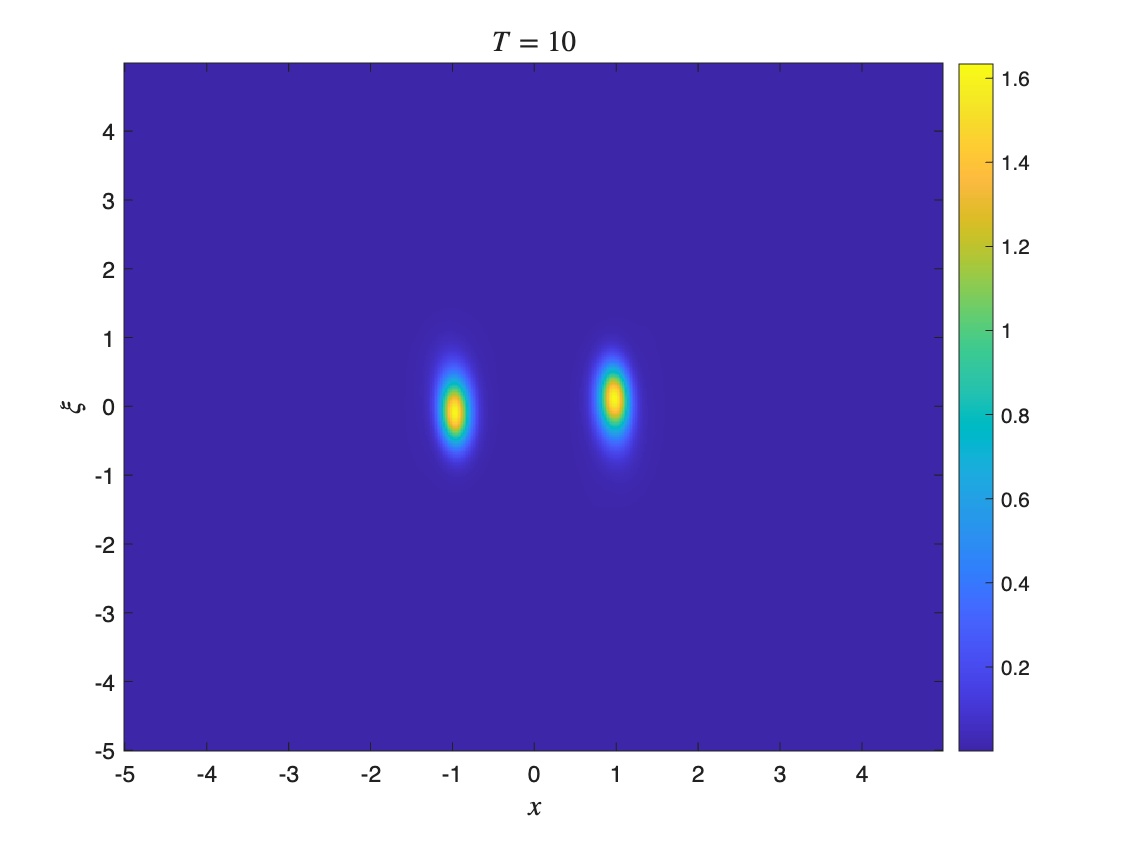}
        \label{fig: time 10 of potential 2 in example 5}
    \end{subfigure}
    \\
    \begin{subfigure}{0.48\textwidth}
        \centering
        \includegraphics[width=\textwidth]{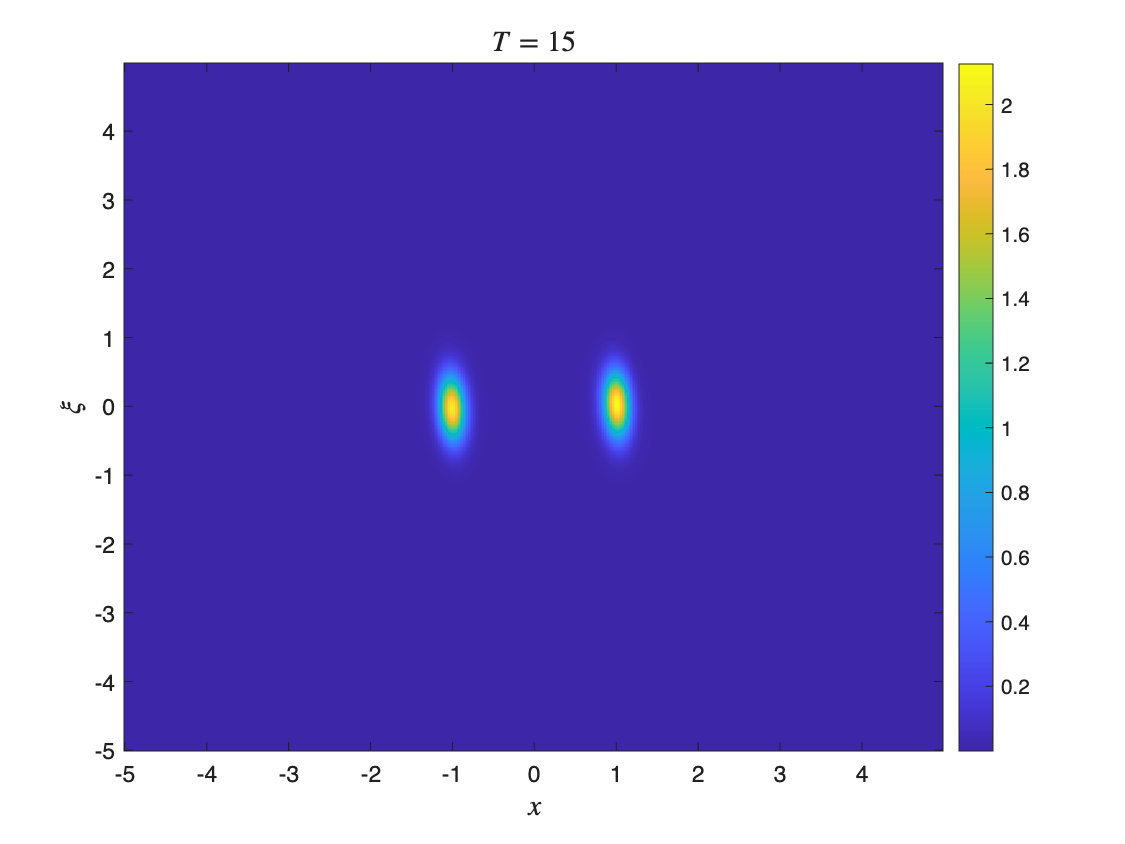}
        \label{fig: time 15 of potential 2 in example 5}
    \end{subfigure}
    \hfill
    \begin{subfigure}{0.48\textwidth}
        \centering
        \includegraphics[width=\textwidth]{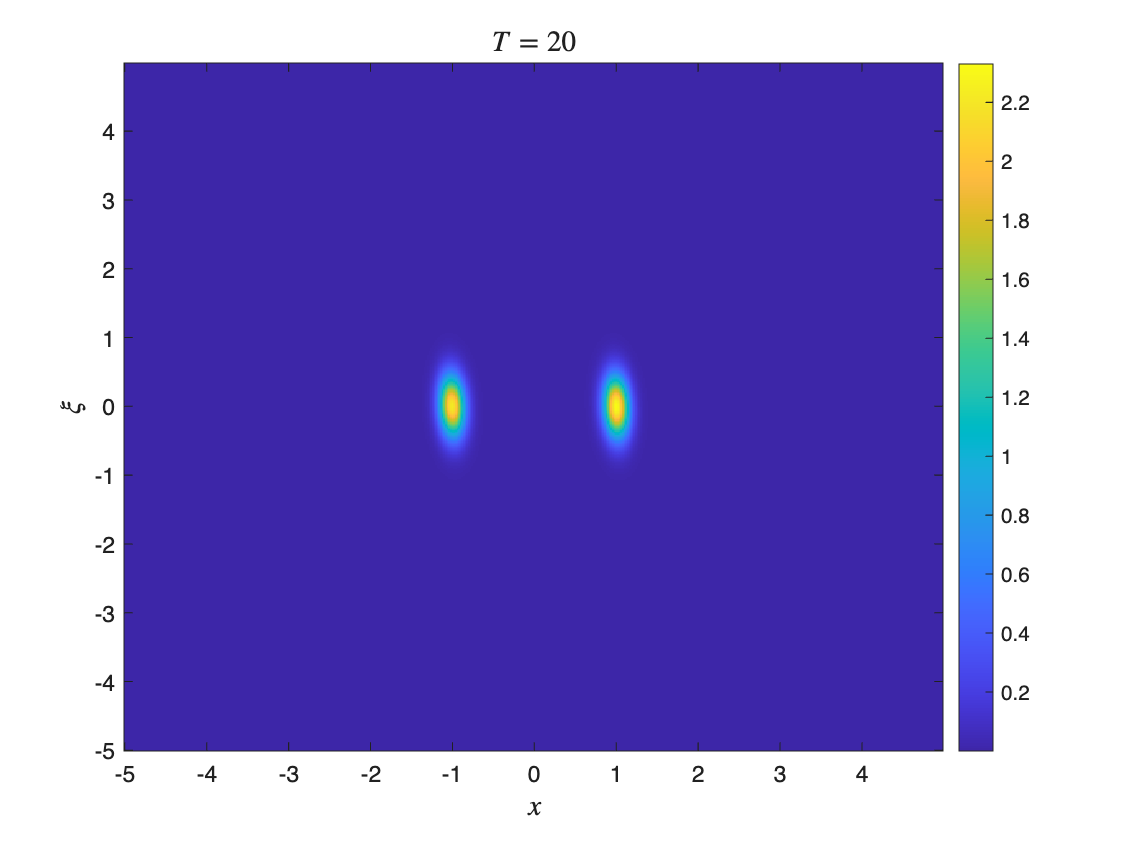}
        \label{fig: time 20 of potential 2 in example 5}
    \end{subfigure}
\caption{Long-time evolution of the Wigner function towards a steady state for Example 4, under the double-well potential \eqref{pot: potential_2 in example 5} with $\varepsilon=1/32$. From left to right and top to bottom, the plots correspond to times $T=5, 10, 15,$ and $20$. The distribution is visually converged by $T=15$.}
    \label{fig: steady state of potential 2 in example 5}
\end{figure}

\begin{figure}[htbp]
    \centering
    \begin{subfigure}{0.49\textwidth}
        \centering
        \includegraphics[width=\textwidth]{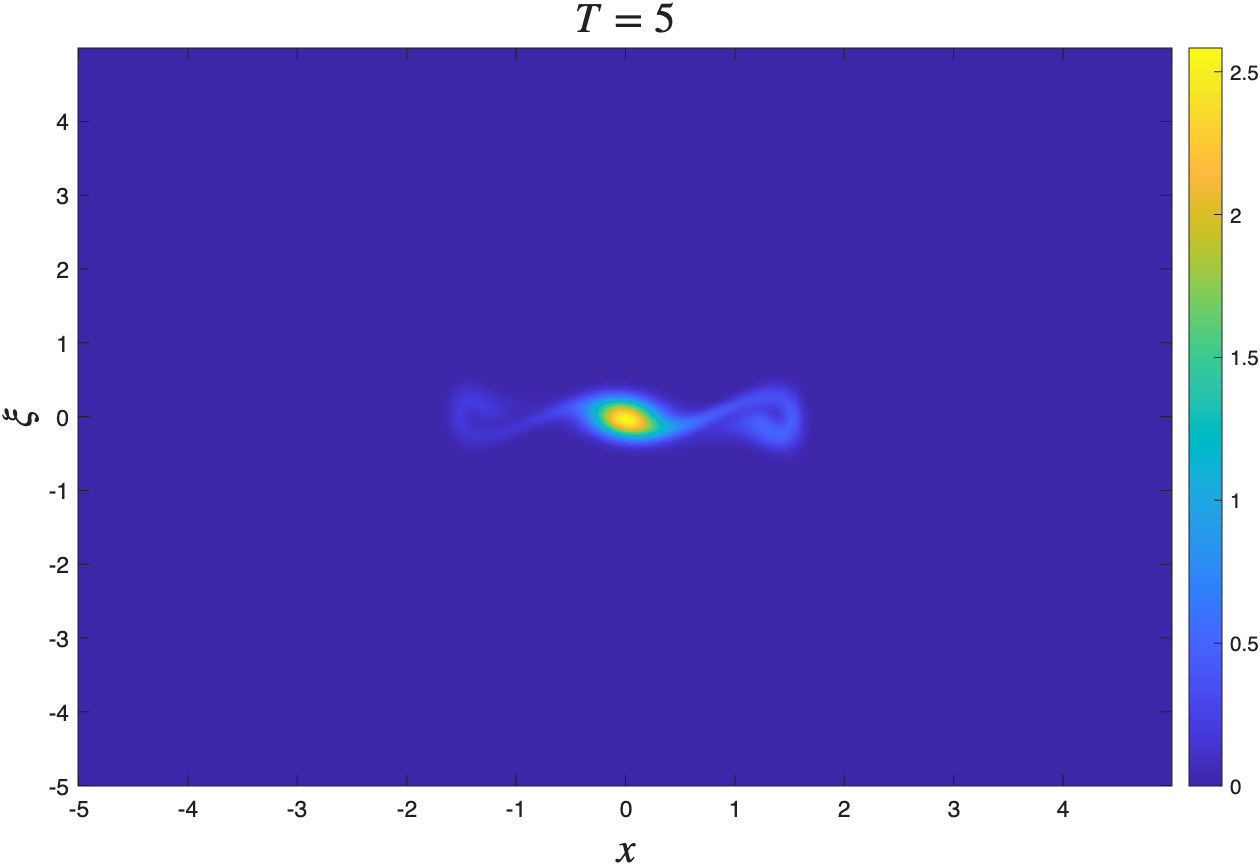}
        \label{fig: time 1 of potential 3 in example 5 }
    \end{subfigure}
    \hfill
    \begin{subfigure}{0.49\textwidth}
        \centering
        \includegraphics[width=\textwidth]{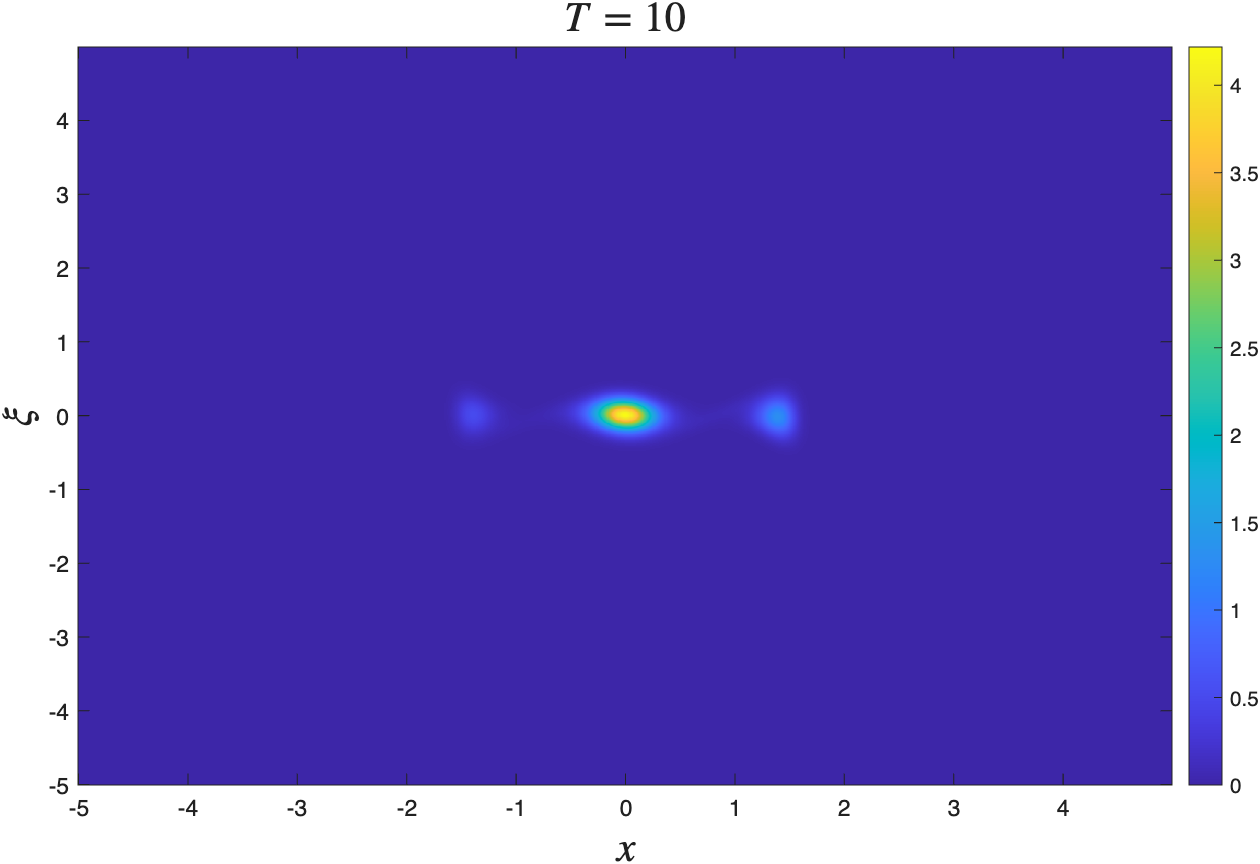}
        \label{fig: time 2 of potential 3 in example 5}
    \end{subfigure}
    \\
    \begin{subfigure}{0.49\textwidth}
        \centering
        \includegraphics[width=\textwidth]{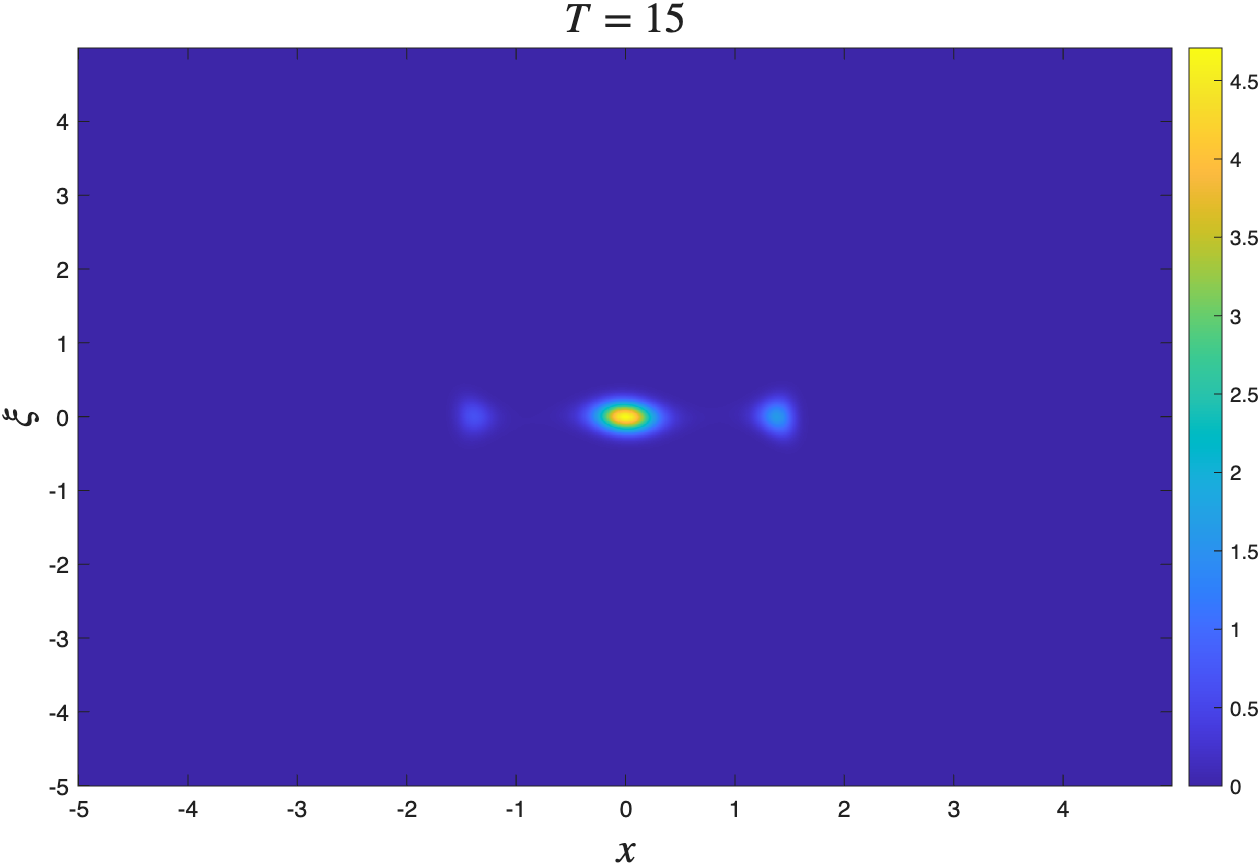}
        \label{fig: time 3 of potential 3 in example 5}
    \end{subfigure}
    \hfill
    \begin{subfigure}{0.49\textwidth}
        \centering
        \includegraphics[width=\textwidth]{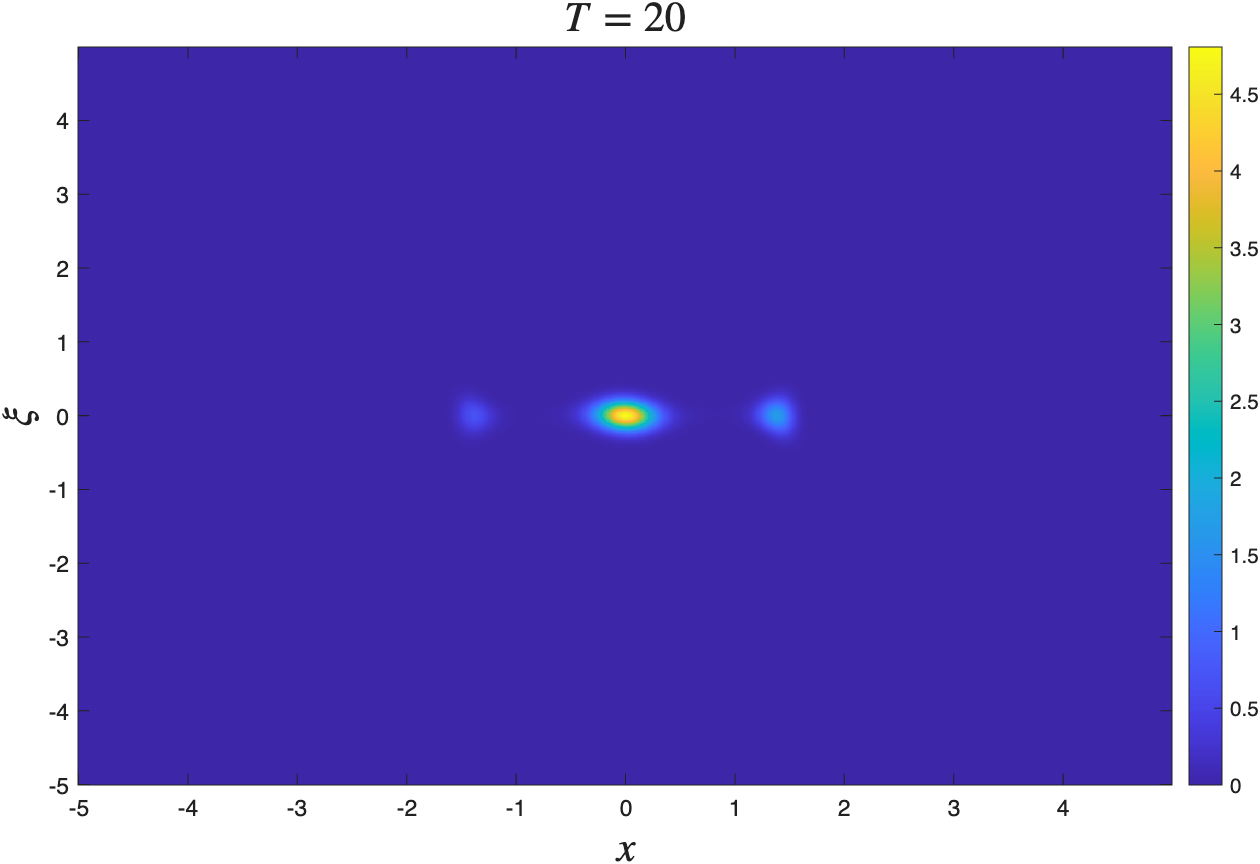}
        \label{fig: time 4 of potential 3 in example 5}
    \end{subfigure}
\caption{Long-time evolution of the Wigner function towards a steady state for Example 4, under the triple-well potential \eqref{pot: potential_3 in example 5} with $\varepsilon=1/32$. From left to right and top to bottom, the plots correspond to times $T=5, 10, 15,$ and $20$. The distribution is visually converged by $T=15$. }
    \label{fig: steady state of potential 3 in example 5}
\end{figure}

\section{Conclusions}
\label{sec:conclusions}

In this article, we have developed and validated a novel Frozen Gaussian Sampling (FGS) algorithm for solving the Wigner-Fokker-Planck equation. Our method demonstrates significant advantages over traditional grid-based approaches, particularly in the challenging semiclassical regime. First, unlike the TSSP method, whose required grid resolution leads to a computational cost that scales unfavorably with $\varepsilon$, the cost of the FGS algorithm depends on the number of samples, which does not need to increase as $\varepsilon$ decreases. This makes the algorithm particularly effective in the semiclassical regime. Second, the mesh-free nature of our algorithm, which evolves Gaussian wave packets in an unbounded phase space, makes it inherently robust for long-time simulations, successfully avoiding the boundary-induced instabilities that constrain fixed-grid methods. Furthermore, we have theoretically proven and numerically verified the algorithm's remarkable robustness when calculating physical observables, for which the sampling error is nearly independent of the semiclassical parameter.

The combination of these advantages positions the FGS algorithm as a powerful new tool for investigating complex phenomena in open quantum systems. In this work, we have applied it to provide strong numerical evidence for the existence of a unique steady state in strongly non-harmonic potentials—a regime where analytical proofs are currently lacking. Our numerical findings pave the way for future theoretical work, particularly in providing a rigorous mathematical proof for the existence of steady states in these general potentials. Future research could extend this method to higher-dimensional systems and explore its application to other challenging problems in quantum dynamics, such as non-adiabatic processes or systems with more complex environmental couplings.

\appendix

\section{Derivation of the Wigner-Fokker-Planck Equation}
\label{appendix:wfp_derivation}
This appendix provides a detailed derivation of the Wigner-Fokker-Planck (WFP) equation. The derivation proceeds by directly using the relationship between the Wigner function $W$ and the density matrix kernel $\rho$ given by the Wigner transform. We start by taking the time derivative of the Wigner function's definition, which allows us to substitute the Lindblad equation governing $\partial_t \rho$ into the transform integral. The full procedure can be summarized in three steps: (1) we express $\partial_t W$ as an integral over $\partial_t \rho$; (2) we substitute the Lindblad equation and rewrite the operators in terms of Wigner coordinates; and (3) we simplify each resulting term individually using integration by parts and Fourier transform properties to arrive at the final WFP equation.

We begin with the definition of the Wigner transform:
\begin{equation*}
    W(\bm{x},\bm{\xi},t) = \frac{1}{(\pi\varepsilon)^n}\int \rho(\bm{x}+\bm{y},\bm{x}-\bm{y},t) e^{-\frac{2\mathrm{i}}{\varepsilon}\bm{y}\cdot\bm{\xi}}\mathrm{d}\bm{y}.
\end{equation*}
Taking the time derivative, we can move the derivative inside the integral:
\begin{equation*}
    \partial_t W(\bm{x},\bm{\xi},t) = \frac{1}{(\pi\varepsilon)^n}\int \partial_t\rho(\bm{x}+\bm{y},\bm{x}-\bm{y},t) e^{-\frac{2\mathrm{i}}{\varepsilon}\bm{y}\cdot\bm{\xi}}\mathrm{d}\bm{y}.
\end{equation*}

The first step is to express the time derivative of the density matrix kernel, $\partial_t\rho$, in a form suitable for the Wigner transform. We start with the Lindblad equation in its coordinate representation, using generic spatial variables $(\bm{z}, \bm{w})$ for the arguments of $\rho$:
\begin{equation}
\begin{aligned}
    \partial_t\rho(\bm{z},\bm{w},t) &= -\frac{\mathrm{i}}{\varepsilon}\left(-\frac{\varepsilon^2}{2}\Delta_{\bm{z}}+V(\bm{z})+\frac{\varepsilon^2}{2}\Delta_{\bm{w}}-V(\bm{w})\right)\rho \\ 
    & \quad -\frac{1}{2\varepsilon}\sum_{k=1}^n \left(\alpha_k (z_k-w_k)^2 - \beta_k (\varepsilon \partial_{z_k}+\varepsilon\partial_{w_k})^2\right) \rho \\
    & \quad + \sum_{k=1}^{n} \left(-\operatorname{Im}(\gamma_k) +\mathrm{i}\left[(\gamma_k z_k\partial_{w_k}-\overline{\gamma}_k w_k\partial_{z_k})+\operatorname{Re}(\gamma_k)(z_k\partial_{z_k}-w_k\partial_{w_k})\right]\right)\rho \\
    & \quad -\sum_{k=1}^n\mu_k(z_k\partial_{z_k}+w_k\partial_{w_k}+1)\rho. 
\end{aligned}
\label{eq:Lindblad_zw_coord}
\end{equation}

Next, we perform a change of variables to the Wigner coordinates: the center coordinate $\bm{x}$ and the difference coordinate $\bm{y}$, which are related to $(\bm{z}, \bm{w})$ by $\bm{z}=\bm{x}+\bm{y}$ and $\bm{w}=\bm{x}-\bm{y}$. To rewrite the differential operators in the $(\bm{x}, \bm{y})$ frame, we apply the chain rule:
\begin{align*}
    \Delta_{\bm{z}}-\Delta_{\bm{w}}&=\nabla_{\bm{y}}\cdot\nabla_{\bm{x}},\quad 
    \left(\varepsilon\partial_{z_k}+\varepsilon\partial_{w_k}\right)^2=\varepsilon^2\partial_{x_k}^2,\quad
    z_k\partial_{z_k}+w_k\partial_{w_k}=x_k\partial_{x_k}+y_k\partial_{y_k},
\end{align*}
\begin{align*}
\left(\gamma_k z_k\partial_{w_k}-\overline{\gamma}_k w_k\partial_{z_k}\right)+\operatorname{Re}\left(\gamma_k\right)\left(z_k\partial_{z_k}-w_k\partial_{w_k}\right)
=2\operatorname{Re}(\gamma_k)y_k\partial_{x_k}+\mathrm{i}\operatorname{Im}(\gamma_k)\left(x_k\partial_{x_k}-y_k\partial_{y_k}\right).
\end{align*}

We now substitute these transformed operators into the integral expression for $\partial_t W$. This yields a single, large expression which, for clarity, we break down into six terms corresponding to the different parts of the Lindblad dynamics. These terms will be analyzed individually in the subsequent steps. The full expression is:
\begin{align*}
\partial_t W(\bm{x},\bm{\xi},t)
&=\frac{1}{(\pi\varepsilon)^n}\int -\frac{\mathrm{i}}{\varepsilon}\nabla_{\bm{y}}\cdot\left[-\frac{\varepsilon^2}{2}\nabla_{\bm{x}}\right]\rho(\bm{x}+\bm{y},\bm{x}-\bm{y},t)\mathrm{e}^{-\frac{\mathrm{i}}{\varepsilon}2\bm{y}\cdot\bm{\xi}}\mathrm{d}\bm{y}\tag{1}\\
&+\frac{1}{(\pi\varepsilon)^n}\int -\frac{\mathrm{i}}{\varepsilon}\left[V(\bm{x}+\bm{y})-V(\bm{x}-\bm{y})\right]\rho(\bm{x}+\bm{y},\bm{x}-\bm{y},t)\mathrm{e}^{-\frac{\mathrm{i}}{\varepsilon}2\bm{y}\cdot\bm{\xi}}\mathrm{d}\bm{y}\tag{2}\\
&+\frac{\varepsilon}{2}\frac{1}{(\pi\varepsilon)^n}\int \left(\sum_{k=1}^{n}\alpha_k\cdot\left(-\frac{\mathrm{i}}{\varepsilon}2y_k\right)^2\right)\rho(\bm{x}+\bm{y},\bm{x}-\bm{y},t)\mathrm{e}^{-\frac{\mathrm{i}}{\varepsilon}2\bm{y}\cdot\bm{\xi}}\mathrm{d}\bm{y}\tag{3}\\
&+\frac{\varepsilon}{2}\frac{1}{(\pi\varepsilon)^n}\int \left(\sum_{k=1}^n\beta_k\partial_{xx}\rho(\bm{x}+\bm{y},\bm{x}-\bm{y},t)\right)\mathrm{e}^{-\frac{\mathrm{i}}{\varepsilon}2\bm{y}\cdot\bm{\xi}}\mathrm{d}\bm{y}\tag{4}\\
&+\frac{1}{(\pi\varepsilon)^n}\int \sum_{k=1}^{n} \left(-\operatorname{Im}(\gamma_k) +\mathrm{i}\left(2\operatorname{Re}(\gamma_k)y_k\partial_{x_k}+\mathrm{i}\operatorname{Im}(\gamma_k)\left(x_k\partial_{x_k}-y_k\partial_{y_k}\right)\right)\right)\rho(\bm{x}+\bm{y},\bm{x}-\bm{y},t)\mathrm{e}^{-\frac{\mathrm{i}}{\varepsilon}2\bm{y}\cdot\bm{\xi}}\mathrm{d}\bm{y}\tag{5}\\
&-\frac{1}{(\pi\varepsilon)^n}\int \sum_{k=1}^{n} \mu_k\left(1 +x_k\partial_{x_k}+y_k\partial_{y_k}\right)\rho(\bm{x}+\bm{y},\bm{x}-\bm{y},t)\mathrm{e}^{-\frac{\mathrm{i}}{\varepsilon}2\bm{y}\cdot\bm{\xi}}\mathrm{d}\bm{y}\tag{6}.
\end{align*}

We now simplify each of the six terms individually.

\paragraph{Term (1): The Kinetic Term}
This term arises from the kinetic energy part of the Hamiltonian. We apply integration by parts with respect to $\bm{y}$, which transfers the $\nabla_{\bm{y}}$ derivative onto the exponential factor:
\begin{align*}
(1)
&=\frac{1}{(\pi\varepsilon)^n}\int -\frac{\mathrm{i}}{\varepsilon}\left(-\frac{\mathrm{i}}{\varepsilon}2\bm{\xi}\right)\cdot\left[\frac{\varepsilon^2}{2}\nabla_{\bm{x}}\right]\rho(\bm{x}+\bm{y},\bm{x}-\bm{y},t)\mathrm{e}^{-\frac{\mathrm{i}}{\varepsilon}2\bm{y}\cdot\bm{\xi}}\mathrm{d}\bm{y}\\
&=-\bm{\xi}\cdot\nabla_{\bm{x}}W(\bm{x},\bm{\xi},t).	
\end{align*}

\paragraph{Term (2): The Potential Term}
This term corresponds to the potential $V(\bm{x})$. By using the Fourier transform:
\begin{align*}
\rho(\bm{x}+\bm{y},\bm{x}-\bm{y})=\int W(\bm{x},\bm{\xi})\mathrm{e}^{\frac{\mathrm{i}}{\varepsilon}2\bm{y}\cdot\bm{\xi}}\mathrm{d}\bm{\xi},	
\end{align*}
we can express the integral as the action of the pseudo-differential operator $\Theta[V]$ on $W$:
\begin{align*}
    (2) &= -\frac{\mathrm{i}}{\varepsilon}\frac{1}{(\pi\varepsilon)^n}\iint\left(V(\bm{x}+\bm{y})-V(\bm{x}-\bm{y})\right)W(\bm{x},\bm{\xi}',t)e^{\frac{2\mathrm{i}}{\varepsilon}\bm{y}\cdot(\bm{\xi}'-\bm{\xi})}\mathrm{d}\bm{\xi}'\mathrm{d}\bm{y} \\
    &=: -\Theta[V]W.
\end{align*}

\paragraph{Term (3) and (4): The Diffusion Terms}
Term (3), containing $\alpha_k$, corresponds to diffusion in momentum. We use the Fourier identity that multiplication by $y_k$ in the $\bm{y}$-space is equivalent to a derivative with respect to $\xi_k$ in the $\bm{\xi}$-space. Specifically, the factor $(2y_k)^2$ becomes $(i\varepsilon \partial_{\xi_k})^2$:
\begin{align*}
    (3) &= \frac{\varepsilon}{2}\sum_{k=1}^n\alpha_k \left( \frac{1}{(\pi\varepsilon)^n}\int \left(-\mathrm{i}\frac{2y_k}{\varepsilon}\right)^2 \rho(\bm{x}+\bm{y},\bm{x}-\bm{y},t)\mathrm{e}^{-\frac{\mathrm{i}}{\varepsilon}2\bm{y}\cdot\bm{\xi}}\mathrm{d}\bm{y}\right) \\
    &= \frac{\varepsilon}{2}\sum_{k=1}^n\alpha_k\partial_{\xi_k\xi_k}W(\bm{x},\bm{\xi},t).
\end{align*}
Term (4), containing $\beta_k$, corresponds to diffusion in position. Since the derivatives are with respect to $\bm{x}$, they can be moved outside the integral:
\begin{align*}
    (4) &= \frac{\varepsilon}{2}\sum_{k=1}^n\beta_k \partial_{x_k x_k} \left( \frac{1}{(\pi\varepsilon)^n}\int \rho(\bm{x}+\bm{y},\bm{x}-\bm{y},t)\mathrm{e}^{-\frac{\mathrm{i}}{\varepsilon}2\bm{y}\cdot\bm{\xi}}\mathrm{d}\bm{y} \right) \\
    &= \frac{\varepsilon}{2}\sum_{k=1}^n\beta_k\partial_{x_kx_k}W(\bm{x},\bm{\xi},t).	
\end{align*}

\paragraph{Term (5) and (6): The Dissipative and Lamb Shift Terms}
Finally, we transform the terms corresponding to dissipation and the Lamb shift. This involves applying similar Fourier identities and integration by parts:
\begin{align*}
    (5) &= -\varepsilon\sum_{k=1}^n\operatorname{Re}(\gamma_k)\partial_{x_k}\partial_{\xi_k}W - \sum_{k=1}^n\operatorname{Im}(\gamma_k)\left(2W+x_k\partial_{x_k}W+\xi_k\partial_{\xi_k}W\right), \\
    (6) &=\sum_{k=1}^n\mu_k\left(-x_k\partial{x_k}W+\xi_k\partial{\xi_k}W\right).
\end{align*}

Finally, we collect all the transformed terms (1) through (6). The left-hand side of the WFP equation is formed by rearranging the $\partial_t W$ term and the results from Term (1) and Term (2). The right-hand side, which constitutes the dissipative and diffusive operator $\mathcal{D}(W)$, is formed by summing the results from Terms (3) through (6). This procedure yields the final Wigner-Fokker-Planck equation:
\begin{equation*}
\begin{aligned}
\partial_t W(\bm{x},\bm{\xi},t)
+\bm{\xi}\cdot\nabla_{\bm{x}}W
+ \Theta[V]W
&=
\frac{\varepsilon}{2}\sum_{k=1}^n\left(\alpha_k\partial_{\xi_k\xi_k}W+\beta_k\partial_{x_kx_k}W- 2\operatorname{Re}(\gamma_k) \partial_{x_k}\partial_{\xi_k}W\right)\\
&\quad - \sum_{k=1}^n \left[ (\operatorname{Im}(\gamma_k) + \mu_k)\partial_{x_k}(x_kW) +(-\operatorname{Im}(\gamma_k)  +\mu_k)\partial_{\xi_k} (\xi_k W) \right].
\end{aligned}
\end{equation*}
This is precisely the WFP equation presented in the main text as \cref{eq:WFP_main}.

\section{Derivation of wavepacket dynamics for the Wigner Fokker-Planck equation}\label{appendix: wavepacket dynamics}
This appendix provides a detailed derivation of the equations of motion for a generic Gaussian wavepacket evolving under the Wigner-Fokker-Planck equation with a locally quadratic potential.

Our starting point is the WFP equation under the Taylor approximation for the potential, as given in \cref{eq:WFP_local_potential}. For a more compact representation, we first rewrite this equation. We define the phase space coordinate $\bm{y}$, its corresponding vector field $\bm{k}$, and the wavepacket center coordinate $\bm{z}$:
$$ \bm{y}=\begin{pmatrix} \bm{x}\\ \bm{\xi} \end{pmatrix}, \quad \boldsymbol{k}(\bm{y},t)= \begin{pmatrix}  \boldsymbol{\xi}\\  -\left(\nabla_{\boldsymbol{x}}V(\bm{q})+\nabla_{\boldsymbol{x}}^2V(\bm{q})(\boldsymbol{x}-\boldsymbol{q})\right) \end{pmatrix}, \quad \bm{z}(t)=\begin{pmatrix} \bm{q}(t)\\ \bm{p}(t) \end{pmatrix}. $$
We also introduce the auxiliary matrices $\Gamma_1, \Gamma_2, \bm{B}, \widetilde{\mathcal{M}}$, which were previously defined in Section \ref{sec:wavepacket_dynamics}.
Using this notation, the WFP equation under the local potential approximation can be expressed in the following compact form:
\begin{align*}
\partial_t W+\bm{k}^{T}\nabla_{\bm{y}}W
&=\bm{y}^T\left(\Gamma_2+\widetilde{\mathcal{M}}\right)\nabla_{\bm{y}}W+\operatorname{Tr}\left(\Gamma_2\right)W\\
&+\frac{\varepsilon}{2}\sum_{k=1}^n\left(\alpha_k\partial_{\xi_k\xi_k}W+\beta_k\partial_{x_kx_k}W- 2\operatorname{Re}(\gamma_k) \partial_{x_k}\partial_{\xi_k}W\right).
\end{align*}

We now substitute the Gaussian ansatz for a generic wavepacket into the compact WFP equation. The ansatz is given by:
\begin{equation}
     W(\bm{x},\bm{\xi},t) = A(t)\exp\left(-\frac{1}{2\varepsilon}T(\bm{x},\bm{\xi},t)\right),
\end{equation}
where for simplicity we have written $T(t)$ for $T(\bm{x}, \bm{\xi}, \bm{q}(t), \bm{p}(t); \bm{\Sigma}(t))$. The time derivative of $W$ is
$$ \partial_t W = \left( \frac{1}{A}\frac{dA}{dt} - \frac{1}{2\varepsilon}\frac{dT}{dt} \right)W. $$
The second-order derivatives required for the diffusive and dissipative terms are found by direct calculation:
\begin{align*}
    \partial_{x_k x_k}W &= W\left[ \frac{1}{4\varepsilon^2}(\partial_{x_k}T)^2 - \frac{1}{2\varepsilon}\partial_{x_k x_k}T \right], \\
    \partial_{\xi_k \xi_k}W &= W\left[ \frac{1}{4\varepsilon^2}(\partial_{\xi_k}T)^2 - \frac{1}{2\varepsilon}\partial_{\xi_k \xi_k}T \right], \\
    \partial_{x_k \xi_k}W &= W\left[ \frac{1}{4\varepsilon^2}(\partial_{x_k}T)(\partial_{\xi_k}T) - \frac{1}{2\varepsilon}\partial_{x_k \xi_k}T \right].
\end{align*}
Substituting all these expressions back into the compact WFP equation and dividing by $W$ yields the following scalar equation for the parameters:
\begin{align*}
&\frac{1}{A}\frac{\mathrm{d}A}{\mathrm{d}t}+\left(-\frac{1}{2\varepsilon}\right)\frac{\mathrm{d}T}{\mathrm{d}t}+\left(-\frac{1}{2\varepsilon}\right)\bm{k}^T\nabla_{\bm{y}}T\\
=&\left(-\frac{1}{2\varepsilon}\right)\bm{y}^T\left(\Gamma_2+\widetilde{\mathcal{M}}\right)\nabla_{\bm{y}}T+\operatorname{Tr}\left(\Gamma_2\right)+\frac{1}{8\varepsilon}\left(\nabla_{\bm{y}}T\right)^T \left(B-\Gamma_1\right)	\nabla_{\bm{y}}T\\
-&\frac{1}{4}\sum_{k=1}^n\left(\beta_k\partial_{x_kx_k}T+\alpha_k\partial_{\xi_k\xi_k}T\right)+\frac{1}{2}\sum_{k=1}^n\operatorname{Re}\left(\gamma_k\right)\partial_{x_k\xi_k}T.
\end{align*}

We now perform an asymptotic analysis by collecting terms of the same order in $\varepsilon$. The leading order terms, $O(\varepsilon^{-1})$, yield an equation for the quadratic form $T$:
\begin{align*}
    O(\varepsilon^{-1}):& \quad\frac{\partial T}{\partial t}+\left(\nabla_{\bm{z}}T\right)^T\frac{\mathrm{d}\bm{z}}{\mathrm{d}t}+\bm{k}^T\nabla_{\bm{y}}T=\bm{y}^T\left(\Gamma_2+\widetilde{\mathcal{M}}\right)\nabla_{\bm{y}}T-\frac{1}{4}\left(\nabla_{\bm{y}}T\right)^T \left(\bm{B}-\Gamma_1\right)\nabla_{\bm{y}}T.
\end{align*}
The next-to-leading order terms, $O(1)$, provide the equation of motion for the amplitude $A$:
\begin{align*}
    O(1):& \quad \frac{\mathrm{d}A}{\mathrm{d}t}=\operatorname{Tr}\left(\Gamma_2-\frac{1}{2}\left(\bm{B}-\Gamma_1\right)\bm{G}\right)A.
\end{align*}
This is the final ODE for the amplitude, \cref{eq:A_dynamics_open}.

The ODEs for the center $\bm{z}$ and the inverse covariance matrix $\bm{G}$ are extracted by taking successive derivatives of the $O(\varepsilon^{-1})$ equation and then evaluating them at the wavepacket center $\bm{y}=\bm{z}$.

First, differentiating the $O(\varepsilon^{-1})$ equation once with respect to $\bm{y}$ gives
\begin{align}\label{eq: first differential with y}
&\frac{\partial\left(\nabla_{\bm{y}}T\right)}{\partial t}
+\nabla_{\bm{y}}\left(\nabla_{\bm{z}}T\right)^T\frac{\mathrm{d}\bm{z}}{\mathrm{d}t}
+\nabla_{\bm{y}}\left(\nabla_{\bm{y}}T\right)^T\bm{k}
+\left(\nabla_{\bm{y}}^T\bm{k}\right)^T\nabla_{\bm{y}}T\\
=&\nabla_{\bm{y}}\left(\bm{y}^T\right)\left(\Gamma_2+\widetilde{\mathcal{M}}\right)\nabla_{\bm{y}}T+\nabla_{\bm{y}}\left(\nabla_{\bm{y}}T\right)^T\left(\Gamma_2+\widetilde{\mathcal{M}}\right)\bm{y}
-\frac{1}{2}\nabla_{\bm{y}}\left(\nabla_{\bm{y}}T\right)^T\left(B-\Gamma_1\right)\nabla_{\bm{y}}T.
\end{align}
We then evaluate this expression at $\bm{y}=\bm{z}$. Using the identities $\nabla_{\bm{y}}T|_{\bm{y}=\bm{z}}=0$, $\nabla_{\bm{y}}(\nabla_{\bm{z}}T)^T|_{\bm{y}=\bm{z}}=-2\bm{G}$, and $\nabla_{\bm{y}}(\nabla_{\bm{y}}T)^T|_{\bm{y}=\bm{z}}=2\bm{G}$, we obtain the equation for the wavepacket center:
\begin{align*}
    \frac{\mathrm{d}\bm{z}}{\mathrm{d}t} = -\left(\Gamma_2+\widetilde{\mathcal{M}}\right)\bm{z}+\bm{k}|_{\bm{y}=\bm{z}}.
\end{align*}
Substituting the explicit forms of $\bm{k}$, $\Gamma_2$, and $\widetilde{\mathcal{M}}$ yields the component-wise equations of motion, \cref{eq:q_dynamics_open} and \cref{eq:p_dynamics_open}.

Next, we differentiate \cref{eq: first differential with y} a second time with respect to $\bm{y}$:
\begin{align*}
&\frac{\partial \nabla_{\bm{y}}^T\left(\nabla_{\bm{y}}T\right)}{\partial t}
+\nabla_{\bm{y}}\left(\nabla_{\bm{y}}T\right)^T\nabla_{\bm{y}}^T\bm{k}
+(\nabla_{\bm{y}}^T\bm{k})^T\nabla_{\bm{y}}^T\left(\nabla_{\bm{y}}T\right)\\
=&\nabla_{\bm{y}}\left(\bm{y}^T\right)\left(\Gamma_2+\widetilde{\mathcal{M}}\right)\nabla_{\bm{y}}^T\left(\nabla_{\bm{y}}T\right)
+\nabla_{\bm{y}}\left(\nabla_{\bm{y}}T\right)^T\left(\Gamma_2+\widetilde{\mathcal{M}}\right)\nabla_{\bm{y}}^T\bm{y}
-\frac{1}{2}\nabla_{\bm{y}}\left(\nabla_{\bm{y}}T\right)^T\left(\bm{B}-\Gamma_1\right)\nabla_{\bm{y}}^T\left(\nabla_{\bm{y}}T\right).	
\end{align*}
Again, evaluating at $\bm{y}=\bm{z}$ and using the identity $\nabla_{\bm{y}}^T\bm{k}|_{\bm{y}=\bm{z}} = \bm{C}(t)$, we arrive at the equation of motion for the inverse covariance matrix, \cref{eq:G_dynamics_open}:
\begin{align*}
    \frac{\mathrm{d} \bm{G}}{\mathrm{d} t}=\left(\Gamma_2+\widetilde{\mathcal{M}}-\bm{C}^T\right)\bm{G}+\bm{G}\left(\Gamma_2+\widetilde{\mathcal{M}}-\bm{C}\right)+\bm{G}\left(\Gamma_1-\bm{B}\right)\bm{G}.
\end{align*}
This completes the derivation of the wavepacket dynamics presented in Section 3.

\section*{Acknowledgments}
The work of Z. Zhou was partially supported by the National Key R\&D Program of China (Project No. 2021YFA1001200), and the National Natural Science Foundation of China (Grant No. 12171013). We thank Westlake University HPC Center for computation support. 

\bibliographystyle{elsarticle-num}
\bibliography{FGS_lindblad}

\end{document}